\title[On the full Kostant-Toda hierarchy and its $\ell$-banded reductions]
{On the full Kostant-Toda hierarchy and its $\ell$-banded reductions for the Lie algebras of type $A, B$ and $G$}
\author{Yuji Kodama$^1$ and Yuancheng Xie$^2$} 
\date{\today}
\address{
1. College of Mathematics and Systems Sciences, Shandong University of Science and Technology, Qingdao 266590, China, and\\ Department of Mathematics, Ohio State University, 
Columbus, OH 43210, USA
}
\email{kodama@math.ohio-state.edu}
\address{2. Beijing International Center for Mathematical Research, Peking University,
Beijing 100871, China}
\email{xieyuancheng@bicmr.pku.edu.cn}
\def\tbox(#1,#2)#3{
\x=#1 \y=#2 
\multiply\x by 12 
\multiply\y by 12 
\z=\x \t=\y
\advance\z by 12 
\advance\t by 12 
\psline(\x,\y)(\x,\t)(\z,\t)(\z,\y)(\x,\y)
\advance\x by 6
\advance\y by 6 
\rput(\x,\y){{\bf #3}}}
\def\proof{\par{\it Proof}. \ignorespaces}
\def\endproof{{\ \vbox{\hrule\hbox{%
     \vrule height1.3ex\hskip0.8ex\vrule}\hrule }}\par}
\numberwithin{equation}{section}
\let\trueint=\int
\let\truesum=\sum
\def\int{\mathop{\textstyle\trueint}\limits}
\def\sum{\mathop{\textstyle\truesum}\limits}
\def\HH{\mathcal{H}}
\def\G{{\mathcal G}}
\def\Sym{{\mathfrak S}}
\def\t{\mathbf{t}}
\def\0{\mathbf{0}}
\def\ve{\varepsilon}
\def\wb{\hskip0.4cm\circle{8}}
\def\bb{\hskip0.4cm\circle*{8}}
\def\height{\text{ht}}
\def\g{\mathfrak{g}}
\def\Int{\mathbb{Z}}
\def\spl{\mathfrak{sl}}
\def\edge{\ar@{-}}
\def\dedge{\ar@{.}}
\newtheorem{theorem}{Theorem}[section]
\newtheorem{proposition}[theorem]{Proposition}
\newtheorem{lemma}[theorem]{Lemma}
\newtheorem{corollary}[theorem]{Corollary}
\newtheorem{remark}[theorem]{Remark}
\newcommand{\Z}{\mathbb Z}
\newcommand{\C}{\mathbb C}
\newcommand{\h}{\mathfrak{h}}
\newcommand{\vep}{\varepsilon}
\DeclareMathOperator{\ad}{ad}
\renewcommand*\env@matrix[1][*\c@MaxMatrixCols c]{%
  \hskip -\arraycolsep
  \let\@ifnextchar\new@ifnextchar
  \array{#1}}
\newcommand{\thmrefer}[1]{\renewcommand\thetheorem
  {\protect\ref{#1}}\addtocounter{theorem}{-1}}
\begin{document}

\begin{abstract}
This paper concerns the solutions of the full Kostant-Toda (f-KT) hierarchy in the Hessenberg form and their reductions to the $\ell$-banded Kostant-Toda ($\ell$-KT) hierarchy.  We also study the f-KT hierarchy and the corresponding $\ell$-KT hierarchy on simple Lie algebras of type $A, B$ and $G$ based on root space reductions with proper Chevalley systems. Explicit formulas of the polynomial solutions for the $\tau$-functions are also given in terms of the Schur functions and Schur's $Q$-functions.
\end{abstract}

\maketitle

\setcounter{tocdepth}{1}
\tableofcontents


\section{Introduction}
We study reductions and rational solutions of the full Kostant-Toda (f-KT) hierarchy in the Hessenberg form in this paper. 
In the last fifty years, there have been several publications (c.f. \cite{KS08, Tomei2013}) which studied the classical tri-diagonal Toda lattice and the full Kostant-Toda lattice on the semisimple Lie algebra $\mathfrak{g}=\mathfrak{n}_-\oplus\mathfrak{b}_+$ in the standard notation (c.f. \cite{KS08}). These sytems can be viewed as Toda flows on the minimal and maximal symplectic leaves inside the dual space $\mathfrak{b}_+^*$, respectively (c.f. \cite{GS}). It is natural to investigate the Toda flows on all the other symplectic leaves,  however intermediate Toda lattices which sit between the tri-diagonal Toda lattice and the full Kostant-Toda lattice were rarely mentioned in the literature (c.f. \cite{Nanda1982, Damianou-Sabourin-Vanhaecke2015}). One of the main goals in this paper is to give a clean characterization for the special intermediate Toda lattices possessing the banded structures in the Hessenberg form based on a root space decomposition \cite{KY}. The geometry of the Hessenberg matrices however naturally leads to a further motivation for the current project which we now explain. 

Let $\mathsf{X}$ be a linear map $\mathsf{X}: \mathbb{C}^M \to \mathbb{C}^M$ and $h: \{1, 2, \dots, M\} \to \{1, 2, \dots, M\}$ a non-decreasing function satisfying $h(i) \ge i$ for $1 \le i \le M$. The \emph{Hessenberg variety} associated to $X$ and $h$ is defined as
\[\text{Hess}(\mathsf{X}, h) := \{V_{\bullet} \in Fl(\mathbb{C}^M) \ \vert\ \mathsf{X}V_i \subseteq V_{h(i)} \text{ for all } 1 \le i \le M\},\]
where $Fl(\mathbb{C}^M)$ is the complete flag variety of $\mathbb{C}^M$ consisting of sequences $V_{\bullet} = (V_1 \subset V_2 \subset \cdots \subset V_M = \mathbb{C}^M)$ of the linear subspaces of $\mathbb{C}^M$ with dim $V_i = i$ for $1 \le i \le M$. The following two examples of Hessenberg varieties are especially relevant here. Consider the linear map on $\mathbb{C}^M$ associated with the regular nilpotent matrix
\begin{equation}\label{eq:pnil}
\mathsf{N} = \begin{pmatrix}
0 & 1 & & & \\
& 0 & 1 & & \\
& & \ddots & \ddots & \\
& & & 0 & 1\\
& & & & 0
\end{pmatrix}.
\end{equation}
For $h = (M, M, \dots, M)$, Hess$(\mathsf{N}, h)$ is the flag variety $Fl(\mathbb{C}^M)$ itself, and for $h = (2, 3, 4, \dots, M, M)$, Hess$(\mathsf{N}, h)$ is the so-called \emph{Peterson variety}. These two varieties are associated with the maximal and minimal Hessenberg functions respectively (c.f. \cite{AH} and the reference therein for more information on the Hessenberg varieties). 

Putting the stories of f-KT hierarchy and Hessenberg varieties side by side, it is hard to believe that these beautiful things are not related in a deep and essential way. Indeed, another way to define the Peterson variety is as follows (c.f. \cite{B17}). Let $SL_M(\mathbb{C})^{\mathsf{N}}$ be the centralizer of $\mathsf{N}$ in $SL_M(\mathbb{C})$ where $\mathsf{N}$ is the nilpotent matrix defined in \eqref{eq:pnil}, then Peterson variety is the following $M-1$ dimensional subvariety of the flag variety
\[\overline{SL_M(\mathbb{C})^{\mathsf{N}} \cdot \mathfrak{b}_-} \subset SL_M(\mathbb{C}) \slash \mathcal{B}^-,\]
where $\mathcal{B}^-$ is the Borel subgroup of lower triangular matrices and $\mathfrak{b}_-=\text{Lie}(\mathcal{B}^-)$.
It may be easy to recognize from this definition that Peterson variety consists of exactly those points in the flag variety which parameterize rational solutions of the tri-diagonal Toda hierarchy (c.f. \cite{FH} and below). This correspondence can actually be used to study geometry of the Peterson variety. As an example we note that in \cite{FH} (Theorem 2.5) Flaschka and Haine essentially proved in the Toda lattice setting that the Peterson variety intersects with Bruhat cells corresponding to the longest Weyl group elements generated by the reflections with respect to the hyperplanes orthogonal to a subset of simple roots, and this fact was only proved later in the Hessenberg variety setting by Harada and Tymoczko in \cite{HT} (Proposition 5.8).

As we also proved in \cite{Xie:22b} that there is a one-to-one correspondence between rational solutions of the f-KT hierarchy and points in the corresponding flag variety,  it is natural to anticipate a more precise correspondence between sub-hierarchies of the f-KT hierarchy and sub-Hessenberg varieties of the flag variety and to use results from one side to gain insights on the other. Such initial attempts had already been made in \cite{AC, AB}, and our humble goal in this paper is to push the story on the f-KT hierarchy side a little bit by studying solutions of the banded f-KT hierarchy while leaving the investigation of the precise correspondence in a future publication.

The f-KT hierarchy can be defined on any semisimple Lie algebra (c.f. \cite{GS, KO:22, Xie:22b}) and an immediate reason for our restriction to the Hessenberg form Lax matrix (which includes Lie algebras of type $A, B, C, G$ etc.)
\[L = \sum_{i=1}^{M-1}E_{i,i+1}+\sum_{1\le j\le i\le M}a_{i,j}(\t)E_{i,j}\]
is that all the matrix elements $a_{i, j}(\t)$ can be neatly expressed in terms of $\tau$-functions where $\t = (t_1, t_2, \dots)$. Let $\Delta_N(\t)$ denote the $N$-th leading principal minor of the matrix $\exp(\Theta_{L(\0)}(\t))$,
\begin{equation}
\Delta_N(\t):=[\exp(\Theta_{L(\0)}(\t)))]_N\qquad\text{for}\quad N=1,2,\ldots,M,
\end{equation}
where $\Theta_{L(\0)}(\t) := \sum_{n = 1}^{\infty}(L(\0))^nt_n$. Then we have the following proposition (Proposition \ref{prop:aij}):

\begin{theorem}\label{prop:aijintro}
The elements $a_{i,j}(\t)$ for the f-KT hierarchy can be expressed as
\begin{equation}\label{aijintro}
a_{i,j}(\t)=\frac{p_{i-j+1}(\tilde D)\Delta_j(\t)\circ \Delta_{i-1}(\t)}{\Delta_j(\t)\Delta_{i-1}(\t)} \qquad\text{for}\quad 1\le j\le i\le M,
\end{equation}
where $\tilde D$ is defined by
\[
\tilde D:=\left(D_1,\,\frac{1}{2}D_2,\,\frac{1}{3}D_3,\ldots\,\right)\qquad\text{with}\qquad 
D_nf\circ g=\left(\frac{\partial}{\partial t_n}-\frac{\partial}{\partial t'_n}\right)f(\t)g(\t')\Big|_{\t=\t'}.
\]
\end{theorem}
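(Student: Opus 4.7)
The plan is to exploit the Birkhoff/LU factorization that underlies the integration of the f-KT flow. Setting $X(\t):=\exp(\Theta_{L(\0)}(\t))$ and writing $X(\t)=n_-(\t)^{-1}b_+(\t)$ with $n_-$ lower unipotent and $b_+$ upper triangular gives $L(\t)=n_-(\t)L(\0)n_-(\t)^{-1}$, so every entry $a_{i,j}(\t)$ becomes an explicit rational expression in the entries of $n_-^{\pm 1}$. Because $n_-$ is lower unipotent, the principal minors of $X(\t)$ and $b_+(\t)$ coincide, yielding the fundamental identification $\Delta_N(\t)=\prod_{k=1}^N (b_+)_{kk}(\t)$ that realizes $\Delta_N$ as a $\tau$-function.

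My first concrete step is to derive, via Cramer-style manipulation of $X=n_-^{-1}b_+$, quasi-determinantal formulas expressing the sub-diagonal entries of $n_-(\t)$ and $n_-(\t)^{-1}$ as ratios of (generally non-principal) minors of $X(\t)$ over principal minors $\Delta_\bullet(\t)$. Substituting these into $L(\t)=n_-L(\0)n_-^{-1}$ and simplifying via the Desnanot--Jacobi identity, the Lax entry $a_{i,j}(\t)$ reduces to a ratio of the form $M_{i,j}(\t)/(\Delta_j(\t)\Delta_{i-1}(\t))$, where $M_{i,j}$ is a specific bilinear combination of principal and ``once-shifted'' minors of $X(\t)$.

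The crux is then to recognize $M_{i,j}(\t)$ as a Hirota bilinear polynomial in the $\tau$-functions. I would use the intertwining identity
\[\exp\!\big(\Theta_{L(\0)}(\t+[z^{-1}])\big)=(I-z^{-1}L(\0))^{-1}\exp\!\big(\Theta_{L(\0)}(\t)\big),\qquad [z^{-1}]:=\Big(\tfrac{z^{-1}}{1},\tfrac{z^{-2}}{2},\tfrac{z^{-3}}{3},\ldots\Big),\]
which upon expansion in $z^{-1}$ realizes the coefficients of $\Delta_N(\t+[z^{-1}])$ as precisely the shifted minors of $X(\t)$ appearing in $M_{i,j}$. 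Combining this with the Schur--Hirota generating function
\[\exp\!\Big(\sum_{n\ge 1}\tfrac{D_n}{n}z^{-n}\Big)\,\Delta_j\circ\Delta_{i-1}=\sum_{k\ge 0}z^{-k}\,p_k(\tilde D)\,\Delta_j\circ\Delta_{i-1},\]
and matching the coefficient of $z^{-(i-j+1)}$ with $M_{i,j}$ produces \eqref{aijintro} exactly. The main obstacle is the bookkeeping in the middle step: the quasi-determinantal formula for $n_-^{\pm 1}$ involves non-principal minors that do not immediately look like Hirota derivatives of $\Delta_j$ and $\Delta_{i-1}$, and one must identify which row/column shift of the minors of $X(\t)$ corresponds to each $p_k(\tilde D)$, and in particular verify that the banded index $i-j+1$ emerges correctly from the distance between the $j$-th and $(i-1)$-st principal blocks. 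As a sanity check, for $i=j$ the formula collapses to $a_{i,i}=\partial_{t_1}\log(\Delta_i/\Delta_{i-1})$, the classical Toda diagonal expression, which gives confidence that the indexing will come out right in general.
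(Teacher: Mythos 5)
Your strategy is sound and the ingredients are all correct, but it is a genuinely different route from the paper's. The paper (Appendix A) does not work with the LU factors of $\exp(\Theta_{L(\0)}(\t))$ at all: it starts from the iso-spectral deformation of a general (symmetric-type) matrix $\tilde L$ with $\tilde P=\tfrac12((\tilde L)_{>0}-(\tilde L)_{<0})$, introduces eigenmatrices $\tilde\Phi,\tilde\Psi$ with $\tilde\Phi\tilde\Psi=I$, writes the eigenfunctions as bordered Wronskian determinants over the $\tau$-functions, applies the Miwa shift $f(\t)-\kappa^{-1}\partial_1 f(\t)=f(\t-[\kappa^{-1}])$ inside those determinants to get $\tilde\phi_k\propto\sum_j\kappa^{-j}p_j(-\tilde\partial)\tau_{k-1}$, pairs with $\tilde\psi_j$ to produce $p_\bullet(\tilde\partial)\tau_j$, assembles via $p_n(\tilde D)\tau_\alpha\circ\tau_\beta=\sum_{k+l=n}p_k(\tilde\partial)\tau_\alpha\cdot p_l(-\tilde\partial)\tau_\beta$, and finally gauges $\tilde L$ to Hessenberg form by a diagonal conjugation to remove the square roots. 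Your plan is instead the Adler--van Moerbeke-style derivation that the paper cites as the alternative: LU factorization plus the Miwa-shift intertwining of the matrix exponential plus the same Hirota generating function. The one step you flag as an obstacle does close, and more cleanly than via quasi-determinants and Desnanot--Jacobi: since leading principal minors are multiplicative under a lower-unipotent left factor and an upper-triangular right factor, $[(I-z^{-1}L^0)^{\mp1}\exp\Theta]_N=[(I-z^{-1}L(\t))^{\mp1}]_N\,\Delta_N(\t)$, so after dividing by $\Delta_j\Delta_{i-1}$ the identity to prove reduces to the purely linear-algebraic statement that the coefficient of $z^{-(i-j+1)}$ in $[(I-z^{-1}L)^{-1}]_j\cdot[(I-z^{-1}L)]_{i-1}$ equals $L_{ij}$ for a Hessenberg matrix $L$ — no non-principal minors ever appear. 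What each approach buys: yours stays entirely within the Hessenberg normalization and the $\Delta_N$ of Proposition \ref{prop:LUsolution2}, at the cost of that resolvent-minor identity; the paper's eigenfunction route trades this for explicit wave functions (which it reuses elsewhere) but must pass through the non-normalized Lax form and a final diagonal gauge.
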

In particular, we have
\begin{equation}\label{aiiintro}
a_{N,N}(\t)=\frac{\partial}{\partial t_1}\ln\frac{\Delta_N(\t)}{\Delta_{N-1}(\t)}, \qquad 1 \le N \le M.
\end{equation}
For f-KT hierarchy defined on semisimple Lie algebras, the matrix elements can actually all be expressed as polynomials in $a_{N, N}$'s and their derivatives, and thus in terms of $\tau$-functions (Proposition \ref{prop:sol_b}), but in general we do not have a clean formula like \eqref{aijintro}.
The formula \eqref{aijintro} was derived in \cite{AvM} for the discrete KP hierarchy in terms of the vertex operator formalism and we give an elementary proof in Appendix A for the f-KT hierarchy based on the iso-spectral deformation method \cite{KY}. 

The $\ell$-banded Kostant-Toda ($\ell$-KT) hierarchy is the f-KT hierarchy constrained by the condition that the subdiagonals of the Lax matrix $L$ lower than the $\ell$-th diagonal are all zero. This condition is consistent with the Toda flows in the sense that if the initial condition $L(\0)$ is $\ell$-banded, then it stays $\ell$-banded for all $\t \ne \0$ under the Toda flows. Note that for the $\ell$-KT hierarchy the elements on the $\ell$-th diagonal can be directly integrated (Proposition \ref{prop:l-band}):
\begin{proposition}
The elements $a_{i,j}$ in the $\ell$-th diagonal of the $\ell$-KT hierarchy are given by
\begin{equation}\label{eq:aijl-band}
a_{k+\ell, k}(\t)=a_{k+\ell,k}^0\frac{\Delta_{k+\ell}(\t)\Delta_{k-1}(\t)}{\Delta_{k+\ell-1}(\t)\Delta_k(\t)}\qquad\text{for}\quad 1\le k\le M-\ell,
\end{equation}
where $a_{k+\ell,k}^0$ is a constant.
\end{proposition}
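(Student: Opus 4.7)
The plan is to show that both sides of \eqref{eq:aijl-band} have identical logarithmic derivative with respect to every flow time $t_n$, and then to fix the overall multiplicative constant by comparing at $\t=\0$. The key structural input is that $(k+\ell,k)$ sits at the bottom-left corner of the banded support of $L$, which collapses the Lax equation at that entry into a clean two-term expression.

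Concretely, write the $n$-th f-KT flow as $\partial L/\partial t_n = [(L^n)_{\geq 0}, L]$ and compute its $(k+\ell,k)$-entry. The $\ell$-banded condition restricts $L_{r,k}\neq 0$ to $r\in\{k-1,k,\dots,k+\ell\}$ and $L_{k+\ell,r}\neq 0$ to $r\in\{k,k+1,\dots,k+\ell+1\}$; combined with the upper-triangular support of $(L^n)_{\geq 0}$, the two sums in the commutator each reduce to a single surviving value ($r=k+\ell$ in the first, $r=k$ in the second), yielding
$$\partial_{t_n}\ln a_{k+\ell,k} \;=\; (L^n)_{k+\ell,k+\ell} - (L^n)_{k,k}.$$
On the other hand, the Birkhoff factorization $\exp(\Theta_{L(\0)}(\t)) = N(\t)B(\t)$ with $N$ lower-unipotent and $B$ upper-triangular gives $\Delta_N/\Delta_{N-1}=B_{NN}$ and, by the usual dressing computation extending \eqref{aiiintro} to higher flows, $(L^n)_{NN}=\partial_{t_n}\ln(\Delta_N/\Delta_{N-1})$. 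Substituting this in,
$$\partial_{t_n}\ln a_{k+\ell,k}(\t) \;=\; \partial_{t_n}\ln\frac{\Delta_{k+\ell}(\t)\Delta_{k-1}(\t)}{\Delta_{k+\ell-1}(\t)\Delta_k(\t)} \qquad\text{for every } n\geq 1,$$
so the ratio of the two sides of \eqref{eq:aijl-band} is independent of $\t$. Evaluating at $\t=\0$, where $\Delta_N(\0)=\det(I_N)=1$, identifies the constant as $a_{k+\ell,k}(\0)=a_{k+\ell,k}^{0}$, completing the proof.

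The main obstacle is the first reduction: without the $\ell$-banded assumption the commutator at $(k+\ell,k)$ picks up the extra terms $a_{k+\ell+1,k}-a_{k+\ell,k-1}$ (for $n=1$) and analogous farther-out entries (for $n\geq 2$), and the clean proportionality $\partial_{t_n}a_{k+\ell,k}\propto a_{k+\ell,k}$ breaks down. The content of the proposition is precisely that the banded reduction makes $a_{k+\ell,k}$ integrable in $\t$, with its evolution completely captured by the diagonal $\tau$-function ratios.
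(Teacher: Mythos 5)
Your proof is correct, and its core mechanism coincides with the paper's: the $\ell$-banded support collapses the Lax equation at the corner entry $(k+\ell,k)$ to a two-term expression proportional to $a_{k+\ell,k}$, which is then integrated using the $\tau$-function formula for the diagonal entries. The difference is one of scope. The paper runs the argument only for the $t_1$-flow: it specializes \eqref{eq:f-KTcomp} to $\frac{da_{k+\ell,k}}{dt}=(a_{k+\ell,k+\ell}-a_{k,k})\,a_{k+\ell,k}$ and integrates via \eqref{aii}, which, taken literally, shows only that $a^0_{k+\ell,k}$ is independent of $t_1$ and leaves open that it could depend on the higher times. Your version closes that gap: the support analysis of the $(k+\ell,k)$-entry of $[(L^n)_{\ge0},L]$ for every $n$ (where indeed only $r=k+\ell$ survives in the first sum and $r=k$ in the second), combined with the hierarchy analogue $(L^n)_{NN}=\partial_{t_n}\ln(\Delta_N/\Delta_{N-1})$ --- which follows from Proposition \ref{prop:LUsolution2} exactly as \eqref{aii} follows from Proposition \ref{prop:LUsolution} --- shows the ratio of the two sides of \eqref{eq:aijl-band} is annihilated by every $\partial_{t_n}$, so the normalization $\Delta_N(\0)=1$ genuinely identifies a constant. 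Both arguments implicitly use that the $\ell$-banded constraint is preserved by all the flows (as asserted in the introduction); granting that, your proof is the more complete one.
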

Comparing \eqref{aijintro} with \eqref{eq:aijl-band}, we obtain the first characterization of the $\ell$-KT hierarchy in terms of the bilinear relations,
\begin{equation}\label{l-bandintro}
p_{\ell+1}(\tilde D)\Delta_k\circ \Delta_{k+\ell-1}(\t)=a_{k+\ell,k}^0\Delta_{k+\ell}(\t)\Delta_{k-1}(\t)\qquad\text{for}\quad 1\le k\le M-\ell.
\end{equation}

When $\ell=1$, we recover the bilinear equation for the original tridiagonal Toda lattice ($1$-KT lattice),
\[
p_2(\tilde D)\Delta_k\circ\Delta_k=\frac{1}{2}D_1^2\Delta_k\circ\Delta_k=\Delta_k\partial_1^2\Delta_k-(\partial_1\Delta_k)^2=a_{k+1,k}^0\Delta_{k+1}\Delta_{k-1}.
\]

Our second characterization of the $\ell$-KT hierarchy is in terms of the singular structure of their solutions, that is which Bruhat cells the complex solutions might hit. Note that when the Toda flows become singular and hit the boundary of a Schubert cell at time $\t_*$, then we have
\begin{equation}\label{eq:nwbintro}
\exp(\Theta_{L(0)}(\t_*))=n_*\dot w_*b_*\qquad\text{for some}\quad w_*\in \frak{S}_M,
\end{equation}
where $n_*\in \mathcal{N}^-$, $b_*\in \mathcal{B}^+$ and $\dot{w}_* \in GL_M$ is a representative of $w_* \in \frak{S}_M$, the symmetric group of order $M$. Set $\t\to \t+\t_*$, we have the following expansion of $\tau$-functions near $\t_*$ (Theorem \ref{thm:Schurexpand}):
\begin{theorem}\label{thm:Schurexpandintro}
The $\tau$-function $\Delta_N(\t; w_*)$ has the  Schur expansion,
\[
\Delta_N(\t; w_*)=S_{\lambda}(\t)+\sum_{\mu\supset\lambda}c_{\mu}S_\mu(\t),
\]
where $S_\lambda(t)$ is the Schur polynomial associated with the Young diagram $\lambda$, which
is determined by $w_*$ as follows: Let $\{i_1,i_2,\ldots,i_N\}$ be defined by
\[
\{i_1,i_2,\ldots,i_N\}=\{w_*(1),w_*(2),\ldots, w_*(N)\},
\]
with $i_1<i_2<\cdots<i_N$. Then the Young diagram $\lambda=(\lambda_1,\lambda_2,\ldots,\lambda_N)$ is given by
\[
\lambda_k=i_k-k\qquad\text{for}\qquad 1\le k\le N.
\]
\end{theorem}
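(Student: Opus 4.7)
The plan is to translate the Bruhat factorization \eqref{eq:nwbintro} into the Plücker embedding and then recognize Schur polynomials via Jacobi--Trudi. Setting $g(\t):=\exp(\Theta_{L(\0)}(\t))$, the shifted $\tau$-function is the $[N]$-$[N]$ leading principal minor
\[
\Delta_N(\t;w_*)\;=\;\pi_{[N],[N]}\bigl(g(\t)\,n_*\,\dot w_*\,b_*\bigr).
\]
The factor $b_*\in\mathcal{B}^+$ only rescales this minor by $\prod_{k=1}^N(b_*)_{kk}$, which I absorb into the normalization. The permutation $\dot w_*$ permutes columns so that $\pi_{[N],[N]}(A\dot w_*)=\pm\,\pi_{[N],I}(A)$ with $I=\{i_1<\cdots<i_N\}$ equal to $\{w_*(1),\ldots,w_*(N)\}$ sorted. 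Right multiplication by the lower-triangular unipotent $n_*$ then produces, by multilinearity of the determinant together with the standard majorization argument that $j_l\ge w_*(l)$ for each $l$ forces $j'_k\ge i_k$ after sorting,
\[
\Delta_N(\t;w_*)\;=\;\pi_{[N],I}\bigl(g(\t)\bigr)\;+\;\sum_{J>I}d_J\,\pi_{[N],J}\bigl(g(\t)\bigr),
\]
where $J=\{j_1<\cdots<j_N\}$ ranges over $N$-subsets with $j_k\ge i_k$ for all $k$ (equivalently $\lambda_J\supsetneq\lambda$ as Young diagrams), and each $d_J$ is a polynomial in the entries of $n_*$.

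Second, I expand $g(\t)=\sum_{n\ge 0}p_n(\t)L^n$. Because $L$ is Hessenberg with $1$'s on the superdiagonal and $0$'s strictly above, iterating gives $(L^n)_{k,j}=\delta_{k,j-n}$ modulo corrections from the strictly lower triangular part of $L$, so
\[
g(\t)_{k,j}\;=\;p_{j-k}(\t)\;+\;\sum_{n>j-k}p_n(\t)(L^n)_{k,j},
\]
with the convention $p_m=0$ for $m<0$. Substituting into the $N\times N$ determinant defining $\pi_{[N],J}(g(\t))$ and invoking Jacobi--Trudi,
\[
\det\bigl(p_{j_l-k}(\t)\bigr)_{k,l=1}^N\;=\;S_{\lambda_J}(\t),\qquad (\lambda_J)_k=j_k-k,
\]
the principal contribution is $S_\lambda$ (with coefficient $1$) when $J=I$ and $S_{\lambda_J}$ with $\lambda_J\supsetneq\lambda$ when $J>I$. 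Assembling the pieces gives the claimed expansion $\Delta_N(\t;w_*)=S_\lambda(\t)+\sum_{\mu\supsetneq\lambda}c_\mu S_\mu(\t)$.

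The main obstacle I anticipate is showing \emph{rigorously} that the correction terms --- both those from the strictly lower triangular part of $L$ (via the $p_n(\t)(L^n)_{k,j}$ with $n>j-k$) and those from the non-principal $d_J\,\pi_{[N],J}(g(\t))$ contributions --- produce \emph{only} Schur polynomials $S_\mu$ with $\mu\supseteq\lambda$, never introducing a Young diagram incomparable to $\lambda$. The cleanest way I see to handle this is a monotonicity argument inside the Jacobi--Trudi determinant: replacing an entry $p_{j_l-k}$ by $p_{j_l-k}+p_n$ with $n>j_l-k$ shifts the corresponding row/column index monotonically upward in the partition lattice, so every term in the resulting expansion is indexed by a Young diagram containing $\lambda_J$, and hence containing $\lambda$.
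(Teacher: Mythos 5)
Your first step is sound and matches the paper's setup: $b_*$ contributes only an overall constant, $\dot w_*$ converts the leading principal minor into the minor with column set $I=\{i_1<\cdots<i_N\}$, and the lower-unipotent $n_*$ spreads this into a sum $\sum_{J\ge I} d_J\,\pi_{[N],J}(g(\t))$ over column sets $J$ dominating $I$ componentwise (your majorization argument for $j'_k\ge i_k$ is the right one, and this is exactly the Cauchy--Binet step the paper invokes, using the echelon structure of the columns of $n_*\dot w_*$). The gap is in your second step. The identity $g(\t)_{k,j}=p_{j-k}(\t)+\sum_{n>j-k}p_n(\t)(L^n)_{k,j}$ is correct, but the correction coefficients $(L^n)_{k,j}$ depend on the row index $k$ as well as on $n$ and $j$, so a column of $g(\t)$ is \emph{not} a linear combination of Toeplitz columns $(p_{m-k})_{k=1}^N$. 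Your ``monotonicity'' argument treats each entry as if it were perturbed by a single $p_n$, which misrepresents the structure: multilinear expansion of $\det(g_{k,j_l})$ produces products of $p$'s with row-dependent coefficients, and there is no a priori reason these reassemble into Jacobi--Trudi determinants of partitions containing $\lambda_J$. Containment $\mu\supseteq\lambda$ is a genuine constraint not implied by degree counting (e.g.\ $(1,1,1)\not\supseteq(2)$), so the step you flag as the ``main obstacle'' is in fact unproved.

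The paper closes exactly this gap by conjugating to companion form, $L^0=n_0^{-1}C_Kn_0$ with $C_K=VKV^{-1}$ as in \eqref{eq:VK}, and factoring $Ve^{\Theta_K(\t)}=\mathcal{P}(\t)V_\infty$: all $\t$-dependence is then concentrated in a genuine Toeplitz matrix $\mathcal{P}(\t)$ whose $[N]\times J$ minors are \emph{exactly} $S_{\lambda_J}(\t)$, while all dependence on $L^0$, $n_*$ and $\dot w_*$ is pushed into constant echelon vectors $v_{w_*(j)}$ with leading $1$ in position $w_*(j)$; Cauchy--Binet applied to \eqref{tauW} then yields the theorem at once. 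You can repair your version without the Vandermonde by observing that $e_1^T,\,e_1^TL^0,\dots,e_1^T(L^0)^{N-1}$ is a unipotent triangular change of basis from $e_1^T,\dots,e_N^T$, so the $[N]\times J$ minor of $g(\t)$ equals the Wronskian $\mathrm{Wr}(f_{j_1},\dots,f_{j_N})$ of the first-row functions $f_j(\t)=g(\t)_{1,j}=\sum_n p_n(\t)(L^n)_{1,j}$; each column is then a combination $\sum_n(L^n)_{1,j}\,(p_{n+1-k})_{k=1}^N$ of Toeplitz columns with unit leading term at $n=j-1$, and your Cauchy--Binet/majorization argument goes through. Some such normalization of the rows is indispensable; as written, the proof is incomplete at precisely this point.
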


Our second characterization constraints which Bruhat cells solutions of the $\ell$-KT hierarchy may hit, equivalently Theorem \ref{prop:lbandKTintro} in the following shows how the corresponding symplectic leaves for the $\ell$-KT hierarchy may be compactified in the flag variety (Proposition \ref{prop:lbandKT}):
\begin{theorem}\label{prop:lbandKTintro}
Let $w_*$ be an element in $\frak{S}_M$ satisfying the following conditions,
\begin{itemize}
\item[(a)]
$w_*(i)\ne w_*(i+k)+1\quad\text{for any } k > \ell,$ and
\item[(b)] for some $1\le i\le M-\ell$, there is a relation,
\[
w_*(i)=w_*(i+\ell)+1.
\]
\end{itemize}
Then the corresponding $\tau$-functions give a solution to the $\ell$-KT hierarchy, that is,
for each $m$ with $\ell+1\le m\le M-1$, the elements $a_{i,j}$ in the Lax matrix $L$ satisfy
\[
a_{m+k,k}(\t)=0\qquad\text{for}\qquad 1\le k\le M-m,
\]
and $a_{\ell+k,k}(\t)\ne 0$ for  some $1\le k\le M-\ell$.
\end{theorem}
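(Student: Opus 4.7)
The plan is to reduce both parts of the statement to polynomial identities among the $\tau$-functions via Theorem~\ref{prop:aijintro}, and then to verify those identities using the explicit Schur-polynomial structure supplied by Theorem~\ref{thm:Schurexpandintro}. Writing $\tau_N(\t):=\Delta_N(\t;w_*)$, the claim $a_{m+k,k}(\t;w_*)\equiv 0$ for $m\ge \ell+1$ becomes
\[
p_{m+1}(\tilde D)\,\tau_k\circ\tau_{k+m-1}\equiv 0,
\]
while the non-vanishing in the last sentence of the theorem becomes the same bilinear expression with $m=\ell$ being nonzero for at least one $k$.

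First I would read off the leading Young diagrams $\lambda^{(N)}$ from Theorem~\ref{thm:Schurexpandintro} and track how they evolve with $N$: inserting $w_*(N+1)$ into the sorted tuple $(i^{(N)}_1<\cdots<i^{(N)}_N)$ at some position $p$ leaves the first $p-1$ rows fixed, inserts a new row, and shifts subsequent rows according to $\lambda_k^{(N)}=i^{(N)}_k-k$. This gives an explicit combinatorial handle on $\lambda^{(k+m-1)}$ relative to $\lambda^{(k)}$ in terms of the block $w_*(k),\ldots,w_*(k+m-1)$, which is precisely the data appearing in conditions (a) and (b).

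Next I would evaluate the Hirota bilinear derivative using the generating identity
\[
\sum_{n\ge 0}z^n\, p_n(\tilde D)\,f\circ g \;=\; f(\t+[z])\,g(\t-[z]),\qquad [z]:=\bigl(z,\tfrac{z^2}{2},\tfrac{z^3}{3},\ldots\bigr),
\]
together with the classical vertex-operator expansion $S_\lambda(\t\pm[z])=\sum_\mu (\pm z)^{|\lambda/\mu|}S_\mu(\t)$, where $\mu$ ranges over shapes such that $\lambda/\mu$ is a horizontal strip. Applied to $\tau_k$ and $\tau_{k+m-1}$, the coefficient of $z^{m+1}$ becomes a signed sum over compatible pairs of horizontal strips, one exchanged with $\lambda^{(k)}$ and one with $\lambda^{(k+m-1)}$, of total size $m+1$. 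Using the dictionary above, each surviving term can be indexed by an index pair $(i,j)$ with $i\le k<j\le k+m-1$ and $w_*(i)=w_*(j)+1$, where the span $j-i$ records how many rows shift between the two diagrams. Condition~(a) rules out every such pair with $j-i>\ell$, and box-count considerations force the surviving pairs for $m\ge \ell+1$ to lie in the excluded regime, so the bilinear derivative vanishes identically and $a_{m+k,k}(\t;w_*)\equiv 0$. Condition~(b) supplies a pair with $j-i=\ell$ outside the exclusion, and one then checks that its contribution is nonzero, yielding $a_{k+\ell,k}(\t;w_*)\not\equiv 0$ for the corresponding $k$.

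The main obstacle is the bookkeeping of the higher-order Schur corrections $\sum_{\mu\supsetneq\lambda^{(N)}}c_\mu S_\mu$ in $\tau_N$: one must verify that the non-leading Schur modes in the bilinear derivative cancel as well, not just the leading one. The cleanest route is to invoke the fact that $(\tau_1,\ldots,\tau_M)$ is a genuine f-KT $\tau$-function system arising from the single orbit $\mathcal{N}^-\dot w_*\mathcal{B}^+$, so the bilinear identity is a closed polynomial relation in $\t$; once the leading Schur obstruction vanishes identically under condition~(a), the mode-by-mode decoupling of the f-KT bilinear relations in the Schur basis forces the full polynomial identity. A more hands-on alternative is to expand $\tau_N(\t\pm[z])=\sum_\mu c^{(N)}_\mu(z)S_\mu(\t)$ explicitly and check the vanishing coefficient by coefficient using the same strip-exchange bookkeeping.
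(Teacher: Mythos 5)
Your reduction of the statement to the bilinear expressions $p_{m+1}(\tilde D)\tau_k\circ\tau_{k+m-1}$ and your use of the leading Young diagrams from Theorem \ref{thm:Schurexpandintro} are on target, but the argument has a genuine gap exactly where you flag ``the main obstacle'': passing from the vanishing of the leading Schur contribution to the identical vanishing of the whole bilinear polynomial. The appeal to ``mode-by-mode decoupling of the f-KT bilinear relations in the Schur basis'' is not a theorem you can invoke --- the higher Schur coefficients $c_\mu$ of $\tau_N$ depend on the free parameters of the cell, and nothing in the setup forces each Schur mode of $p_{m+1}(\tilde D)\tau_k\circ\tau_{k+m-1}$ to vanish separately once the leading one does. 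The fallback of checking ``coefficient by coefficient'' is a computation plan, not a proof; likewise the non-vanishing claim under condition (b) is left at ``one then checks that its contribution is nonzero.''

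The paper closes this gap with a structural step you are missing: it inducts on the subdiagonals from the outermost one ($a_{M,1}$) inward. Once all bands below the $m$-th are known to vanish, the f-KT equation \eqref{eq:f-KTcomp} for $a_{m+k,k}$ loses its inhomogeneous terms and integrates (exactly as in Proposition \ref{prop:l-band}) to $a_{m+k,k}(\t)=a^0_{m+k,k}\,\Delta_{m+k}\Delta_{k-1}/(\Delta_{m+k-1}\Delta_k)$. Hence $a_{m+k,k}\equiv 0$ if and only if the single constant $a^0_{m+k,k}$ vanishes, and combining with Proposition \ref{prop:aij} turns the question into whether the two sides of $p_{m+1}(\tilde D)\Delta_k\circ\Delta_{m+k-1}=a^0_{m+k,k}\Delta_{m+k}\Delta_{k-1}$ can have matching lowest weighted degree. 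That comparison uses only the sizes of the leading Young diagrams (sums of the relevant $w_*$-values), which is coarser and simpler than your strip-exchange bookkeeping: condition (a) forces a degree mismatch, hence $a^0=0$, and condition (b) produces a match on the $\ell$-th band. In short, your leading-order combinatorics is the right ingredient, but it only suffices after the ODE integration reduces the claim to the vanishing of one constant per matrix entry; without that reduction your argument does not establish the full polynomial identity.
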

When $\ell = 1$, we recover the classical result of \cite{FH} in type $A$ of the Lie algebra $\mathfrak{g}$, we mentioned before for the tri-diagonal KT lattice.

We then apply the above machinery to  f-KT hierarchy defined on simple Lie algebras of type $A, B, G$, and for that we give explicitly the Chevalley basis and root space decompositions so that we can put the corresponding Lax matrices into the Hessenberg form. We also embed their Weyl groups into the symmetric group $\mathfrak{S}_M$ so that the second characterization of the $\ell$-KT hierarchy can be applied. When $L(\0)$ is nilpotent, the $\tau$-functions become polynomials of $\t$. We construct all polynomial solutions for f-KT hierarchy in type $A, B$ and $G$. 

It is worth mentioning that the last $\tau$-function in type $B$ possesses a Pfaffian structure (Theorem \ref{thm:pfaffian}, Proposition \ref{duality}, Proposition \ref{prop:Frobenius}, Proposition \ref{prop:SchurQ}):
\begin{theorem}\label{thm:pfaffianintro}
The $\tau$-function associated with $ w_*$ denoted by $\tau^B_n(\t_B; w_*)$ is given by
\[
\tau^B_n(\t_B, w_*)=\sqrt{\Delta_n(\t_B)}=\sum_{r\in\mathfrak{N}_B}\phi_r(\t_B)\eta_r(\xi; w_*),
\]
where $\mathfrak{N}_B \cong \mathbb{Z}_2^n$ is the normal subgroup of the Weyl group $\mathfrak{W}_B = \mathfrak{S}_n \ltimes \mathbb{Z}_2^n$ in type $B$.
The coefficients $(\eta_r(\xi; w_*))_{r \in \mathfrak{N}_B}$ form a spinor and the $\phi_r(\t_B)$'s are exponential functions. For rational solutions, the $\phi_r(\t_B)$'s are given by $Q$-Schur polynomials. 
\end{theorem}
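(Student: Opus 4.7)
My plan is to exploit the fact that, in type $B$, the Lax matrix $L$ lies in $\mathfrak{so}(2n+1)$ with respect to the antidiagonal bilinear form $J$, so that $g(\t_B) := \exp(\Theta_{L(\0)}(\t_B))$ is valued in the corresponding orthogonal group. The $n$-th principal minor $\Delta_n(\t_B)$ then becomes a central minor of an orthogonal matrix, which classically is a perfect square whose square root can be written as a Pfaffian of an explicit skew-symmetric matrix $A(\t_B)$ of size $2n$ built from the first $n$ columns of $g(\t_B)$ paired through $J$. Using the Chevalley basis set up earlier in the paper for type $B$, I would first verify that the orthogonality relation is preserved under all Toda flows (this is routine since the flows are generated by left multiplication by $\exp(\Theta_{L(\0)})$), and then combine the block structure of $g$ forced by orthogonality with a Cauchy--Binet/Jacobi identity to produce $\Delta_n(\t_B) = \mathrm{Pf}(A(\t_B))^2$, realizing $\tau^B_n = \sqrt{\Delta_n} = \mathrm{Pf}(A)$; this is Theorem \ref{thm:pfaffian}.

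The second step is to expand this Pfaffian in a basis adapted to $\mathfrak{N}_B \cong \mathbb{Z}_2^n$. Choosing the weight basis of $\C^{2n+1}$ on which $\mathfrak{N}_B$ acts by sign flips of paired vectors, the Wick-type expansion of $\mathrm{Pf}(A)$ as a sum over perfect matchings groups naturally into the $2^n$ orbits of the $\mathfrak{N}_B$-action on index choices. Each orbit contributes one term $\phi_r(\t_B)\,\eta_r(\xi; w_*)$, with $\phi_r$ the exponential selected by the eigenvalue pattern of $r$ and $\eta_r$ the corresponding Plücker-type coordinate on the orthogonal Grassmannian $OG(n, 2n+1)$. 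That the vector $(\eta_r)_{r\in\mathfrak{N}_B}$ forms a spinor --- encoded by Proposition \ref{duality} (complementary matchings pair up) and Proposition \ref{prop:Frobenius} (a Frobenius-type identity for orthogonal minors) --- amounts to matching the sign conventions in the Pfaffian expansion with the Clifford-module structure of the half-spin representation. For the rational specialization (Proposition \ref{prop:SchurQ}), $L(\0)$ is nilpotent and only odd times survive in $\t_B$; specializing the Schur expansion of Theorem \ref{thm:Schurexpandintro} to the type $B$ setting and taking the Pfaffian square root forces the $\phi_r(\t_B)$'s to be the unique ``half-Schur'' family consistent with the Pfaffian identity of Step~1, namely Schur's $Q$-polynomials $Q_\lambda(\t_B)$, with the strict partition $\lambda$ determined by the combinatorics of $r$ and $w_*$.

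The main obstacle, I expect, is Step~2: verifying that the coefficients $\eta_r(\xi; w_*)$ genuinely assemble into the half-spin module rather than a generic collection of Plücker coordinates. The type $A$ singular analysis of Theorem \ref{thm:Schurexpandintro} does not see the sign subtleties that distinguish a spinor from a generic collection of minors, so one must redo the Bruhat-cell expansion in the orthogonal setting and carefully align each sign with the Clifford multiplication on spinors. A potentially clean route is to invoke the BKP boson--fermion correspondence, in which both the Pfaffian square-root phenomenon and the spinor structure of the $\tau$-functions are intrinsic; alternatively one can argue directly using the Cartan embedding of $OG(n, 2n+1)$ into the projectivized half-spin representation, whose homogeneous coordinates are exactly indexed by $\mathfrak{N}_B$ after fixing a reference isotropic flag.
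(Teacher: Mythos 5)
Your plan is correct in outline but follows a genuinely different route from the paper's proof of Theorem \ref{thm:pfaffian}. You work concretely with the Cartan--Pfaffian realization: the top Pl\"ucker coordinate of the maximal isotropic $n$-plane $g(\t_B)n_*\dot w_*\cdot\langle e_1,\ldots,e_n\rangle$ is a perfect square whose root is a Pfaffian, which you then expand Wick-style over perfect matchings grouped by $\mathfrak{N}_B$-orbits. The paper instead argues entirely at the level of matrix coefficients: since $\wedge^n\C^{2n+1}\cong V(2\varpi_n)$ and $c_\lambda(g)c_\mu(g)=c_{\lambda+\mu}(g)$ (Lemma \ref{lem:mat_coeff}(3)), one gets $\Delta_n=(c_{\varpi_n}^{\mathrm{Spin}_{2n+1}}(\tilde g\tilde h))^2$ as in \eqref{eq:Square}, and then the completeness relation of Lemma \ref{lem:mat_coeff}(1) applied to the orthonormal weight basis $\{u_r\}_{r\in\mathfrak{N}_B}$ of the $2^n$-dimensional spin module $V(\varpi_n)$ gives the factorized sum $\sum_r\phi_r\eta_r$ in one line, with $\phi_r=\langle v_{\varpi_n},\tilde g\cdot u_r\rangle$ and $\eta_r=\langle u_r,\tilde h\cdot v_{\varpi_n}\rangle$. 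The payoff of the paper's route is precisely the step you flag as your main obstacle: the bilinear factorization of each orbit contribution into a $\t_B$-dependent factor times a $\xi$-dependent factor, together with all sign alignments, is automatic from the insertion of a complete orthonormal basis, whereas in your Pfaffian expansion it amounts to a multiplicativity statement for the Cartan map (equivalently, that pure-spinor coordinates compose under group multiplication), which you assert but do not derive. Your route buys explicitness --- an actual skew matrix $A(\t_B)$ and a combinatorial matching expansion --- at the cost of having to re-prove, by hand, what the spin-representation formalism gives for free.

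Two smaller points. First, the identification of the individual factors as squares, $\langle e_1\wedge\cdots\wedge e_n, g(\t_B)\dot r\cdot e_1\wedge\cdots\wedge e_n\rangle=(\phi_r(\t_B))^2$ and similarly for $\eta_r$, is again an instance of \eqref{eq:Square}; your "diagonal Pl\"ucker coordinates of $OG(n,2n+1)$ are squares of spinor coordinates" is the same fact in Cartan's language, so no gap there. Second, for the rational specialization your appeal to a "unique half-Schur family consistent with the Pfaffian identity" (or to the BKP correspondence) is plausible but vague; the paper proves the $Q$-Schur statement concretely via Propositions \ref{duality}, \ref{prop:Frobenius} and \ref{prop:SchurQ}: the Young diagrams $\lambda(w)$ arising for $w\in\mathfrak{W}_B$ are self-conjugate of Frobenius type $(\alpha_1,\ldots,\alpha_s\,|\,\alpha_1-1,\ldots,\alpha_s-1)$, and the Giambelli determinant collapses, using $h_k=e_k$ on odd times, to $\frac{1}{2^s}Q_\alpha^2$ (a Jozefiak--Pragacz type identity). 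If you intend to complete your version, you would need to either carry out that computation or make the BKP/uniqueness argument precise.
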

As the Lie algebra of $G_2$-type could be embedded into a Lie algebra of $B_3$-type, the Pfaffian structure pertains in the $\tau$-functions of $G_2$. 

The rest of the paper is organized as follows. In Section \ref{sec:fKT} we introduce our main object f-KT hierarchy in the Hessenberg form and their $\tau$-functions. In Section \ref{sec:Schurexpand} we give Schur expansion of the $\tau$-function $\Delta_N(\t)$. In Section \ref{sec:lbanded}, we give the two characterizations for the $\ell$-KT hierarchy in the Hessenberg form. In Section \ref{sec:Lie} we define f-KT hierarchy and its $\tau$-functions on simple Lie algebras and present some of their basic properties. In Sections \ref{sec:AKT},  \ref{S:BKT} and  \ref{sec:GKT}, we choose proper Chevalley basis to put the f-KT hierarchy in type $A, B, G$ in the Hessenberg form and study their solutions especially the rational solutions and their banded structures.

\section{The full Kostant-Toda lattice}\label{sec:fKT}
The full Kostant-Toda (f-KT) lattice was first introduced in \cite{EFS}, and it takes the following form,
\begin{equation}\label{Lax}
\frac{dL}{dt}=[(L)_{\ge 0},L],
\end{equation}
where the Lax matrix $L$ is the $M\times M$ \emph{Hessenberg matrix}, 
\begin{align}\label{L}
L&=\sum_{i=1}^{M-1}E_{i,i+1}+\sum_{1\le j\le i\le M}a_{i,j}E_{i,j}\\[1.0ex]
&=
\begin{pmatrix}
a_{1,1} & 1 & 0 & \cdots & 0 \\
a_{2,1}&a_{2,2}&1&\cdots &0\\
\vdots &\vdots &\ddots &\ddots & \vdots\\
a_{M-1,1}&a_{M-1,2}&\cdots&\cdots&1\\
a_{M,1}& a_{M,2}&\cdots&\cdots & a_{M,M}
\end{pmatrix},\nonumber
\end{align}
and $(L)_{\geq 0}$ denotes the weakly upper triangular part of $L$. The $E_{i,j}$ is the standard basis of the set of $M\times M$ matrices defined by
\[
E_{i,j}=(\delta_{i,k}\delta_{j,l})_{1\le k,l\le M}.
\]
In terms of the entries $a_{i,j}=a_{i,j}(t)$,  
the f-KT lattice is defined by the system of equations
\begin{equation}\label{eq:f-KTcomp}
\frac{da_{\ell+k,k}}{dt}=a_{\ell+k+1,k}-a_{\ell+k,k-1}+(a_{\ell+k,\ell+k}-a_{k,k})a_{\ell+k,k}
\end{equation}
for  $k=1,\ldots,M-\ell$ and $\ell=0,1,\ldots,M-1$. Here we use the convention
that $a_{i,j}=0$  if 
$j= 0$ or $i= M+1$.  Note  that  the index $\ell$ represents the $\ell$-th subdiagonal of the matrix $L$, that is, $\ell=0$ corresponds to 
the diagonal elements
$a_{k,k}$ and $\ell=1$ corresponds to the elements $a_{k+1,k}$ of the 
first subdiagonal, etc.

The f-KT hierarchy consists of the symmetries of the f-KT lattice generated by the Chevalley invariants $I_k=\frac{1}{k+1}\text{tr}(L^{k+1})$ for $k=1,\ldots,M-1$ (c.f. \cite{KS08}).
Each symmetry is given by the equation
\begin{equation}\label{KT-hierarchy}
\frac{\partial L}{\partial t_k}=[(L^k)_{\ge0}, L],\qquad\text{for}\quad k=1,2,\ldots, M-1.
\end{equation}
In this paper, we formally use a set of infinitely many symmetries, whose flow parameter is denoted by
$\mathbf{t}:=(t_1,t_2,\ldots)$. We call the set of all symmetries in the form \eqref{KT-hierarchy}
the f-KT lattice hierarchy, or simply f-KT hierarchy.

\subsection{Matrix factorization and the $\tau$-functions of the f-KT lattice}
To find the solution $L(t)$ of the f-KT lattice in terms of the initial matrix $L(0)$,
it is standard to consider the \emph{LU-factorization} (Gauss decomposition) of the matrix $\exp(tL(0))$:
\begin{equation}\label{exp}
\exp(tL(0))=u(t)b(t)\qquad \text{with}\quad u(t)\in \mathcal{N}^-,~ b(t)\in \mathcal{B}^+. 
\end{equation}
where $\mathcal{N}^-$ is the set of lower triangular matrices with $1$'s on the diagonal, and
$\mathcal{B}^+$ is the set of invertible upper triangular matrices.
Note here that $u(0)=b(0)=I$, the identity matrix.
It is known and easy to show that this factorization exists if and only if the principal leading minors are all nonzero.  
Under this assumption, we let $\Delta_N(t)$ denote the $N$-th leading principal minor, the determinant of the $N\times N$ upper left corner of the matrix $\exp(tL(0))$,
\begin{equation}\label{eq:tau1}
\Delta_N(t):=[\exp(tL(0))]_N\qquad\text{for}\quad N=1,2,\ldots,M.
\end{equation}
Let $b(t)=h(t)n(t)$ be the decomposition with $h(t)$ a diagonal matrix and $n(t)\in \mathcal{N}^+$, a unipotent upper triangular matrix. Then writing $h(t)=\exp(\sum_{i=1}^Mc_i(t)E_{i,i})$, we have 
\begin{equation}\label{eq:ci}
\Delta_N(t)=\exp\left(\sum_{i=1}^Nc_i(t)\right),\quad\text{which gives}\quad c_N(t)=\ln\frac{\Delta_N(t)}{\Delta_{N-1}(t)}.
\end{equation}
This formula gives a key to express the solution of the f-KT lattice as explained below.

With the factorization \eqref{exp}, we first have the following result (c.f. \cite{Symes} which deals with the
original Toda lattice).
\begin{proposition}\label{prop:LUsolution}
The solution $L(t)$ of \eqref{Lax} is given by
\begin{equation}\label{eq:ub}
L(t)=u^{-1}(t)L(0)u(t)=b(t)L(0)b(t)^{-1}.
\end{equation}
\end{proposition}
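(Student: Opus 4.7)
The plan is to verify directly that the matrix $\tilde L(t) := u(t)^{-1} L(0) u(t)$ satisfies the Lax equation \eqref{Lax} with $\tilde L(0) = L(0)$, and then invoke uniqueness for the Cauchy problem. The equality of the two expressions $u^{-1} L(0) u = b\, L(0)\, b^{-1}$ in \eqref{eq:ub} is immediate, since $L(0)$ commutes with $\exp(tL(0)) = u(t) b(t)$, so conjugation by $u(t)^{-1}$ gives the same result as conjugation by $b(t)$.

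First I would differentiate the factorization $\exp(tL(0)) = u(t) b(t)$ in $t$, which produces $L(0)\, u\, b = \dot u\, b + u\, \dot b$. Multiplying on the left by $u^{-1}$ and on the right by $b^{-1}$ yields the key identity
\[
\tilde L \;=\; u^{-1} L(0)\, u \;=\; u^{-1}\dot u \;+\; \dot b\, b^{-1}.
\]
The critical observation is that $u(t) \in \mathcal{N}^-$ forces $u^{-1}\dot u$ to be strictly lower triangular, while $b(t) \in \mathcal{B}^+$ forces $\dot b\, b^{-1}$ to be weakly upper triangular. By the direct-sum decomposition $\mathfrak{gl}_M = \mathfrak{n}_- \oplus \mathfrak{b}_+$, this is precisely the unique splitting of $\tilde L$ into its two triangular parts, so
\[
u^{-1}\dot u \;=\; (\tilde L)_{<0}, \qquad \dot b\, b^{-1} \;=\; (\tilde L)_{\ge 0}.
\]

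Differentiating $\tilde L = u^{-1} L(0)\, u$ directly gives $\dot{\tilde L} = [\tilde L,\, u^{-1}\dot u]$. Substituting the identification above and using $\tilde L = (\tilde L)_{<0} + (\tilde L)_{\ge 0}$, one finds
\[
\dot{\tilde L} \;=\; [\tilde L,\, (\tilde L)_{<0}] \;=\; [(\tilde L)_{\ge 0},\, \tilde L],
\]
which is exactly the Lax equation \eqref{Lax}. Combined with $\tilde L(0) = L(0)$ (from $u(0) = I$) and the uniqueness of solutions to the initial value problem for \eqref{Lax}, this forces $L(t) = \tilde L(t)$.

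There is no real obstacle here: the argument is a routine Lie-theoretic manipulation. The only subtle point worth highlighting is that the vector-space decomposition $\mathfrak{gl}_M = \mathfrak{n}_- \oplus \mathfrak{b}_+$ is precisely what makes the differentiated factorization automatically supply the weakly-upper-triangular projection $(\tilde L)_{\ge 0}$ appearing in \eqref{Lax}. The same proof extends verbatim to the higher flows of the hierarchy \eqref{KT-hierarchy}: one simply replaces $\exp(tL(0))$ by $\exp\!\bigl(L(0)^k\, t_k\bigr)$ in the factorization \eqref{exp}, and the identical triangular-splitting argument yields $[(L^k)_{\ge 0}, L]$ on the right-hand side.
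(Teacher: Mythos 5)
Your proof is correct and follows essentially the same route as the paper: differentiate the LU-factorization, identify $u^{-1}\dot u$ and $\dot b\,b^{-1}$ with $(\tilde L)_{<0}$ and $(\tilde L)_{\ge 0}$ via the splitting $\mathfrak{gl}_M=\mathfrak{n}_-\oplus\mathfrak{b}_+$, differentiate the conjugation formula to recover the Lax equation, and conclude by uniqueness of the initial value problem. The only cosmetic difference is that you differentiate $u^{-1}L(0)u$ (obtaining $[\tilde L,(\tilde L)_{<0}]$ and then rewriting it as $[(\tilde L)_{\ge 0},\tilde L]$) whereas the paper differentiates $bL(0)b^{-1}$ and gets $[(\tilde L)_{\ge 0},\tilde L]$ directly.
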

\begin{proof}
Taking the derivative of \eqref{exp}, we have
\[
\frac{d}{dt}\exp(tL(0))=L(0)ub=ubL(0)=\dot{u}b+u\dot{b},
\]
where $\dot{x}$ means the derivative of $x(t)$.  This 
equation can  also be written as
\[
u^{-1}L(0)u=bL(0)b^{-1}=u^{-1}\dot{u}+\dot{b}b^{-1}.
\]
We denote this as $\tilde{L}$, and show $\tilde{L}=L$.  
Using $\mathfrak{g}=\mathfrak{n}_-\oplus\mathfrak{b}_+$, we decompose $\tilde{L}$ 
as
\begin{equation}\label{eq:b-eq}
u^{-1}\dot{u}=(\tilde{L})_{<0}
\qquad\text{and}\qquad
\dot{b}b^{-1}=(\tilde{L})_{\ge 0}.
\end{equation}
To show $\tilde L=L$, we first show that $\tilde{L}$ also satisfies the f-KT lattice.  Differentiating $\tilde{L}=bL(0)b^{-1}$, we have
\[
\frac{d\tilde{L}}{dt}=\dot{b}b^{-1}\tilde{L}-\tilde{L}\dot{b}b^{-1}=[\dot{b}b^{-1},\tilde{L}] = [(\tilde L)_{\ge0},\tilde L].
\]
Here we have used $\frac{d}{dt}b^{-1}=-b^{-1}\dot{b}b^{-1}$.
Since $\tilde L(0)=L(0)$, i.e. the initial data are the same,  the uniqueness theorem of the differential equation implies that $\tilde L(t)=L(t)$.  This completes the proof.
\end{proof}

We also have an explicit formula for the diagonal elements of $L(t)$.
\begin{proposition}\label{prop:diagonal}
The diagonal elements of the matrix $L=L(t)$ can be expressed by
\begin{equation}\label{aii}
a_{N,N}(t)=\frac{d}{dt}\ln\frac{\Delta_N(t)}{\Delta_{N-1}(t)}, \qquad 1 \le N \le M,
\end{equation}
where $\Delta_N(t)=[\exp(tL(0))]_N$ is defined in \eqref{eq:tau1}.
\end{proposition}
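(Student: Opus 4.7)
The plan is to extract the diagonal part of the matrix identity $\dot b\,b^{-1}=(L)_{\geq 0}$ that appears in \eqref{eq:b-eq}, and then match it with the logarithmic derivatives of the ratios $\Delta_N/\Delta_{N-1}$ already computed in \eqref{eq:ci}.

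First I would write $b(t)=h(t)n(t)$ with $h$ diagonal and $n\in\mathcal{N}^+$ unipotent upper triangular, exactly as in the discussion preceding \eqref{eq:ci}. A direct differentiation gives
\[
\dot b\,b^{-1}=\dot h\,h^{-1}+h\,\dot n\,n^{-1}\,h^{-1}.
\]
The term $\dot h\,h^{-1}$ is diagonal (since $h$ is diagonal), while $\dot n\,n^{-1}$ is strictly upper triangular (because $n$ is unipotent upper triangular), and conjugation by the diagonal $h$ preserves strict upper triangularity. Hence $\dot b\,b^{-1}$ has the diagonal part exactly equal to $\dot h\,h^{-1}$.

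Next, by the second equation in \eqref{eq:b-eq}, $\dot b\,b^{-1}=(\tilde L)_{\geq 0}=(L)_{\geq 0}$, whose diagonal coincides with the diagonal of $L$, namely $\sum_{N=1}^M a_{N,N}(t)\,E_{N,N}$. Writing $h(t)=\exp\bigl(\sum_{i=1}^M c_i(t)E_{i,i}\bigr)$, the commutativity of diagonal matrices gives $\dot h\,h^{-1}=\sum_{N=1}^M \dot c_N(t)\,E_{N,N}$, so matching diagonal entries yields
\[
a_{N,N}(t)=\dot c_N(t).
\]
Finally, the identity \eqref{eq:ci} says $c_N(t)=\ln(\Delta_N(t)/\Delta_{N-1}(t))$, and differentiating in $t$ produces \eqref{aii}.

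There is no serious obstacle: the only things to be careful about are ensuring the LU-factorization is valid (guaranteed by the standing assumption that all leading principal minors $\Delta_N(t)$ are nonzero, which also underlies \eqref{eq:ci}), and justifying that the strictly upper triangular piece $h\,\dot n\,n^{-1}\,h^{-1}$ contributes nothing to the diagonal. Both are immediate from the definitions of $\mathcal{N}^+$ and of diagonal conjugation.
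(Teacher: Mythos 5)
Your proposal is correct and follows essentially the same route as the paper: both extract the diagonal part of $\dot b\,b^{-1}=(L)_{\ge 0}$ via the decomposition $b=hn$, identify it with $\dot h\,h^{-1}=\sum_N\dot c_N E_{N,N}$, and conclude using $c_N=\ln(\Delta_N/\Delta_{N-1})$ from \eqref{eq:ci}. Your write-up merely spells out a bit more explicitly why $h\,\dot n\,n^{-1}h^{-1}$ has zero diagonal, which the paper leaves implicit.
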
 
\begin{proof}
First recall from the proof of Proposition \ref{prop:LUsolution}
that $\dot{b}b^{-1}=(L)_{\ge 0}$, i.e.
\[
(L)_{\ge 0}=\sum_{i=1}^Ma_{i,i}E_{i,i}+\sum_{i=1}^{M-1}E_{i,i+1}.
\]
The decomposition $b(t)=h(t)n(t)$ and $h(t)=\exp(\sum_{i=1}c_i(t)E_{i,i})$  gives
\[
\dot{b}b^{-1}=\dot{h}h^{-1}+h\dot{n}n^{-1}h^{-1}, \quad\text{where}\quad \dot{h}h^{-1}=\sum_{i=1}^M\dot{c}_iE_{i,i}.
\]
This completes the proof.
\end{proof}

For the hierarchy, we consider the notation with multi-time variables $\mathbf{t}=(t_1,t_2,\ldots)$
\begin{equation} \label{eq:Theta}
\Theta_{A}(\mathbf{t}):=\sum_{n=1}^{\infty} A^n t_n\qquad\text{for a matrix}~ A.
\end{equation}

Then Proposition \ref{prop:LUsolution} 
can be easily extended  to the f-KT hierarchy.
\begin{proposition}\label{prop:LUsolution2}
Consider the LU-factorization
\begin{equation}\label{eq:ubhky}
\exp(\Theta_{L^0}(\mathbf{t}))=
u(\mathbf{t})b(\mathbf{t})\qquad \text{with}
\quad u(\mathbf{t})\in \mathcal{N}^-,~ b(\mathbf{t})\in \mathcal{B}^+. 
\end{equation}
The solution $L(\mathbf{t})$ of the f-KT hierarchy is then given by
\[
L(\mathbf{t})=u(\mathbf{t})^{-1} L^0u(\mathbf{t})=
b(\mathbf{t})L^0b(\mathbf{t})^{-1},
\]
where $L^0=L(\mathbf{0})$, the initial matrix of $L(\mathbf{t})$.
\end{proposition}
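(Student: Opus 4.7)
The plan is to mirror the argument used for Proposition \ref{prop:LUsolution}, replacing the single time variable $t$ by each of the flow parameters $t_k$ in $\mathbf{t}=(t_1,t_2,\ldots)$, and exploiting the crucial fact that all the matrices $(L^0)^k$ commute among themselves, so the many ``partial'' LU-based derivations are mutually compatible. Concretely, I would set $\tilde L(\mathbf{t}):=u(\mathbf{t})^{-1}L^0u(\mathbf{t})$, show using \eqref{eq:ubhky} that it also equals $b(\mathbf{t})L^0b(\mathbf{t})^{-1}$, and prove that $\tilde L(\mathbf{t})$ satisfies every equation of the f-KT hierarchy \eqref{KT-hierarchy}; since $\tilde L(\mathbf{0})=L^0$ coincides with the initial datum of $L(\mathbf{t})$, uniqueness will force $\tilde L=L$.

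The key computation is the $t_k$-derivative of the factorization $\exp(\Theta_{L^0}(\mathbf{t}))=u(\mathbf{t})b(\mathbf{t})$. Because $\Theta_{L^0}(\mathbf{t})=\sum_{n\ge 1}(L^0)^n t_n$ is a sum of mutually commuting matrices, I get
\[
\frac{\partial}{\partial t_k}\exp(\Theta_{L^0}(\mathbf{t}))=(L^0)^k\,u\,b=u\,b\,(L^0)^k=\frac{\partial u}{\partial t_k}b+u\frac{\partial b}{\partial t_k},
\]
whence
\[
u^{-1}(L^0)^k u \;=\; b(L^0)^k b^{-1} \;=\; u^{-1}\frac{\partial u}{\partial t_k}+\frac{\partial b}{\partial t_k}\,b^{-1}.
\]
Noting that $u^{-1}(L^0)^k u=(u^{-1}L^0 u)^k=\tilde L^k$ and similarly $b(L^0)^k b^{-1}=\tilde L^k$, and invoking the direct sum decomposition $\mathfrak g=\mathfrak n_-\oplus \mathfrak b_+$ exactly as in \eqref{eq:b-eq}, I can read off the unique splitting
\[
u^{-1}\frac{\partial u}{\partial t_k}=\bigl(\tilde L^k\bigr)_{<0},\qquad \frac{\partial b}{\partial t_k}\,b^{-1}=\bigl(\tilde L^k\bigr)_{\ge 0}.
\]

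Next, I would differentiate $\tilde L=bL^0 b^{-1}$ with respect to $t_k$, using $\partial_{t_k} b^{-1}=-b^{-1}(\partial_{t_k} b)b^{-1}$, to obtain
\[
\frac{\partial \tilde L}{\partial t_k}=\left[\frac{\partial b}{\partial t_k}b^{-1},\,\tilde L\right]=\bigl[(\tilde L^k)_{\ge 0},\,\tilde L\bigr],
\]
so $\tilde L$ solves the $k$-th f-KT flow \eqref{KT-hierarchy} for every $k\ge 1$. Since $u(\mathbf{0})=b(\mathbf{0})=I$ gives $\tilde L(\mathbf{0})=L^0=L(\mathbf{0})$, the uniqueness theorem for each individual ODE (applied to each $t_k$ with the other times held fixed) yields $\tilde L(\mathbf{t})=L(\mathbf{t})$ throughout the common domain of definition.

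The main conceptual point, rather than a genuine obstacle, is to verify that the commutativity of the $(L^0)^n$'s is what makes $\partial_{t_k}\exp(\Theta_{L^0}(\mathbf{t}))$ factor so cleanly; without it one would need a Duhamel-type formula and the argument would not transfer term-by-term. A small caveat is to remark that \eqref{eq:ubhky} exists on an open neighborhood of $\mathbf{t}=\mathbf{0}$ (where the principal leading minors of $\exp(\Theta_{L^0}(\mathbf{t}))$ are nonzero), which is where the hierarchy solution and its LU representation coincide; this is implicit in the formulation and needs only a passing mention.
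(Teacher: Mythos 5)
Your proof is correct and is precisely the extension of the single-time argument (Proposition \ref{prop:LUsolution}) that the paper itself invokes, it merely states that the earlier proof "can be easily extended" and omits the details you have supplied. The key identities $u^{-1}(L^0)^k u=(u^{-1}L^0u)^k=\tilde L^k$, the splitting via $\mathfrak g=\mathfrak n_-\oplus\mathfrak b_+$, and the flow-by-flow uniqueness argument from $\tilde L(\mathbf 0)=L^0$ are exactly what is needed.
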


This leads to the following definition of the minors $\Delta_N, 1 \le N \le M$ for the f-KT hierarchy,
\begin{equation}\label{eq:tau2}
\Delta_N(\mathbf{t})=[\exp(\Theta_{L^\0}(\mathbf{t}))]_N\qquad\text{with}\quad
\mathbf{t}=(t_1,t_2,\ldots).
\end{equation}

As before, the LU-factorization
\eqref{eq:ubhky}
exists if and only if each $\Delta_k(\mathbf{t})$ is nonzero.

In this paper, we study the cases where some of the minors $\Delta_N(\t)$ vanishes, and the set of zeros $\t_*$ with $\Delta_N(\t_*)=0$ is referred to as the Painlev\'e divisor in \cite{FH}.

\begin{remark}
If one identifies $t_1=x,~t_2=y$ and $t_3=t$,  the determinant $\Delta_N(\t)$ in
\eqref{eq:tau2} gives so-called $\tau$-function for the KP equation, which plays a key role in classifying solutions
(c.f. \cite{K:17} and below). There the $\Delta_N$ is 
associated with a point of the Grassmannian $\text{Gr}(N,M)$, 
and the set $(\Delta_1,\ldots,\Delta_{M-1})$ is associated with a point of the flag variety of $\text{GL}_M$.
The solution space of the f-KT hierarchy is naturally given by the complete flag variety of type-$A$ in this case (c.f. \cite{Xie:22b} and below).
\end{remark}


\subsection{The Schur functions and the $\tau$-functions}
One of the main objectives in this paper is to discuss the Schur function expansion of the minors $\Delta_N(\t)$.
The Schur functions are generated by the functions $p_n(\t)$, sometimes called the elementary
Schur polynomials. These $p_n(\t)$ are defined by
\[
\exp\left(\sum_{i=1}^\infty k^it_i\right)=\sum_{n=0}^\infty p_n(\t)k^n,
\]
which gives
\[
p_n(\t)=\sum_{i_1+2i_2+\cdots+ni_n=n}\frac{t_1^{i_1}t_2^{i_2}\cdots t_n^{i_n}}{i_1!\,i_2!\,\cdots\,i_n!}.
\]
The functions $p_n(\t)$ satisfy
\[
{\partial_k}p_n(\t)={\partial_1^k}p_n(\t)=p_{n-k}(\t),
\]
where $\partial_i^j=\frac{\partial^j}{\partial t_i^j}$, and we also have $p_i(\t)=0$ if $i<0$.

Let $\lambda$ be a Young diagram with $\lambda=(\lambda_1,\lambda_2,\ldots,\lambda_N)$
where $\lambda_1\ge\lambda_2\ge\cdots\ge\lambda_N\ge 0$.
The Schur function associated with the Young diagram $\lambda=(\lambda_1,\lambda_2,\ldots,\lambda_N)$ is define by
\[
S_\lambda(\t)=\text{det}\left(p_{\lambda_i-i+j}(\t)\right)_{1\le i,j\le N}.
\]
Note that the $S_\lambda(\t)$ can be written in the Wronskian form with respect to $t_1$,
\begin{equation}\label{eq:WrSchur}
S_\lambda(\t)=\text{Wr}\left(p_{\lambda_N},p_{\lambda_{N-1}+1},\ldots,p_{\lambda_1+N-1}\right)=\text{det}\left(p_{\lambda_{N-j+1}-i+j}(\t)\right)_{1\le i,j\le N}.
\end{equation}

The minors $\Delta_N(\t)$ in \eqref{eq:tau2} can also be expressed in the Wronskian form. 
First recall that there exists a unique $n_0\in \mathcal{N}^-$ such that
\begin{equation}\label{eq:VK}
L^0=n_0^{-1}C_{K} n_0,
\end{equation}
where $C_{K}$ is the companion matrix associated with $L^0$, i.e.
\[
C_{K}=\begin{pmatrix}
0 & 1 & 0 &\cdots &0\\
0& 0& 1 &\cdots &0\\
\vdots &\vdots &\ddots &\ddots &\vdots\\
0& 0 &\cdots & 0 & 1\\
(-1)^M \sigma_M&(-1)^{M-1}\sigma_{M-1}&\cdots &\sigma_2& -\sigma_1
\end{pmatrix},
\]
where the $\sigma_i$'s are defined by det$(\kappa I-L^0)=\sum_{i=0}^M(-1)^i\sigma_i\kappa^{n-i}=\prod_{i=1}^M(\kappa-\kappa_i)$, that is, $\sigma_i$ is the elementary symmetric polynomial of degree $i$ of
$\{\kappa_1,\ldots,\kappa_M\}$.  
We also have $C_{K}=VK V^{-1}$ when $K=\text{diag}(\kappa_1,\ldots,\kappa_M)$ is invertible, where $V$ is the Vandermonde matrix $V=(\kappa_j^{i-1})_{1\le i,j\le M}$.
Then we write $\Delta_N(\t)$ in the determinant form,
\begin{align*}
\Delta_N(\t)&=\langle e_1\wedge\cdots\wedge e_N,\,n_0^{-1}e^{\Theta_{C_K}(\t)}n_0\cdot 
e_1\wedge\cdots\wedge e_N\rangle=\text{det}\left(\langle e_i,\, e^{\Theta_{C_K}(\t)} n_0\cdot  e_j\rangle\right)_{1\le i,j\le M}.
\end{align*}
Then setting,
\[
f_j(\t):=\langle e_1,\,e^{\Theta_{C_K}(\t)}n_0\cdot e_j\rangle =\langle e_1,\,Ve^{\Theta_K(\t)}V^{-1}n_0\cdot e_j\rangle \qquad \text{for}\quad j=1,\ldots, N,
\]
we have
\[
\langle e_i,\,Ve^{\Theta_K(\t)}V^{-1}n_0\cdot e_j\rangle=\partial_1^{i-1}f_j(\t)\qquad\text{for}\quad i=1,\ldots,N.
\]
Thus, each $\Delta_N(\t)$ can be expressed in the Wronskian form,
\begin{equation}\label{eq:Wrtau}
\Delta_N(\t)=\text{Wr}(f_1,f_2,\ldots,f_N).
\end{equation}
One should also note that each $f_j(\t)$ satisfies the heat hierarchy,
\[
\partial_n f_j(\t)=\partial_1^n f_j(\t)\qquad n=1,2,\ldots.
\]

\begin{remark}
The formula \eqref{eq:Wrtau} is the $\tau$-function of the KP hierarchy (c.f. \cite{K:17}).
Also as shown below, the minors $\Delta_N(\t)$ for $1\le N\le M-1$ give the $\tau$-functions of the f-KT hierarchy of
type $A$.
\end{remark}

Let $\{l_0,l_1,\ldots, l_{N-1}\}$ be a set of integers with $0\le l_0<l_1<\cdots<l_{N-1}$. Then we define
\begin{equation}\label{e:f-Determinant}
[l_0,l_1,\ldots,l_{N-1}]:=\left|\begin{array}{cccc}
f_1^{(l_0)} & f_1^{(l_1)} & \cdots & f_1^{(l_{N-1})} \\
f_2^{(l_0)} & f_2^{(l_1)} &\cdots & f_2^{(l_{N-1})} \\
\vdots & \vdots& \ddots & \vdots \\
f_N^{(l_0)}& f_N^{(l_1)} &\cdots & f_N^{(l_{N-1})}
\end{array}\right|,
\end{equation}
where $f_j^{(l)}=\partial_1^lf_j$.
Note that the determinant $\Delta_N(\t)$ in \eqref{eq:Wrtau} is then given by
\[
\Delta_N(\t)=[0,1,2,\ldots,N-1].
\]
We define the Young diagram $\lambda$  associated with the set $\{l_0,l_1,\ldots,l_{N-1}\}$ as follows:
The southeast boundary of the Young diagram gives a path labeled by consecutively increasing integers from $0$, so that the indices $\{l_0,l_1,\ldots,l_{N-1}\}$ appear on the vertical edges in the boundary path as shown below.
\setlength{\unitlength}{0.5mm}
\begin{center}
  \begin{picture}(160,60)
\put(5,55){\line(1,0){160}}
\put(5,45){\line(1,0){110}}
  \put(5,35){\line(1,0){90}}
  \put(5,25){\line(1,0){90}}
  \put(5,15){\line(1,0){70}}
  \put(5,5){\line(1,0){70}}
  \put(5,5){\line(0,1){50}}
 \put(31,5){\line(0,1){50}}
  \put(18,5){\line(0,1){50}}
  \put(75,5){\line(0,1){10}}
  \put(62,5){\line(0,1){10}}
  \put(95,25){\line(0,1){10}}
  \put(82,25){\line(0,1){10}}
\put(115,45){\line(0,1){10}}
\put(102,45){\line(0,1){10}}
  \put(-11,29){$n$}
  \put(98,29){${l_{n}}$}
  \put(-20,49){$N-1$}
  \put(-10,37){$\vdots$}
  \put(-10,17){$\vdots$}
  \put(65,37){$\vdots$}
  \put(10,37){$\vdots$}
  \put(24,37){$\vdots$}
  \put(60,48){$\cdots$}
  \put(55,28){$\cdots$}
  \put(47,17){$\vdots$}
  \put(10,17){$\vdots$}
  \put(24,17){$\vdots$}
  \put(-10,8){$0$}
  \put(45,8){$\cdots$}
  \put(78,8){${l_0}$}

  \put(118,48){${l_{N-1}}\quad{\cdots}\quad {\cdots~\infty}$}
 
    \put(10,-3){${0\quad1}\qquad {\cdots}$}
    \put(60,-3){${l_0-1}$}
 \end{picture} 
\end{center}

\medskip
\noindent
The Young diagram $\lambda=(\lambda_1,\lambda_2,\ldots,\lambda_N)$ is expressed by
\[
\lambda_k=l_{N-k}-(N-k)\qquad\text{for}\quad  1\le k\le N.
\]
Let $\Delta_\lambda(\t)$ denote the determinant \eqref{e:f-Determinant}.
Then the $\Delta_N(\t)$ in \eqref{eq:Wrtau} is denoted by
$\Delta_N(\t)=\Delta_{\emptyset}(\t)$, since $(l_0,l_1,\ldots,l_{N-1})=(1,2,\ldots,N)$.
We then have the following useful formula (c.f. \cite{K:17}),
\begin{equation}\label{eq:tau-Schur}
S_\lambda(\tilde \partial)\Delta_\emptyset(\t)=\Delta_{\lambda}(\t),
\end{equation}
where $\tilde \partial$ is defined by
\[
\tilde\partial=\left(\partial_1,\,\frac{1}{2}\partial_2,\,\frac{1}{3}\partial_3,\,\ldots\right).
\]
In particular, if we choose $\Delta_\emptyset(\t)=S_\lambda(\t)$ as in \eqref{eq:WrSchur},
we have
\begin{align}\label{eq:skewS}
S_\mu(\tilde\partial)\,S_\lambda(\t)&=\text{det}\left(p_{\lambda_{N-i+1}-\mu_{N-j+1}+i-j}\right)_{1\le i,j\le N}\\ \nonumber
&=\text{det}\left(p_{\lambda_i-\mu_j-i+j}\right)_{1\le i,j\le N}=: S_{\lambda/\mu}(\t),
\end{align}
which is the {\it skew} Schur polynomial. Note here that $S_{\lambda/\mu}\ne0$ iff $\mu\subseteq\lambda$,
and in particular, $S_{\lambda/\mu}=1$ when $\lambda=\mu$. This then defines the orthogonality,
\[
S_\mu(\tilde\partial)S_\lambda(\t)\Big|_{\t=\0}=\delta_{\lambda,\mu}.
\]

We then have the Schur function expansion of $\Delta_N(\t)$ in the form (c.f. \cite{K:17}),
\begin{equation}\label{eq:Sexpansion}
\Delta_\emptyset (\t+\mathbf{s})=\sum_{\lambda}\Delta_{\lambda}(\mathbf{s})S_\lambda(\t),
\end{equation}
where the sum is over all possible Young diagrams. Note here that the coefficients $\Delta_\lambda(\mathbf{s})$ 
are all determinants and satisfy all the Pl\"ucker relations (this is an important result of the Sato theory of the KP hierarchy, c.f. \cite{K:17}).

In this paper, we are interested in the points $\t_*$ so that for some $\lambda_0$,
\[
\Delta_\mu(\t_*)=0\qquad\text{for}\qquad \mu<\lambda_0.
\]
hence, the $\Delta_\emptyset(\t)$ has the expansion,
\begin{equation}\label{eq:expansion}
\Delta_\emptyset(\t+\t_*)=S_{\lambda_0}(\t)+\sum_{\lambda_0<\lambda}\Delta_\lambda(\t_*)S_\lambda(\t).
\end{equation}
Here we have normalized $\Delta_{\lambda_0}(\t_*)=1$.

\section{The Schur expansion of $\Delta_N(\t)$}\label{sec:Schurexpand}
If a leading principal minor of the matrix $\exp(\Theta_{L^0}(\t))$ vanishes at a fixed multi-time $\t=\t_*$, then 
the LU-factorization of $\exp(\Theta_{L^0}(\mathbf{t}_*))$ fails, i.e.
$\exp(\Theta_{L^0}(\mathbf{t}_*))\in \mathcal{N}^-\dot{w}\mathcal{B}^+$ for some $w \in \Sym_M$, the symmetric group of order $M$,
such that $w \neq e$ and $\dot w$ denotes a representative of $w$ in $SL_M$.
This means that the f-KT flow becomes singular 
and hits the  boundary of a Schubert cell marked by $w\in\Sym_M$
(c.f. \cite{CK, FH}). The set of times $\t_*$  is called the \emph{Painlev\'e divisor}.  

Now we express the $\tau$-functions near the Painlev\'e divisors. Set $\t\to \t+\t_*$, where
$\Delta_N(\t_*)=0$ for some $1\le N\le M-1$. Then we have
\begin{equation}\label{eq:nwb}
\exp(\Theta_{L^0}(\t_*))=n_*\dot w_*b_*\qquad\text{for some}\quad w_*\in \frak{S}_M,
\end{equation}
where $n_*\in \mathcal{N}^-$ and $b_*\in \mathcal{B}^+$. In particular, the unipotent matrix $n_*$ for each $w_*$ will be constructed explicitly below.

 Then, using $L^0=n_0^{-1}VKV^{-1}n_0$ (see \eqref{eq:VK}), we have
\begin{equation}\label{eq:DN}
\Delta_N(\t)=d_N\langle e_1\wedge\cdots\wedge e_N, Ve^{\Theta_{K}(\t)}V^{-1}{n_0n_*}\dot w_* \cdot e_1\wedge\cdots \wedge e_N\rangle,
\end{equation}
where the constant $d_N$ is given by $b_* \cdot e_1\wedge\cdots\wedge e_N= d_N e_1\wedge\cdots\wedge e_N$, but
note that the solution of the hierarchy does not depend on $d_N$.

Now expanding $Ve^{\Theta_{K}(\t)}$, we have
\begin{equation}\label{eq:P}
Ve^{\Theta_{K}(\t)}=
\begin{pmatrix}
1 & p_1 & p_2 & \cdots &\cdots &\cdots &\cdots\\
0&1&p_1& p_2&\cdots &\cdots &\cdots\\
\vdots &\ddots &\ddots &\ddots &\ddots &\ddots &\vdots\\
0&\cdots &0&1 & p_1 &p_2 &\cdots
\end{pmatrix}\, V_{\infty}=: \mathcal{P}(\t)V_\infty,
\end{equation}
where $V_\infty$ is the $\infty\times M$ Vandelmonde matrix $V_\infty=(\lambda_j^{i-1})$ for
$1\le i$ and $1\le j\le M$.  We also note that
\[
(V_\infty V^{-1})^T=\begin{pmatrix}
1&0& \ldots &0&(-1)^{M-1} c_{(0|M-1)} & (-1)^{M-1}c_{(1|M-1)}& (-1)^{M-1}c_{(2|M-1)}&\cdots\\
0&1&\cdots& 0&(-1)^{M-2}c_{(0|M-2)}&      (-1)^{M-2}c_{(1|M-2)}& (-1)^{M-2}c_{(2|M-2)}&\cdots\\
\vdots&\vdots&\ddots&\vdots &\vdots &\vdots &\vdots&\vdots \\
0& 0& \cdots & 1 & c_{(0|0)} & c_{(1|0)} &c_{(2|0)} & \cdots
\end{pmatrix},
\]
where $c_{(a|b)}=c_{(a|b)}(\lambda)$ is the Schur polynomial of $\lambda=(\lambda_1,\ldots,\lambda_n)$ associated with the hook Young diagram $\lambda$ of $(a|b)$ type, i.e. $\lambda=(1+a,\underbrace{1,\ldots,1}_{b})$,  e.g. $(3|2)$ is $\young[4,1,1][5]$, $\lambda=(4,1,1)$. This is nothing but the Giambelli formula.  

For each $w_* \in \frak{S}_M$, we note
\[
\dot w_*\cdot e_1\wedge e_2\wedge \cdots\wedge e_N= e_{w_*(1)}\wedge e_{w_*(2)}\wedge
\cdots\wedge e_{w_*(N)}.
\]
Then each $\infty$-vector $v_{w_*(j)}:=V_\infty V^{-1}n_0n_*\dot w_* e_j$ has the following structure
\begin{equation}\label{eq:v}
v_{w_*(j)}=(0,\ldots,0,1,*,*,\cdots)^T,
\end{equation}
where the first nonzero term with 1 is located at $w_*(j)$.  We then have the following formula 
of the minor $\Delta_N(\t)$,
\begin{equation}\label{tauW}
\Delta_N(\t; w_*)= \langle e_1\wedge \cdots\wedge e_N, \mathcal{P}(\t)\cdot v_{w_*(1)}\wedge\cdots\wedge
v_{w_*(N)}\rangle.
\end{equation}

Now we have the following theorem.
\begin{theorem}\label{thm:Schurexpand}
The $\tau$-function in \eqref{tauW} has the  Schur expansion,
\[
\Delta_N(\t; w_*)=S_{\lambda}(\t)+\sum_{\mu\supset\lambda}c_{\mu}S_\mu(\t),
\]
where $S_\lambda(t)$ is the Schur polynomial associated with the Young diagram $\lambda$, which
is determined by $w_*$ as follows: Let $\{i_1,i_2,\ldots,i_N\}$ be defined by
\[
\{i_1,i_2,\ldots,i_N\}=\{w_*(1),w_*(2),\ldots, w_*(N)\},
\]
with $i_1<i_2<\cdots<i_N$. Then the Young diagram $\lambda=(\lambda_1,\lambda_2,\ldots,\lambda_N)$ is given by
\[
\lambda_k=i_k-k\qquad\text{for}\qquad 1\le k\le N.
\]
\end{theorem}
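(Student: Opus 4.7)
The plan is to unpack the determinantal formula \eqref{tauW} for $\Delta_N(\t;w_*)$ and extract its leading Schur component by a Jacobi--Trudi / Wronskian identification. Because the matrix $\mathcal{P}(\t)$ from \eqref{eq:P} has $(i,j)$-entry $p_{j-i}(\t)$ (with $p_k=0$ for $k<0$), and each $v_{w_*(j)}$ has the shape \eqref{eq:v} (zeros up to position $w_*(j)-1$, then $1$ at position $w_*(j)$, then arbitrary entries), the $i$-th component of $\mathcal{P}(\t)v_{w_*(j)}$ equals
\[
p_{w_*(j)-i}(\t)\,+\,\sum_{r\ge 1}(v_{w_*(j)})_{w_*(j)+r}\,p_{w_*(j)+r-i}(\t),
\]
so $\Delta_N(\t;w_*)$ is the $N\times N$ determinant of this matrix.

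I would then expand this determinant multilinearly in each column. The leading contribution, obtained by picking only the basis vector $e_{w_*(j)}$ in each column, is $\det\bigl(p_{w_*(j)-i}(\t)\bigr)_{1\le i,j\le N}$. Sorting columns to follow the increasing order $i_1<\cdots<i_N$ (the sign being absorbed into the normalization constant $d_N$ of \eqref{eq:DN} and the choice of representative $\dot w_*$), this becomes $\det\bigl(p_{i_j-i}(\t)\bigr)_{1\le i,j\le N}=\mathrm{Wr}(p_{i_1-1},\ldots,p_{i_N-1})$. Comparison with the Wronskian form \eqref{eq:WrSchur} of the Schur function identifies this as $S_\lambda(\t)$ with $\lambda$ determined by $\{i_k\}$ exactly as stated.

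The higher-order terms in the multilinear expansion replace some $e_{w_*(j)}$ by $e_{w_*(j)+r_j}$ with $r_j>0$; by the same Jacobi--Trudi identification each such contribution is either zero (if two shifted indices coincide) or a scalar times a Schur function $S_\mu(\t)$, where $\mu$ is built from the sorted multiset $\{w_*(j)+r_j\}$. A short majorization argument — sorting preserves componentwise inequalities — shows that the sorted shifted indices $i'_1<\cdots<i'_N$ satisfy $i'_k\ge i_k$ for every $k$, hence the associated partition has $\mu_k\ge\lambda_k$ for all $k$, i.e.\ $\mu\supset\lambda$. Summing these contributions produces the claimed expansion.

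The main obstacle I expect is the sign and normalization bookkeeping: one must verify that the column-sorting sign, the constant $d_N$ from \eqref{eq:DN}, and the sign implicit in the chosen representative $\dot w_*$ conspire to give exactly the coefficient $1$ in front of $S_\lambda(\t)$, as asserted. A secondary technical point is that, whenever a correction term is nonzero, the containment $\mu\supset\lambda$ is genuine rather than merely a weak dominance; this however reduces to the elementary sorting lemma noted above.
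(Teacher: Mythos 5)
Your argument is correct and is essentially the paper's proof: the paper simply says ``apply the Cauchy--Binet lemma,'' and your multilinear expansion in the columns $v_{w_*(j)}$, with the minors of $\mathcal{P}(\t)$ identified as Schur functions via the Wronskian/Jacobi--Trudi form \eqref{eq:WrSchur}, is exactly that expansion written out. Your sorting lemma (the $k$-th smallest of the shifted indices dominates $i_k$, with equality of the whole set forcing all shifts to vanish) correctly supplies the containment $\mu\supset\lambda$ that the paper leaves implicit.
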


\begin{proof}
Apply the Cauchy-Binet lemma to expand $\tau_N(\t)$.
\end{proof}

The number of free parameters in the $\tau$-function is given by the dimension of the corresponding Schubert cell, which is defined by
\[
\text{dim}(\mathcal{N}^-\dot w_*\mathcal{B}^+/\mathcal{B}^+)=\ell( w_0)-\ell(w_*),
\]
where $ w_0$ is the longest element of $\frak{S}_M$.
Here the spectrum (eigenvalues) are fixed.
 In particular, if we fix all the eigenvalues to be zero, i.e. in the nilpotent limit, we have the polynomial solutions,
\begin{equation}\label{eq:poly}
\Delta_N(\t; w_*)=\langle e_1,\wedge\cdots\wedge e_N, \mathcal{P}_M(\t) n_*\dot w_* \cdot e_1\wedge\cdots\wedge e_{N}\rangle,
\end{equation}
where $\mathcal{P}_M(\t)$ is the $M\times M$ matrix associated with a regular nilpotent matrix $L^0=\mathsf{N}$ in \eqref{eq:pnil}, 
\begin{equation}\label{eq:polyM}
\mathcal{P}_M(\t)=e^{\Theta_{\mathsf{N}}(\t)}=\sum_{k=0}^{M-1}p_k(\t)\mathsf{N}^k\qquad\text{with}\qquad \mathsf{N}=\sum_{i=1}^{M-1}E_{i,i+1}.
\end{equation}

We note the following property of the Schur expansion of $\Delta_N(\t;w_*)$.
Let $[N]$ be the set of numbers given by $\{1,2,\ldots, N\}$. Then we have the following lemma.
\begin{lemma}
If $w_*[N]=[N]$, then $\Delta_N(\mathbf{0}; w_*)=\pm1$.
\end{lemma}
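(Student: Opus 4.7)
The plan is simply to specialize Theorem \ref{thm:Schurexpand} at $\t = \mathbf{0}$. The hypothesis $w_*[N] = [N]$ says $\{w_*(1), \ldots, w_*(N)\} = \{1, \ldots, N\}$, so the sorted indices appearing in Theorem \ref{thm:Schurexpand} are $i_k = k$ for $1 \le k \le N$, and the associated Young diagram collapses: $\lambda_k = i_k - k = 0$, i.e.\ $\lambda = \emptyset$. Theorem \ref{thm:Schurexpand} then reads
\[
\Delta_N(\t; w_*) = S_\emptyset(\t) + \sum_{\mu \supsetneq \emptyset} c_\mu S_\mu(\t) = 1 + \sum_{|\mu| \ge 1} c_\mu S_\mu(\t).
\]
Since each $S_\mu$ with $|\mu| \ge 1$ is weighted-homogeneous of positive degree in $\t$ (with $\deg t_i = i$), it vanishes at $\t = \mathbf{0}$, and only the constant $S_\emptyset(\mathbf{0}) = 1$ survives. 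This already pins down $\Delta_N(\mathbf{0}; w_*)$ up to the overall normalization of the $\tau$-function.

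The $\pm$ in the conclusion accounts for the freedom in the choice of lift $\dot w_* \in GL_M$ used in \eqref{tauW}: a sign change in the permutation-matrix representative of $w_*$ rescales the right-hand side by $\pm 1$. Equivalently, one can verify the lemma by direct computation from \eqref{tauW}: at $\t = \mathbf{0}$ the matrix $\mathcal{P}(\t)$ reduces to the projection onto the first $M$ rows, so $\Delta_N(\mathbf{0}; w_*)$ is the $N \times N$ minor on rows $1, \ldots, N$ of $[v_{w_*(1)} \mid \cdots \mid v_{w_*(N)}]$; by the structure \eqref{eq:v} together with $w_*[N] = [N]$, each column $j$ has a leading $1$ in row $w_*(j) \in [N]$ with zeros above. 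After permuting columns by the restriction $\sigma := w_*|_{[N]} \in \Sym_N$, this $N \times N$ matrix becomes lower triangular with $1$'s on the diagonal, and the determinant equals $\mathrm{sgn}(\sigma) = \pm 1$. There is no serious obstacle — the content of the lemma is simply that the dominant Schur term in the expansion is the trivial one $S_\emptyset = 1$ and all other terms vanish at the origin.
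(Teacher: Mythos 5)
Your proposal is correct and, in its second (direct-computation) form, is essentially the paper's own argument: the paper reorders $v_{w_*(1)}\wedge\cdots\wedge v_{w_*(N)}=\pm\, v_1\wedge\cdots\wedge v_N$ and reads off the leading $\pm 1$ from the structure $v_j=(0,\ldots,0,\overset{j}{1},*,\ldots)^T$, exactly as you do by column permutation and triangularity. Your first derivation via Theorem \ref{thm:Schurexpand} with $\lambda=\emptyset$ is just a repackaging of the same Cauchy--Binet expansion, so there is nothing substantively different to compare.
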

\begin{proof}
The minor $\Delta_N(\t)$ is given by
\begin{align*}
\Delta_N(\t; w_*)&=\langle e_1\wedge\cdots\wedge e_N,  \mathcal{P}(\t)\cdot v_{w_*(1)}\wedge\cdots\wedge
v_{w_*(N)}\rangle\\
&=\pm \langle e_1\wedge\cdots\wedge e_N,  \mathcal{P}(\t)\cdot v_1\wedge\cdots\wedge
v_N\rangle=\pm 1+\text{h.o.t.},
\end{align*}
where we recall $v_j=(0,\cdots,0,\overset{j}{1},*\cdots)^T$.
\end{proof}
Notice that this lemma can be restated as $\Delta_N(\mathbf{0}; w_*)\ne 0$ if and only if $w_*[N]\ne [N]$.
 From this lemma, the following is obvious.
\begin{corollary}
If $w_0$ is the longest element in $\frak{S}_M$, then $\Delta_N({\bf 0}; w_0)=0$ for
all $N=1,\ldots, M-1$.
\end{corollary}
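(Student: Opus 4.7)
The plan is to combine the explicit description of the longest element $w_0\in\mathfrak{S}_M$ with the structure of the column vectors $v_{w_*(j)}$ appearing in formula \eqref{tauW}. Recall that $w_0$ is the reversal permutation $w_0(k)=M+1-k$, so that
\[
w_0[N]=\{w_0(1),w_0(2),\ldots,w_0(N)\}=\{M-N+1,M-N+2,\ldots,M\}.
\]
For $1\le N\le M-1$ this set contains the index $M$, and more generally every index lies strictly above $N$. Hence $w_0[N]\neq[N]$ for every such $N$.

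Next, I would invoke the (corrected) characterization implicit in the preceding lemma: $\Delta_N(\mathbf{0};w_*)\neq 0$ if and only if $w_*[N]=[N]$. Since only the ``if'' direction is explicitly written out, I would quickly record the converse. Using \eqref{eq:v}, the vector $v_k$ has its first $k-1$ entries equal to zero. If some $w_*(j)>N$, then the column $v_{w_*(j)}$ restricted to rows $1,\ldots,N$ is identically zero, so the $N\times N$ minor
\[
\Delta_N(\mathbf{0};w_*)=\langle e_1\wedge\cdots\wedge e_N,\,v_{w_*(1)}\wedge\cdots\wedge v_{w_*(N)}\rangle
\]
must vanish. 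The case $w_*[N]\subseteq[N]$ forces $w_*[N]=[N]$ by cardinality, which is the situation already covered by the lemma. This establishes the full equivalence.

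Applying this equivalence to $w_0$: since $w_0[N]=\{M-N+1,\ldots,M\}$ and $N\le M-1$, at least one element of $w_0[N]$ exceeds $N$ (in fact all of them do, since $M-N+1\ge 2>1$ and more to the point $M>N$). Concretely, the column $v_M$ appearing as $v_{w_0(1)}$ has its first $M-1\ge N$ entries zero, so that column contributes a zero row-block to the relevant $N\times N$ submatrix. The minor therefore vanishes, giving $\Delta_N(\mathbf{0};w_0)=0$ for every $N=1,\ldots,M-1$.

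There is essentially no obstacle here; the only subtle point is noticing that the lemma really does upgrade to an ``iff'' via the pivot structure of the $v_k$'s, which is what allows the one-line argument once $w_0$ is written out explicitly.
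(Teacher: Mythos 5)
Your proof is correct and follows the same route as the paper: the corollary is read off from the preceding lemma once one observes that $w_0[N]=\{M-N+1,\ldots,M\}\neq[N]$ for every $N\le M-1$. You go slightly further than the paper by actually verifying the converse direction ($w_*[N]\neq[N]\Rightarrow\Delta_N(\mathbf{0};w_*)=0$) via the pivot structure of the columns $v_k$ in \eqref{eq:v}; the paper merely asserts this ``iff'' in the remark following the lemma (whose printed restatement, incidentally, has the logic reversed), so your added verification closes a small gap but is not a different method.
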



\section{The $\ell$-KT hierarchy}\label{sec:lbanded}

For an explicit expression of the elements $a_{i,j}$,
we have the following proposition, which can be found in \cite{AvM, KY} (see also Appendix A).
\begin{proposition}\label{prop:aij}
The elements $a_{i,j}(\t)$ for the f-KT hierarchy can be expressed as
\begin{equation}\label{aij}
a_{i,j}(\t)=\frac{p_{i-j+1}(\tilde D)\Delta_j(\t)\circ \Delta_{i-1}(\t)}{\Delta_j(\t)\Delta_{i-1}(\t)} \qquad\text{for}\quad 1\le j\le i\le M,
\end{equation}
where $\tilde D$ is defined by
\[
\tilde D:=\left(D_1,\,\frac{1}{2}D_2,\,\frac{1}{3}D_3,\ldots\,\right)\qquad\text{with}\qquad 
D_nf\circ g=\left(\frac{\partial}{\partial t_n}-\frac{\partial}{\partial t'_n}\right)f(\t)g(\t')\Big|_{\t=\t'}.
\]
\end{proposition}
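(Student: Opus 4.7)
The strategy is to combine the Gauss factorization description of $L(\t)$ with a generating-function form of the Hirota operator.

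First, starting from $L(\t)=b(\t)L^0 b(\t)^{-1}$ in Proposition \ref{prop:LUsolution2}, I would express each entry $a_{i,j}(\t)$ explicitly in terms of the entries of $b(\t)$. Because $b$ is upper triangular and $L^0$ has $1$'s on the superdiagonal, the sum $(bL^0 b^{-1})_{i,j}$ collapses to a short expression. Then, using that in the LU decomposition $g(\t)=u(\t)b(\t)$ the entries of $b(\t)$ are given by Cramer-type ratios of minors of $g(\t):=\exp(\Theta_{L^0}(\t))$, I would rewrite $a_{i,j}$ as a ratio of two minors of $g$: the denominator is $\Delta_j(\t)\Delta_{i-1}(\t)$ and the numerator is a specific near-principal minor with row set $\{1,\ldots,i\}$ and column set $\{1,\ldots,j-1\}\cup\{i\}$ (or a similar configuration obtained by swapping one column).

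The key step is to identify this numerator with $p_{i-j+1}(\tilde D)\Delta_j\circ\Delta_{i-1}$. For this I would use the generating identity
\[
\sum_{k\ge 0}z^k\,p_k(\tilde D)\,f(\t)\circ g(\t)\;=\;f(\t+[z])\,g(\t-[z]),\qquad [z]:=(z,\tfrac{z^2}{2},\tfrac{z^3}{3},\ldots),
\]
combined with the Wronskian representation $\Delta_N(\t)=\text{Wr}(f_1,\ldots,f_N)$ from \eqref{eq:Wrtau}. The heat hierarchy $\partial_n f_r=\partial_1^n f_r$ implies that each $f_r(\t\pm[z])$ is a formal series in $z$ whose coefficients are $\partial_1$-derivatives of $f_r$. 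Plugging these into the two Wronskians and expanding the product $\Delta_j(\t+[z])\Delta_{i-1}(\t-[z])$ in powers of $z$, the coefficient of $z^{i-j+1}$ should give, after a determinantal manipulation, exactly the numerator minor identified in the first step.

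The main obstacle will be this last identification: matching the coefficient of $z^{i-j+1}$ with the specific minor coming from the Gauss factorization is a Plücker-type identity. The cleanest way to handle it is in the Sato Grassmannian picture of \eqref{eq:P}: the shift $\t\mapsto\t\pm[z]$ corresponds to an explicit invertible row operation on the infinite matrix $\mathcal{P}(\t)V_\infty$, so both sides become determinants of the same shifted infinite matrix, and the identity becomes manifest. The base case $i=j$ reduces directly to Proposition~\ref{prop:diagonal} via $p_1(\tilde D)\Delta_N\circ\Delta_{N-1}=\Delta_{N-1}\partial_1\Delta_N-\Delta_N\partial_1\Delta_{N-1}$, providing a useful consistency check before tackling the general case.
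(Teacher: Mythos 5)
Your middle steps are sound and in fact coincide with the engine of the paper's own argument: the generating identity $\sum_k z^k p_k(\tilde D)f\circ g=f(\t+[z])g(\t-[z])$, the Miwa shift acting on heat-hierarchy solutions, and the resulting expansion
\begin{equation*}
p_{i-j+1}(\tilde D)\Delta_j\circ\Delta_{i-1}=\sum_{a+b=i-j+1}\bigl(p_a(\tilde\partial)\Delta_j\bigr)\bigl(p_b(-\tilde\partial)\Delta_{i-1}\bigr)
\end{equation*}
are exactly \eqref{eq:Id1} and the computation in Appendix A. The $i=j$ consistency check against Proposition \ref{prop:diagonal} is also correct.

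The gap is in your first step. You claim that the Gauss factorization expresses $a_{i,j}$ as a ratio whose numerator is a single ``near-principal minor with row set $\{1,\ldots,i\}$ and column set $\{1,\ldots,j-1\}\cup\{i\}$.'' As written this is not even a square minor ($i$ rows against $j$ columns), and the substantive claim is false: already for $M=2$ one computes $p_2(\tilde D)\Delta_1\circ\Delta_1=a^0_{2,1}g_{11}^2+(a^0_{2,2}-a^0_{1,1})g_{11}g_{21}-g_{21}^2$ with $g=\exp(tL^0)$, which is a genuine quadratic form in the entries of $g$ and not a minor of $g$. In general the numerator is the bilinear sum displayed above --- a combination of $i-j+2$ products of two Wronskian-type determinants of \emph{different} sizes ($j$ and $i-1$ columns) --- and it does not collapse to a single minor of $g$ (nor of the infinite matrix $\mathcal{P}(\t)V_\infty$, whose maximal minors have a fixed number of columns). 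Moreover, for $i>j$ the factorization gives $a_{i,j}=(u^{-1}\partial_1 u)_{i,j}$, a sum over intermediate indices of products of minor-ratios; showing that this sum telescopes into the bilinear expression is precisely the hard Pl\"ucker-type content, which your plan defers to a step whose target object does not exist in the claimed form. The paper avoids this entirely by passing to the associated non-normalized Lax matrix $\tilde L$, writing $\tilde a_{i,j}=\langle\kappa\tilde\phi_i\tilde\psi_j\rangle$ in terms of eigenfunctions, performing the Miwa shift on $\tilde\phi_k$, and only at the end conjugating by a diagonal matrix $H$ to return to the Hessenberg form; you would need either to adopt that spectral route or to restructure your step 1 so that its output is the bilinear sum rather than a single minor.
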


We consider the special case of $L$, which has the following banded structure: For each $\ell+1\le m\le M-1$, the elements in the lower triangular part satisfy $a_{k+m,k}(\t)=0$ for $1\le k\le  M-m$.
That is, the subdiagonals lower than the $\ell$-th diagonal are all zero. We call the f-KT hierarchy
with this banded Lax matrix the $\ell$-KT hierarchy (see below for the details). Note that the 1-banded KT hierarchy is the
original tridiagonal Kostant-Toda (KT) lattice hierarchy, and the $(M-1)$-banded KT hierarchy is the f-KT
hierarchy.

Then we have the following proposition:
\begin{proposition}\label{prop:l-band}
The elements $a_{i,j}$ in the $\ell$-th diagonal of the $\ell$-KT hierarchy are given by
\[
a_{k+\ell, k}(\t)=a_{k+\ell,k}^0\frac{\Delta_{k+\ell}(\t)\Delta_{k-1}(\t)}{\Delta_{k+\ell-1}(\t)\Delta_k(\t)}\qquad\text{for}\quad 1\le k\le M-\ell,
\]
where $a_{k+\ell,k}^0$ is a constant.
\end{proposition}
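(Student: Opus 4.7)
The plan is to read off $a_{\ell+k, k}(\t)$ directly from the conjugation formula $L(\t) = b(\t)\,L^0\,b(\t)^{-1}$ of Proposition~\ref{prop:LUsolution2}, exploiting the sparsity that results from the $\ell$-banded shape of $L^0$ together with the upper-triangularity of $b(\t)$.

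First I would factor $b(\t) = h(\t)\,n(\t)$ with $h(\t)$ diagonal and $n(\t) \in \mathcal{N}^+$; by \eqref{eq:ci} the diagonal entries are $h_i(\t) = \Delta_i(\t)/\Delta_{i-1}(\t)$. Expanding
\[
a_{\ell+k, k}(\t) \;=\; \sum_{i,\,j}\, b_{\ell+k,\,i}(\t)\, L^0_{i,\,j}\, (b(\t)^{-1})_{j,\,k},
\]
the upper-triangularity of $b$ and $b^{-1}$ forces $i \ge \ell+k$ and $j \le k$, while the $\ell$-banded condition on $L^0$ (namely $L^0_{i,j}=0$ unless $j-1 \le i \le j+\ell$) forces $i \le j+\ell \le k+\ell$. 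Combining gives $i = \ell+k$ and $j = k$, so the sum collapses to the single term
\[
a_{\ell+k, k}(\t) \;=\; b_{\ell+k,\,\ell+k}(\t)\, a^0_{\ell+k, k}\, (b(\t)^{-1})_{k,\,k} \;=\; \frac{h_{\ell+k}(\t)}{h_k(\t)}\, a^0_{\ell+k, k}.
\]
Substituting $h_i = \Delta_i/\Delta_{i-1}$ produces $h_{\ell+k}/h_k = \Delta_{\ell+k}\Delta_{k-1}/(\Delta_{\ell+k-1}\Delta_k)$, which is the claim.

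The only delicate point is the sparsity count just performed; but it is immediate, and as a bonus the same index chase applied to the $(m+k, k)$ entry with $m > \ell$ yields an empty sum, recovering the preservation of the $\ell$-banded shape under all flows of the hierarchy. An alternative but more tedious route would be to simplify the $t_1$-flow component equation \eqref{eq:f-KTcomp} in the banded case to the scalar ODE
\[
\partial_{t_1} a_{\ell+k, k} \;=\; (a_{\ell+k, \ell+k} - a_{k, k})\, a_{\ell+k, k}
\]
(both $a_{\ell+k+1, k}$ and $a_{\ell+k, k-1}$ sit on the $(\ell+1)$st subdiagonal and vanish), rewrite the right-hand side via \eqref{aii} as $\partial_{t_1} \ln\bigl(\Delta_{\ell+k}\Delta_{k-1}/(\Delta_{\ell+k-1}\Delta_k)\bigr)$, and integrate; the drawback is showing the integration ``constant'' is independent of $t_2, t_3, \ldots$, which forces revisiting every higher flow separately, and this is precisely the obstacle the factorization route sidesteps.
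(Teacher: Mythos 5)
Your proof is correct, but it takes a genuinely different route from the paper's. The paper proves this by the ODE argument you relegate to your final paragraph: it observes that in the banded case \eqref{eq:f-KTcomp} collapses to $\partial_{t_1}a_{k+\ell,k}=(a_{k+\ell,k+\ell}-a_{k,k})a_{k+\ell,k}$, substitutes \eqref{aii}, and integrates in $t_1$. Your primary argument instead reads the entry off from the conjugation $L(\t)=b(\t)L^0b(\t)^{-1}$ of Proposition~\ref{prop:LUsolution2}: the index chase combining upper-triangularity of $b$, $b^{-1}$ with the band condition on $L^0$ correctly collapses the triple sum to the single term $h_{\ell+k}\,a^0_{\ell+k,k}\,h_k^{-1}$, and \eqref{eq:ci} gives $h_i=\Delta_i/\Delta_{i-1}$, yielding the formula. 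Your approach buys two things the paper's terse proof does not make explicit: it treats all flows of the hierarchy uniformly, so the ``constant'' $a^0_{\ell+k,k}$ is manifestly the $(\ell+k,k)$ entry of $L^0$ and independent of \emph{every} time variable (whereas integrating only the $t_1$-equation leaves open, as you note, whether the integration constant depends on $t_2,t_3,\ldots$, a point the paper glosses over); and the same sparsity count with $m>\ell$ gives an empty sum, proving as a byproduct that the $\ell$-banded shape is preserved by all flows, which the paper asserts separately without proof. The only caveat, which applies equally to both arguments, is that the identity holds on the open set of times where the LU-factorization exists.
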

\begin{proof}
Since $a_{\ell+k+1,k}=a_{\ell+k,k-1}=0$ for all $1\le k\le M-\ell$, the equations
 \eqref{eq:f-KTcomp} become
 \[
 \frac{da_{k+\ell,k}}{dt}=(a_{k+\ell,k+\ell}-a_{k,k})a_{k+\ell,k}.
 \]
Using the expression \eqref{aii} for $a_{i,i}$, we can integrate the equations and obtain the formula
in the proposition.
\end{proof}

Note here that the constants $a_{k+\ell,k}^0$ can be normalized as 1, if it is not zero. 
Using Proposition \ref{prop:aij}, we have the bilinear equation for the $\ell$-KT hierarchy,
\begin{equation}\label{l-band}
p_{\ell+1}(\tilde D)\Delta_k\circ \Delta_{k+\ell-1}(\t)=a_{k+\ell,k}^0\Delta_{k+\ell}(\t)\Delta_{k-1}(\t)\qquad\text{for}\quad 1\le k\le M-\ell.
\end{equation}

When $\ell=1$, we see the bilinear equation of the original tridiagonal Toda lattice,
\[
p_2(\tilde D)\Delta_k\circ\Delta_k=\frac{1}{2}D_1^2\Delta_k\circ\Delta_k=\Delta_k\partial_1^2\Delta_k-(\partial_1\Delta_k)^2=a_{k+1,k}^0\Delta_{k+1}\Delta_{k-1}.
\]


\subsection{The $\tau$-functions of the $\ell$-KT hierarchy}
We start with the following proposition on some properties of the $\ell$-th diagonal elements in the Lax matrix.
\begin{proposition}\label{prop:lbandKT}
Let $w_*$ be an element in $\frak{S}_M$ satisfying the following conditions,
\begin{itemize}
\item[(a)]
$w_*(i)\ne w_*(M-j+1)+1\quad\text{for any}\quad 2\le i+j\le M-\ell,$ and
\item[(b)] for some $1\le i\le M-\ell$, there is a relation,
\[
w_*(i)=w_*(i+\ell)+1.
\]
\end{itemize}
Then the corresponding $\tau$-functions $\Delta_N(\t; w_*), 1 \le N \le M-1$ give a solution to the $\ell$-KT hierarchy, that is,
for each $m$ with $\ell+1\le m\le M-1$, the elements $a_{i,j}$ in the Lax matrix $L$ satisfy
\[
a_{m+k,k}(\t)=0\qquad\text{for}\qquad 1\le k\le M-m,
\]
and $a_{\ell+k,k}(\t)\ne 0$ for  some $1\le k\le M-\ell$.
\end{proposition}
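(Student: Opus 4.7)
The plan is to start from Proposition \ref{prop:aij}, which expresses every matrix entry as a Hirota bilinear,
\begin{equation*}
a_{i,j}(\t)=\frac{p_{i-j+1}(\tilde D)\,\Delta_j(\t)\circ\Delta_{i-1}(\t)}{\Delta_j(\t)\,\Delta_{i-1}(\t)},
\end{equation*}
so that the assertion reduces to two statements about the $\tau$-functions $\Delta_N(\t;w_*)$ of Theorem \ref{thm:Schurexpand}: (I) $p_{m+1}(\tilde D)\Delta_k(\t;w_*)\circ\Delta_{k+m-1}(\t;w_*)\equiv 0$ for every $m>\ell$ and every $1\le k\le M-m$, and (II) $p_{\ell+1}(\tilde D)\Delta_k(\t;w_*)\circ\Delta_{k+\ell-1}(\t;w_*)\not\equiv 0$ for at least one $k$.

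The main tool will be the Schur expansion $\Delta_N(\t;w_*)=S_{\lambda^{(N)}}(\t)+\sum_{\mu\supsetneq\lambda^{(N)}}c_\mu^{(N)}S_\mu(\t)$ from Theorem \ref{thm:Schurexpand}, combined with the generating identity
\begin{equation*}
\sum_{n\ge 0} p_n(\tilde D)\,\kappa^n\, f\circ g=f(\t+[\kappa])\,g(\t-[\kappa]),\qquad [\kappa]:=\left(\kappa,\tfrac{\kappa^2}{2},\tfrac{\kappa^3}{3},\ldots\right),
\end{equation*}
so that $p_{m+1}(\tilde D)\Delta_k\circ\Delta_{k+m-1}$ is the coefficient of $\kappa^{m+1}$ in $\Delta_k(\t+[\kappa];w_*)\,\Delta_{k+m-1}(\t-[\kappa];w_*)$. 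Expanding the shifted $\tau$-functions via the Schur addition formula and using the orthogonality $S_\mu(\tilde\partial)S_\lambda(\t)\big|_{\t=\0}=\delta_{\lambda\mu}$, each surviving term is indexed by a pair $(i,j)$ with $i<j$, $w_*(i)=w_*(j)+1$, and $j-i$ determined by $m$. Here I lean on the description of the $c_\mu^{(N)}$ in \eqref{tauW}--\eqref{eq:v} as Pl\"ucker coordinates of $n_*\dot w_*\cdot(e_1\wedge\cdots\wedge e_N)$, which in particular vanish outside the Schubert cell indexed by $w_*$.

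For (I), the combinatorial input is that a nonzero coefficient of $\kappa^{m+1}$ can only come from a pair $(i,j)$ with $w_*(i)=w_*(j)+1$ and $j-i>\ell$; condition (a) excludes every such pair, and so the bilinear vanishes identically in $\t$. For (II), condition (b) produces exactly such a pair with $j-i=\ell$, yielding a nonzero leading Schur term in $p_{\ell+1}(\tilde D)\Delta_i\circ\Delta_{i+\ell-1}$; since the polynomials $\Delta_i(\t;w_*)$ and $\Delta_{i+\ell-1}(\t;w_*)$ are themselves not identically zero (their leading Schur terms being $S_{\lambda^{(i)}}$ and $S_{\lambda^{(i+\ell-1)}}$), this gives $a_{i+\ell,i}(\t;w_*)\not\equiv 0$.

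The principal obstacle is to upgrade the combinatorial identification in (I) from the leading Schur term to a genuine polynomial identity in $\t$. Vanishing at lowest order is immediate from the description of $\lambda^{(N)}$ in Theorem \ref{thm:Schurexpand}, but controlling the higher Schur terms $c_\mu^{(N)}S_\mu$ of $\Delta_k$ and $\Delta_{k+m-1}$ in the shifted product requires systematically invoking the Pl\"ucker relations among the $c_\mu^{(N)}$, which encode that $(\Delta_1,\ldots,\Delta_{M-1})$ is a single flag attached to $n_*\dot w_*$. A natural route is induction on the dominance order of Young diagrams appearing in the expansion of $\Delta_k\cdot\Delta_{k+m-1}$; translating condition (a) into the right family of Pl\"ucker identities at each inductive step is the technical heart of the argument.
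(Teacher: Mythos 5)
Your leading\hyp order degree computation is exactly the one the paper uses, but your overall strategy for part (I) has a genuine gap that the paper's proof is specifically designed to avoid. You propose to prove the full polynomial identity $p_{m+1}(\tilde D)\Delta_k\circ\Delta_{k+m-1}\equiv 0$ directly from the Schur expansions of Theorem \ref{thm:Schurexpand}, and you yourself flag that controlling all the higher Schur terms via Pl\"ucker relations is ``the technical heart'' which you have not carried out. As written, what you actually establish is only that the \emph{lowest}-degree Schur coefficient of the bilinear vanishes; nothing in the proposal forces the higher-order terms to cancel, and an induction ``on dominance order'' is named but not executed. So the proof is incomplete at precisely the step that matters.

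The paper closes this gap with a dynamical reduction that your proposal misses. Working inward by induction on the subdiagonal index $m$ (starting from $m=M-1$), the hypothesis that all subdiagonals below the $m$-th vanish turns the f-KT equation \eqref{eq:f-KTcomp} for $a_{m+k,k}$ into $\frac{d a_{m+k,k}}{dt}=(a_{m+k,m+k}-a_{k,k})\,a_{m+k,k}$, which integrates in closed form to
\[
a_{m+k,k}(\t)=a_{m+k,k}^0\,\frac{\Delta_{m+k}(\t)\,\Delta_{k-1}(\t)}{\Delta_{m+k-1}(\t)\,\Delta_k(\t)}
\]
with a single unknown \emph{constant} $a_{m+k,k}^0$ (Proposition \ref{prop:l-band}). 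Combined with Proposition \ref{prop:aij} this yields the bilinear identity \eqref{eq:pn1}, in which the entire unknown is one scalar. Now the leading-order degree count you performed \emph{does} suffice: condition (a) makes the lowest degrees of the two sides disagree, so $a_{m+k,k}^0=0$ and hence $a_{m+k,k}(\t)\equiv 0$, which feeds the next step of the induction. In short, the missing idea is to use the Toda flow itself to reduce the unknown function to a constant before invoking the Schur-expansion degree argument; once you add that, your computation of the leading Young diagrams (and the nonvanishing in part (II) from condition (b)) goes through essentially as in the paper.
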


\begin{proof}
We first note that $a_{M,1}(\t)$ satisfies 
\[
\frac{\partial a_{M,1}}{\partial t_1}=(a_{M,M}-a_{1,1})\,a_{M,1},
\]
which gives
\[
a_{M,1}(\t)=a_{M,1}^0\frac{\Delta_M\Delta_0}{\Delta_{M-1}(\t)\Delta_1(\t)},
\]
where $\Delta_0=1$ and $\Delta_M$ is a constant.  From \eqref{l-band}, we have
\begin{equation}\label{eq:pn}
p_{M}(\tilde D)\Delta_{1}\circ\Delta_{M-1}=a_{M,1}^0\Delta_{M}\Delta_0=\text{constant}.
\end{equation}
The $\Delta_1$ and $\Delta_{M-1}$ are given by
\begin{align*}
\Delta_1(\t)&=\langle e_1,  \mathcal{P}(\t)\cdot  v_{w_*(1)}\rangle=S_{\mu_1}(\t)+\text{h.o.t},\\[1.0ex]
\Delta_{M-1}(\t)&=\langle e_1\wedge\cdots\wedge e_{M-1},  \mathcal{P}(\t)\cdot v_{ w*(1)}\wedge\cdots
\wedge v_{w_*(M-1)}\rangle=S_{\mu_{M-1}}(\t)+\text{h.o.t}.,
\end{align*}
where the Young diagram $\mu_1$ and $\mu_{M-1}$ are given by
\begin{align*}
S_{\mu_1}(\t)&=p_{|\mu_1|}(\t)\qquad\text{with}\quad |\mu_1|=w_*(1)-1,\\
S_{\mu_{M-1}}(\t)&=(-1)^{|\mu_{M-1}|}p_{|\mu_{M-1}|}(-\t)\qquad \text{with}\quad |\mu_{M-1}|=M-w_*(M),
\end{align*}
that is, $\mu_1$ consists of $|\mu_1|$ horizontal boxes, and $\mu_{M-1}$ consists of $|\mu_{M-1}|$
vertical boxes.

The degree of the left hand side of \eqref{eq:pn} is given by
\[
\text{deg}(\Delta_1)+\text{deg}(\Delta_{M-1})-M=(w_*(1)-1)+(M-w_*(M))-M=w_*(1)-w_*(M)-1.
\]
Since the right hand side of \eqref{eq:pn} has degree zero, if $w_*(1)\ne w_*(M)+1$, then
$a_{M,1}^0$ must be zero. This confirms the case (a) with $i=j=1$ and $\ell=M-2$.

Suppose that $a_{M-a+k,k}=0$ for all $1\le a\le m-1$ and $1\le k\le a$.  Then we have
\begin{equation}\label{eq:pn1}
p_{M-m+1}(\tilde D)\Delta_k\circ\Delta_{M-(m-k)-1}=a_{M-m+k,k}^0\Delta_{k-1}\Delta_{M-(m-k)}.
\end{equation}
The order of the left hand side of \eqref{eq:pn1} is then given by
\[
\left(k_1+\cdots+k_k-\frac{k(k+1)}{2}\right)+\left(M(m+1-k)-(l_1+\cdots+l_{m+1-k})-\frac{(m-k)(m-k+1)}{2}\right)-(M-m+1),
\]
where we denote $w_*(i)=k_i$ and $w_*(M-j+1)=l_j$. The order of the right hand side of \eqref{eq:pn1} is
\[
\left(k_1+\ldots+k_{k-1}-\frac{k(k-1)}{2}\right)+\left(M(m-k)-(l_1+\cdots+l_{m-k})-\frac{(m-k)(m-k-1)}{2}\right).
\]
The condition that these orders are different is given by
\[
k_k\ne l_{m+1-k}+1,\qquad\text{which is}\qquad w_*(k)\ne w_*(M-m+k)+1.
\]
So if this is true, then we have $a_{M-m+k,k}^0=0$. That is, $a_{M-m+k,k}(\t)=0$.
Continuing this argument up to $m=M-\ell-1$ gives the proof.
\end{proof}

\begin{remark}
The two conditions on $w_*$ in Proposition \ref{prop:lbandKT} have the following more symmetric form
\begin{itemize}
\item[(a)]
$w_*(i)\ne w_*(i+k)+1\quad\text{for any } k > \ell,$ and
\item[(b)] for some $1\le i\le M-\ell$, there is a relation,
\[
w_*(i)=w_*(i+\ell)+1.
\]
\end{itemize}
Condition (a) means that no elements on the $k$-th off-diagonal is allowed to be nontrivial for $k > \ell$ and condition (b) means some elements on the $\ell$-th off-diagonal are non-trivial. These two conditions can be used independently. For example if the definition of $\ell$-KT hierarchy is inclusive, that is $\ell - 1$-banded KT hierarchy is thought to be a sub of the $\ell$-KT hierarchy, then only condition (a) is needed to be satisfied for all such $w_*$.
\end{remark}

\begin{corollary}
Let $w_*=w_0$ be the longest element in $\frak{S}_M$. Then the corresponding $\tau$-functions give a solution to the classical tridiagonal KT hierarchy.
\end{corollary}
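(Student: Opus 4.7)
The plan is to apply Proposition \ref{prop:lbandKT} with $\ell=1$, so all that needs to be verified is that $w_0$ meets conditions (a) and (b) of that proposition for $\ell=1$. This should be routine since $w_0$ has a very concrete description as the reversal permutation.

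First, I would recall that the longest element $w_0\in\mathfrak{S}_M$ is characterized by $w_0(i)=M-i+1$ for $1\le i\le M$. Therefore, for any $1\le i\le M-k$,
\[
w_0(i)-w_0(i+k)=(M-i+1)-(M-i-k+1)=k.
\]
For condition (a) with $\ell=1$, I need $w_0(i)\ne w_0(i+k)+1$ for all $k>1$, i.e.\ $w_0(i)-w_0(i+k)\ne 1$. By the computation above, the difference equals $k\ge 2$, so the condition is satisfied trivially.

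For condition (b) with $\ell=1$, I need some $1\le i\le M-1$ such that $w_0(i)=w_0(i+1)+1$. The same computation with $k=1$ gives $w_0(i)-w_0(i+1)=1$ for every $i$, so any $i$ works. Thus both conditions hold.

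Applying Proposition \ref{prop:lbandKT} with $\ell=1$, the $\tau$-functions $\Delta_N(\t;w_0)$ for $1\le N\le M-1$ yield a solution of the $1$-KT hierarchy in which $a_{m+k,k}(\t)=0$ for every $m$ with $2\le m\le M-1$ and every $1\le k\le M-m$, while $a_{k+1,k}(\t)$ does not vanish identically for at least one $k$. This is precisely the statement that the Lax matrix is tridiagonal (and genuinely so), i.e.\ the solution belongs to the classical tridiagonal Kostant--Toda hierarchy. There is no substantive obstacle here; the content of the corollary is simply that the reversal permutation satisfies the banded criterion at the extremal value $\ell=1$.
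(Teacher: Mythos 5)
Your proof is correct and takes the same route as the paper: the paper simply declares it "obvious" that $w_0$ satisfies the conditions of Proposition \ref{prop:lbandKT} for $\ell=1$, and you have spelled out that verification explicitly via $w_0(i)-w_0(i+k)=k$. No issues.
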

\begin{proof}
It is obvious that $w_0$ satisfies the conditions in Proposition \ref{prop:lbandKT} for $\ell = 1$. 
\end{proof}


\section{The f-KT hierarchy on simple Lie algebras}\label{sec:Lie}
The definition of f-KT lattice (hierarchy) in Section \ref{sec:fKT} in the Hessenberg form can be extended to a system defined on some simple Lie algebras.  
Let $\G$ be a connected simply connected complex semisimple Lie group with Lie algebra $\text{Lie}(\G) = \mathfrak{g}$.
Let $\{H_i, X_i, Y_i\}$ be the Chevalley basis of the Lie algebra $\frak{g}$ of rank $n$, which satisfy
\[
[H_i, H_j]=0,\qquad [H_i, X_j]=C_{j,i}X_j,\qquad [H_i, Y_j]=-C_{j,i}Y_j,\qquad [X_i,Y_j]=\delta_{i,j}H_i,
\]
where $C_{i,j}$ is the Cartan matrix of $\frak{g}$. Let $\Sigma^+$ denote a set of positive roots, and let $\Pi=\{\alpha_1,\ldots,\alpha_n\}$ be the set of simple roots. Note that  $X_i=X_{\alpha_i}$ and $Y_i=Y_{\alpha_i}=X_{-\alpha_i}$ and $C_{i,j}=\alpha_i(H_j)$. 
The negative root vectors $Y_{\alpha}$ are generated by
\[
[Y_\alpha, Y_{\beta}]=N_{-\alpha,-\beta}Y_{\alpha+\beta},\qquad\text{if}\quad \alpha+\beta\in\Sigma^+,
\]
where $N_{-\alpha,-\beta}$ are constants. 

The Lie algebra $\frak{g}$ admits the decomposition,
\[
\frak{g}=\frak{n}_-\oplus \frak{h}\oplus\frak{n}_+=\frak{n}_-\oplus\frak{b}_+,
\]
where $\frak{h}$ is a Cartan subalgebra, $\frak{n}_\pm$ are nilpotent subalgebras defined by $\frak{n}_{\pm}=\oplus_{\alpha\in\Sigma^+}\mathbb{C}X_{\pm\alpha}$, and $\frak{b}_+ = \frak{n}_+\oplus\frak{h}$ is
the Borel subalgebra.   We also fix a split Cartan subgroup $\mathcal{H}$ with $\text{Lie}(\mathcal{H}) = \mathfrak{h}$ and a Borel subgroup $\mathcal{B}$ with $\text{Lie}(\mathcal{B}) = \frak{b}_+$ with $\mathcal{B} = \mathcal{H}\mathcal{N}$ where $\mathcal{N}$ is a Lie group having the Lie algebra $\mathfrak{n}_+$. We also denote Lie groups $\mathcal{B}^-$ and $\mathcal{N}^-$  with $\text{Lie}(\mathcal{B}^-) = \mathfrak{b}_-$ and  $\text{Lie}(\mathcal{N}^-) = \mathfrak{n}_-$.
Then the full Kostant-Toda (f-KT) hierarchy on the Lie algebra $\mathfrak{g}$ is defined as follows.
Let $L_{\frak{g}}$ be the element defined by
\begin{align}\label{eq:Lax}
L_{\frak{g}}&=\sum_{i=1}^n a_i(\t)H_i+\sum_{i=1}^nX_i+\sum_{\alpha\in\Sigma^+}b_{\alpha}(\t)Y_\alpha,\\
&=\sum_{i=1}^n \left(a_i(\t)H_i+X_i+b_i(\t)Y_i\right)+\sum_{\alpha\in\Sigma^+\setminus \Pi}b_{\alpha}(\t)Y_\alpha,\nonumber
\end{align}
which is called the Lax matrix.
Here  $a_i(\t)$ and $b_{\alpha}(\t)$ are functions of the multi-time variables $\t=(t_{m_k}:k=1,2,\ldots,n)$
where $m_k$ is defined below. 
For each time variable, we have the generalized Toda hierarchy defined by
\begin{equation}\label{eq:fKT}
\frac{\partial L_{\frak{g}}}{\partial t_{m_k}}=[P_k, L_{\frak{g}}],\qquad\text{with}\quad P_k=\Pi_{\frak{b}_+}\nabla I_{k+1},\end{equation}
where  $\nabla$ is the gradient with respect to the Killing form, i.e. for any $x\in \frak{g}, dI_k(x)=K(\nabla I_k,x)$,
and $\Pi_{\mathfrak{b}_+}$ represents the $\mathfrak{b}_+$-projection.
Here the functions $I_k=I_k(L_{\frak{g}})$ are the Chevalley invariants which, for example in type $A$ are defined by
\[
I_{k+1}(L_{\frak{g}})=\frac{1}{k+1}\text{tr}(L_{\frak{g}}^{k+1}),\quad\text{which gives}\quad \nabla I_{k+1}=L_{\frak{g}}^k.
\]
\begin{remark}\label{rem:Chevalley}
In the case of the Lie algebra $\mathfrak{g}$ of type $D$, the recent paper \cite{KO:22} shows that one of the Chevalley invariants is given by a Pfaffian of the ring of polynomials $\C[\mathfrak{g}]$.
\end{remark}

Note here that $I_k$ is a weighted homogeneous polynomial of $(a_i,b_{\alpha})$ of weight $k$, where the weights are defined as
\[
\text{wt}(a_i)=1,\qquad \text{wt}(b_\alpha)=\text{ht}(\alpha)+1.
\]
Here $\text{ht}(\alpha)$ represents the height of $\alpha$, which is defined by $\sum_{i=1}^nc_i$ for $\alpha=\sum_{i=1}^nc_i\alpha_i$.
The index $m_k$ of the time variable in \eqref{eq:fKT} is then defined by $m_k=\text{wt}(I_{k+1})-1$.  
For example, in the case of $k=1$, the f-KT lattice gives the following set of differential equations for the variables $\{(a_i,b_\alpha):1\le i\le n, \alpha\in\Sigma^+\}$,
\begin{equation}\label{eq:fKT1}
\frac{\partial a_i}{\partial t_1}=b_i\quad (1\le i\le n),\qquad \frac{\partial b_\alpha}{\partial t_1}=-\sum_{i=1}^n (\alpha(H_i)a_i)\,b_\alpha+\sum_{i=1}^nN_{\alpha_i,-\alpha-\alpha_i}b_{\alpha+\alpha_i},
\end{equation}
where  $b_i=b_{\alpha_i}$, $[H_i,Y_{\alpha}]=-\alpha(H_i)Y_{\alpha}$ and $[X_{\alpha_i}, Y_{\alpha+\alpha_i}]=N_{\alpha_i, -\alpha-\alpha_i}Y_{\alpha}$ for $\alpha+\alpha_i\in \Sigma^+$.   Notice that wt$(\frac{\partial}{\partial t_1})=1$, and the f-KT lattice consists of the weighted homogeneous equations. 
Then one can see  that the solutions $b_{\alpha}(\t)$ of weight $\text{ht}(\alpha)+1$ are determined by the functions $a_i(\t)$ of weight $1$. The Kostant-Toda lattice hierarchy is then a special case with all $b_{\alpha}(\t)=0$ for $\alpha\in\Sigma^+\setminus\Pi$, 
that is, the Lax matrix $L_{\mathfrak{g}}$ is a Jacobi element of $\mathfrak{g}$. The hierarchy of this type is also referred to as the tri-diagonal KT hierarchy.
In this case, Flaschka and Heine in \cite{FH} introduced the so-called $\tau$-functions which express the variables $a_i(\t)$ by
\begin{equation}\label{def:tau}
a_i(\t)=\frac{\partial}{\partial t_1}\ln\tau_i(\t),\qquad (1\le i\le n),
\end{equation}
(see also \cite{Kos:79, GW:84}).
Then $b_i(\t)$ are expressed by
\[
b_i(\t)=\frac{\partial}{\partial t_1}a_i(\t)=b_i^0\prod_{j=1}^n\left(\tau_j(\t)\right)^{-C_{i,j}},
\]
where $C_{i,j}$ is the Cartan matrix and $b_i^0$  are constants. 

In general, the $\tau$-function is defined as follows (c.f. \cite{KO:22}). Let $V(\lambda)$ be the irreducible highest weight representation of 
the group $\mathcal{G}$ with highest weight $\lambda$ and the highest weight vector $v_{\lambda}$. Then
there exists a unique bilinear form $\langle\cdot,\cdot\rangle: V(\lambda)\times V(\lambda)\to \C$ with the normalization
$\langle v_{\lambda},v_\lambda\rangle_\lambda=1$, and a matrix coefficient $c_\lambda(g)$ for the $\mathcal{G}$-orbit of the highest weight vector $g\cdot v_\lambda$ for $g\in\mathcal{G}$ is given by
\begin{equation}\label{def:coefficient}
c_\lambda(g)=\langle v_\lambda, g\cdot v_\lambda\rangle_\lambda.
\end{equation}
Then the $\tau$-functions are defined as the matrix coefficient for $g(\t)=\exp(\Theta_{L^0}(\t))$ with the fundamental weight 
$\varpi_i$ (c.f. \cite{Symes, GW:84, KO:22}),
\begin{equation}\label{def:tauG}
\tau_i(\t):=c_{\varpi_i}(\t)=\langle v_{\varpi_i},g(\t)\cdot v_{\varpi_i}\rangle_{\varpi_i}.
\end{equation}

The  following lemma for the matrix coefficient is useful (the proof can be found in \cite{KO:22}).

\begin{lemma}
\label{lem:mat_coeff}
\begin{enumerate}
\item[(1)]
If $\{ u_1, \dots, u_r \}$ is an orthonormal basis of $V(\lambda)$, 
then we have for $v$, $w \in V(\lambda)$,
\begin{equation}
\label{eq:mat_coeff1}
\langle v, gh \cdot w \rangle_\lambda
 =
\sum_{i=1}^r
 \langle v, g \cdot u_i \rangle_\lambda
 \langle u_i, h \cdot w \rangle_\lambda.
\end{equation}
\item[(2)]
If $g \in GL_{n+1}$ is written as $g = {n}^- h n^+$ 
with ${n}^- \in {\mathcal{N}}^-$, $h \in \HH$ and $n^+ \in \mathcal{N}^+$, 
then we have
\begin{equation}
\label{eq:mat_coeff2}
c_\lambda({n}^- h n^+) = \chi^\lambda(h),
\end{equation}
where $\chi^\lambda: \HH \to \C^\times$ is the character 
corresponding to $\lambda$.
\item[(3)]
For dominant weights $\lambda$ and $\mu$, we have
\begin{equation}
\label{eq:mat_coeff3}
c_\lambda(g) \cdot c_\mu(g) = c_{\lambda+\mu}(g)
\qquad(g \in GL).
\end{equation}
\end{enumerate}
\end{lemma}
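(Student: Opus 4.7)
The plan is to treat the three parts in order, relying throughout on the standard structural properties of the contravariant (Shapovalov) bilinear form on the irreducible highest weight module $V(\lambda)$. Recall that this form is characterized by $\langle v_\lambda, v_\lambda\rangle_\lambda = 1$ together with the identity $\langle X\cdot u, v\rangle_\lambda = \langle u, \omega(X)\cdot v\rangle_\lambda$ for a suitable anti-involution $\omega$ of $\mathfrak{g}$ swapping $\mathfrak{n}_+$ and $\mathfrak{n}_-$; its key feature is that distinct weight spaces are mutually orthogonal. I would first recall or verify these two facts as a short preamble, since all three parts rely on them.

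For part (1), I would simply insert the completeness relation. Since $\{u_1,\dots,u_r\}$ is an orthonormal basis for $\langle\cdot,\cdot\rangle_\lambda$, every $x\in V(\lambda)$ admits the expansion $x = \sum_i \langle u_i, x\rangle_\lambda\, u_i$. Applying this with $x = h\cdot w$ and substituting into $\langle v, g\cdot(h\cdot w)\rangle_\lambda$, bilinearity yields \eqref{eq:mat_coeff1} at once. This is essentially a routine check.

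For part (2), I would exploit the highest-weight property of $v_\lambda$. Since $n^+\in\mathcal{N}^+$ is unipotent upper triangular and $v_\lambda$ is annihilated by $\mathfrak{n}_+$, one has $n^+\cdot v_\lambda = v_\lambda$. Next, $h\cdot v_\lambda = \chi^\lambda(h)\,v_\lambda$ by definition of the character. Thus
\[
c_\lambda(n^- h n^+) = \chi^\lambda(h)\,\langle v_\lambda, n^-\cdot v_\lambda\rangle_\lambda.
\]
Writing $n^- = \exp(Y)$ with $Y\in\mathfrak{n}_-$, the vector $n^-\cdot v_\lambda$ is a sum of $v_\lambda$ and vectors of strictly lower weight. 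By the weight-orthogonality of the Shapovalov form, all lower-weight terms are orthogonal to $v_\lambda$, so $\langle v_\lambda, n^-\cdot v_\lambda\rangle_\lambda = \langle v_\lambda, v_\lambda\rangle_\lambda = 1$, giving \eqref{eq:mat_coeff2}. The one subtlety worth spelling out is the weight-orthogonality claim, which is the main obstacle and where I would cite (or briefly reprove via the anti-involution identity) the standard fact that $\langle u,v\rangle_\lambda = 0$ when $u,v$ have distinct weights.

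For part (3), the key input is the Cartan embedding of irreducible modules: for dominant weights $\lambda,\mu$, the submodule of $V(\lambda)\otimes V(\mu)$ generated by $v_\lambda\otimes v_\mu$ is isomorphic to $V(\lambda+\mu)$, with $v_\lambda\otimes v_\mu \mapsto v_{\lambda+\mu}$, and the tensor product bilinear form $\langle\cdot,\cdot\rangle_\lambda\otimes\langle\cdot,\cdot\rangle_\mu$ restricts to $\langle\cdot,\cdot\rangle_{\lambda+\mu}$ by uniqueness of the normalized contravariant form. Since $g$ acts diagonally, $g\cdot(v_\lambda\otimes v_\mu) = (g\cdot v_\lambda)\otimes(g\cdot v_\mu)$, and therefore
\[
c_{\lambda+\mu}(g) = \langle v_\lambda\otimes v_\mu, (g\cdot v_\lambda)\otimes(g\cdot v_\mu)\rangle = \langle v_\lambda, g\cdot v_\lambda\rangle_\lambda\,\langle v_\mu, g\cdot v_\mu\rangle_\mu = c_\lambda(g)\,c_\mu(g),
\]
proving \eqref{eq:mat_coeff3}. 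The only delicate point here is that one must be a bit careful with the normalization of the induced form on the Cartan component, but this is fixed by the condition $\langle v_{\lambda+\mu}, v_{\lambda+\mu}\rangle_{\lambda+\mu} = 1 = \langle v_\lambda\otimes v_\mu, v_\lambda\otimes v_\mu\rangle$, so no extra constant appears.
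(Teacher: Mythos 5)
The paper itself offers no proof of this lemma---it simply cites \cite{KO:22}---so there is nothing internal to compare against; judged on its own terms, your argument is correct and complete. Part (1) is the completeness relation for a nondegenerate symmetric bilinear form, part (2) correctly combines $\mathfrak{n}_+\cdot v_\lambda=0$ (so $n^+\cdot v_\lambda=v_\lambda$), the character action of $\mathcal{H}$, and the orthogonality of distinct weight spaces under the contravariant form (which you rightly flag as the one fact needing the anti-involution identity), and part (3) via the Cartan component of $V(\lambda)\otimes V(\mu)$ is a standard and clean route, with the normalization $\langle v_\lambda\otimes v_\mu, v_\lambda\otimes v_\mu\rangle=1$ pinning down the form by uniqueness of the normalized contravariant form on an irreducible module. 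One remark: part (3) also follows more cheaply from part (2) together with a density argument---on the open dense set $\mathcal{N}^-\mathcal{H}\mathcal{N}^+$ one has $c_\lambda(g)c_\mu(g)=\chi^\lambda(h)\chi^\mu(h)=\chi^{\lambda+\mu}(h)=c_{\lambda+\mu}(g)$, and both sides are regular functions on $\mathcal{G}$, hence agree everywhere; this avoids the tensor-product machinery, at the cost of invoking Zariski density. Either way the statement is established.
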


We now show that the solutions of the f-KT hierarchy \eqref{eq:fKT} are described 
in terms of the $\tau$-functions given by \eqref{def:tauG}. We also give a remark on the banded f-KT hierarchy.
First we note that the coefficients $b_\alpha$ of a solution 
to \eqref{eq:fKT} are uniquely determined by $a_1, \dots, a_n$.

\begin{proposition}
\label{prop:sol_b}
If $L = L(\t)$ is a solution of the form \eqref{eq:Lax} 
to the full Kostant--Toda hierarchy \eqref{eq:fKT}, 
then $b_\alpha$'s ($\alpha \in \Sigma^+$) 
are expressed as polynomials in $a_i$'s and their derivatives.
In particular, we have
\begin{equation}
\label{eq:b=a'}
b_{\alpha_i} = \frac{\partial a_i}{\partial t_1}
\qquad(1 \le i \le n).
\end{equation}
\end{proposition}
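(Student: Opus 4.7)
The plan is to extract the identity \eqref{eq:b=a'} directly from the $t_1$-flow Lax equation, and then to bootstrap it by induction on $\text{ht}(\alpha)$ to obtain the general polynomiality claim.

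First, for $k=1$ the Hamiltonian in \eqref{eq:fKT} is $P_1=\Pi_{\mathfrak{b}_+}\nabla I_2$. Because $I_2$ is (up to a constant) the Killing-form quadratic Casimir, $\nabla I_2$ is a scalar multiple of $L_\mathfrak{g}$, so $P_1=\sum_i a_i H_i+\sum_i X_i$. Substituting into $\partial L_\mathfrak{g}/\partial t_1=[P_1,L_\mathfrak{g}]$, using $[P_1,P_1]=0$ so that only $[P_1,\sum_\alpha b_\alpha Y_\alpha]$ contributes, and expanding via the Chevalley relations $[H_i,Y_\alpha]=-\alpha(H_i)Y_\alpha$ together with $[X_i,Y_\alpha]$ (which produces $H_i$ when $\alpha=\alpha_i$ and a lower-height negative root vector when $\alpha-\alpha_i\in\Sigma^+$), one matches Cartan/root components to recover exactly the system \eqref{eq:fKT1}. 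Identifying the coefficient of $H_i$ on both sides then yields \eqref{eq:b=a'}.

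For the general claim, I would induct on $\text{ht}(\alpha)$, with the base case $\text{ht}(\alpha)=1$ being \eqref{eq:b=a'} itself. For the inductive step, assume $b_\beta$ has been expressed as a polynomial in the $a_j$'s and their $t_1$-derivatives for every positive root $\beta$ with $\text{ht}(\beta)\le k$. Rearranging the second equation in \eqref{eq:fKT1} for each $\alpha$ of height $k$ gives
\[
\sum_{i:\alpha+\alpha_i\in\Sigma^+} N_{\alpha_i,-\alpha-\alpha_i}\, b_{\alpha+\alpha_i} \;=\; \frac{\partial b_\alpha}{\partial t_1}+\sum_j \alpha(H_j)\, a_j\, b_\alpha,
\]
whose right-hand side is already a known polynomial in the $a_j$'s and their derivatives. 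Collecting these equations over all height-$k$ positive roots $\alpha$ produces a linear system in the height-$(k+1)$ unknowns $\{b_\gamma\}$ whose coefficient matrix represents the graded piece $\ad(e):\mathfrak{n}_-^{(k+1)}\to\mathfrak{n}_-^{(k)}$, where $e:=\sum_i X_{\alpha_i}$ and $\mathfrak{n}_-^{(k)}$ denotes the span of the $Y_\alpha$ with $\text{ht}(\alpha)=k$.

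The decisive structural input is that $\ad(e)|_{\mathfrak{n}_-}$ is injective: since $e$ is a principal (regular) nilpotent in $\mathfrak{g}$, its centralizer $\mathfrak{z}(e)$ has dimension $n$ and lies inside $\mathfrak{b}_+$, so $\mathfrak{z}(e)\cap\mathfrak{n}_-=0$, and this injectivity descends to each graded summand. Hence the linear system uniquely determines each $b_\gamma$ of height $k+1$ as a polynomial in the $a_j$'s and their $t_1$-derivatives; the weight assignments $\text{wt}(a_i)=1$, $\text{wt}(\partial/\partial t_1)=1$, $\text{wt}(b_\alpha)=\text{ht}(\alpha)+1$ confirm that the resulting expression is weighted homogeneous of degree $\text{ht}(\gamma)+1$. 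The main obstacle is precisely this inductive step: verifying that the equations from \eqref{eq:fKT1} at height $k$ genuinely pin down every unknown of height $k+1$ rather than merely constraining a subsum of them. That reduction to the injectivity of $\ad(e)|_{\mathfrak{n}_-}$ is the one place where the semisimple/Chevalley structure and the regularity of the principal nilpotent $e=\sum_i X_{\alpha_i}$ are used essentially, beyond the mere Lax form.
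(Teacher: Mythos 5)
Your proposal is correct and follows essentially the same route as the paper: extract \eqref{eq:b=a'} from the $H_i$-components of the $t_1$-flow, then induct on root height by viewing the equations at height $k$ as a linear system for the height-$(k+1)$ unknowns whose coefficient matrix is $\ad(L_1)\colon\g_{-(k+1)}\to\g_{-k}$, and conclude via injectivity of this map. The paper justifies that injectivity by invoking the principal $\spl_2$-triple with $e=L_1$ and $\spl_2$ representation theory, while you invoke the equivalent Kostant fact that the centralizer of the regular nilpotent $e$ meets $\mathfrak{n}_-$ trivially; these are the same structural input.
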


\begin{proof}
Recall that the height $\text{ht}(\alpha)$ of a root 
$\alpha = \sum_{i=1}^n c_i \alpha_i \in \Sigma$ is defined by 
$\text{ht}(\alpha) = \sum_{i=1}^n c_i$.
For a nonzero integer $k$, let $\g_k$ be the span of root vectors 
$X_\alpha$ with $\height(\alpha) = k$.
And we put $\g_0 = \h$.
Then we have
\[
\g = \bigoplus_{k \in \Int} \g_k,
\quad
[\g_k, \g_l] \subset \g_{k+l}.
$$
If we put
$$
L_1 = \sum_{i=1}^n X_{\alpha_i},
\quad
L_0 = \sum_{i=1}^n a_i(t) H_i,
\quad
L_{-k} = \sum_{\height(\alpha) = k} b_\alpha(t) X_{-\alpha}
\qquad(k > 0),
\]
then we have
\[
L = L_1 + L_0 + L_{-1} + L_{-2} + \cdots,
\quad
L^{\ge 0} = L_1 + L_0,
\quad
L_k \in \g_k.
\]
By comparing the graded components of \eqref{eq:fKT} with $m_k=1$, we obtain
\begin{gather}
\label{eq:deg0}
\frac{ \partial L_0 }{\partial t_1 } = [L_1, L_{-1}],
\\
\label{eq:degk}
\frac{ \partial L_{-k} }{\partial t_1 } = [L_1,L_{-k+1}] + [L_0, L_{-k}]
\quad(k>0).
\end{gather}

Since $[L_1, L_{-1}] = \sum_{i=1}^n b_{\alpha_i} H_i$, we obtain \eqref{eq:b=a'} 
from \eqref{eq:deg0} by comparing the coefficients of $H_i$.

We prove by induction on $\height(\beta)$ that 
$b_\beta$ can be uniquely expressed as a linear combination 
of $a_i b_\alpha$ and $\frac{\partial b_\alpha}{\partial t_1}$ with 
$1 \le i \le n$ and $\height(\alpha) = \height(\beta)-1$.
Suppose $k>0$.
Equating the coefficients of $X_{-\alpha}$ with $\height(\alpha) = k$ 
in \eqref{eq:degk}, we obtain
\begin{equation}
\label{eq:lin_eq}
\frac{\partial b_\alpha}{\partial t_1}
 =
- \left(\sum_{i=1}^n\alpha(H_i) a_i\right) b_\alpha 
+ \sum_{\height(\beta)=k+1} N_{\beta-\alpha,-\beta} b_\beta
\qquad(\height(\alpha) = k),
\end{equation}
where $N_{\beta-\alpha,-\beta} = 0$ unless $\beta-\alpha$ is a simple root.
We regard \eqref{eq:lin_eq} as a system of linear equations in the unknown variables 
$b_\beta$ ($\height(\beta)=k+1$) and consider the coefficient matrix
$$
M
 = 
\left( N_{\beta-\alpha,-\beta} \right)_{\height(\alpha) = k, \, \height(\beta)=k+1}.
$$
On the other hand, it can be shown that $M$ is the representation matrix 
of the linear map $\ad L_1 : \g_{-(k+1)} \to \g_{-k}$.
Since $L_1 = \sum_{i=1}^n X_{\alpha_i}$, 
we can find a principal $\spl_2$-triple $\{ h, e, f \}$ 
with $e = L_1$ such that $[h,e] = 2e$, $[h,f] = -2f$ and $[e,f] = h$, 
and we have $\g_k = \{ X \in \g : [h,X] = 2k X \}$ (see \cite[Section~5]{Kos:59}).
By appealing to the representation theory of $\spl_2$, 
we see that $\ad L_1 : \g_{-(k+1)} \to  \g_{-k}$ is injective.
Hence the matrix $M$ has a full rank 
and the system \eqref{eq:lin_eq} of linear equations has a unique solution 
in $(b_\beta)_{\height(\beta) = k+1}$.
\end{proof}

We then define the $l$-KT hierarchy on $\mathfrak{g}$ by imposing the following constraints (or reduction, see \cite[Section~5]{KY}),
\begin{equation}\label{eq:l-band}
b_\alpha = 0
\quad\text{with}\quad \height\, (\alpha) \ge \ell+1.
\end{equation}
Thus, the $\ell$-KT hierarchy is a reduction of the f-KT hiersrchy based on the root space decomposition with particular height, which we refer to as a \emph{root space reduction}.

\subsection{{Coordinates for the flag varieties}}
The $\tau$-functions near the Painlev\'e divisors can be expressed as follows. Assume that the matrix $g(\t)$ at a Painlev\'e
divisor $\t=\t_*$ has the Bruhat decomposition,
\begin{equation}
g(\t_*) = n_*\dot{w}_*b_* \qquad \text{for some} \qquad w_* \in \mathfrak{W},
\end{equation}
where $\mathfrak{W}$ is the Weyl group of $\mathcal{G}$. Let $\{u_1, \dots, u_r\}$ be an orthonormal basis of $V(\varpi_i)$, and set $\t \to \t + \t_*$, then by \eqref{eq:mat_coeff1} we have
\begin{align}\label{eq:tauexpansion}
\tau_i(\t) & = \langle v_{\varpi_i},g(\t) n_*\dot{w}_*b_*\cdot v_{\varpi_i}\rangle_{\varpi_i} \nonumber\\
& = \sum \limits_{i = 1}^r \langle v_{\varpi_i}, g(\t) \cdot u_i\rangle_{\varpi_i}\langle u_i, n_*\dot{w}_*b_* \cdot v_{\varpi_i}\rangle_{\varpi_i} \\
& = d_i \sum \limits_{i = 1}^r \langle v_{\varpi_i}, g(\t) \cdot u_i\rangle_{\varpi_i}\langle u_i, n_*\dot{w}_* \cdot v_{\varpi_i}\rangle_{\varpi_i}, \nonumber
\end{align}
where $d_i$ defined by $b_* \cdot v_{\varpi_i} = d_iv_{\varpi}$ is an overall non-essential constant which we set it to be $1$ in the following. Note that $\langle u_i, n_*\dot{w}_* \cdot v_{\varpi_i}\rangle_{\varpi_i}$ is certain matrix coefficient of a point in the flag variety $\mathcal{G} \slash \mathcal{B}$, and we recall some useful construction of coordinates on the flag varieties in the following.

The datum $\{\mathcal{H}, \mathcal{B}, \mathcal{B}^-, X_i, Y_i\,(1\le i\le n)\}$ is called a pinning in \cite{Lusztig3}. It is known that pinning is unique up to conjugation under $\mathcal{G}$. For $\alpha_i \in \Pi$, let $x_{i}: \mathbb{C} \to \mathcal{G}$ and $y_{i}: \mathbb{C} \to \mathcal{G}$ be the one-parameter subgroups given by
\begin{equation}\label{eq:xy}
x_{i}(\xi) = \exp (\xi X_i), \qquad y_{i}(\xi) = \exp(\xi Y_i), \quad \xi \in \mathbb{C}.
\end{equation}
For $\alpha_i \in \Pi$, a choice of representative of $s_i \in \mathcal{W}$ in $\mathcal{G}$ can be taken as
\[\dot{s}_{i} = x_{i}(-1)y_{i}(1)x_{i}(-1).\]

For $w \in \mathcal{W}$, the $\mathcal{B}$-orbit $\mathcal{O}^+_w = \mathcal{B}w\mathcal{B} \slash \mathcal{B}$ and $\mathcal{B}^-$-orbit $\mathcal{O}^-_w = \mathcal{B}^-w\mathcal{B} \slash \mathcal{B}$ in $\mathcal{G} \slash \mathcal{B}$ are called the Bruhat cell and opposite Bruhat cell corresponding to $w$, respectively. Let
\begin{equation*}
\mathcal{N}_w = \mathcal{N} \cap \dot{w}\mathcal{N}^-\dot{w}^{-1} \qquad \text{and} \qquad \mathcal{N}^-_w = \mathcal{N}^- \cap \dot{w} \mathcal{N}^-\dot{w}^{-1},
\end{equation*}
then we have the following isomorphisms
\begin{align*}
& \mathcal{N}_w \to \mathcal{B}w\mathcal{B} \slash \mathcal{B}, \qquad a \mapsto a\dot{w} \mathcal{B}, \quad a \in \mathcal{N}_w,\\
& \mathcal{N}^-_w \to \mathcal{B}^-w\mathcal{B} \slash \mathcal{B}, \qquad b \mapsto b\dot{w} \mathcal{B}, \quad b \in \mathcal{N}_w.
\end{align*}
Let $\ell(w)$ denote the length of $w \in \mathcal{W}$. Note that if $w = w_1w_2$ and $\ell(w) = \ell(w_1) + \ell(w_2)$, then
\begin{equation*}
\mathcal{N}_w\dot{w} = (\mathcal{N}_{w_1}\dot{w}_1)(\mathcal{N}_{w_2}\dot{w}_2)
\end{equation*}
is a direct product decomposition. Let $w = s_{i_1}s_{i_2} \cdots s_{i_k}$ be a reduced expression for $w$, then
\[\begin{array}{rccc}
\phi_w: & \mathbb{C}^k & \longrightarrow & \mathcal{O}^+_w\\
& (\xi_1, \dots, \xi_k) & \mapsto & x_{{i_1}}(\xi_1)\dot{s}_{{i_1}} \cdots x_{{i_k}}(\xi_k)\dot{s}_{i_k} \mathcal{B}
\end{array}\]
is an isomorphism (c.f. \cite{MR, LY}), and $(\xi_1, \dots, \xi_k)$ is called the Bott-Samelson coordinate on $\mathcal{O}^+_w$ defined by the reduced word $w = s_{i_1}s_{i_2} \cdots s_{i_k}$.

Let $w_0 \in \mathcal{W}$ be the longest Weyl group element. For $w \in \mathcal{W}$, let $\hat{w}$ be defined by
\[\hat{w} \cdot w = w_0 \qquad \text{or} \qquad \hat{w} = w_0 \cdot w^{-1}.\]
The identity
\begin{equation*}
\mathcal{N}^-_w = \dot{\hat{w}}^{-1}\mathcal{N}_{\hat{w}}\dot{\hat{w}}
\end{equation*}
gives rise to the isomorphism
\[\begin{array}{rccc}
\psi_w: & \mathcal{B}^- w \mathcal{B} \slash \mathcal{B} & \longrightarrow & \mathcal{O}^+_{\hat{w}}\\
& a \dot{w}\mathcal{B} & \mapsto & \dot{\hat{w}} a^{-1}\mathcal{B},
\end{array}\]
where $a \in \mathcal{N}^-_w$. A parameterization for $\mathcal{O}^-_w$ can thus be obtained as $\phi_{\hat{w}} \circ \psi^{-1}_{w}: \mathbb{C}^k \to \mathcal{O}^-_w$, where $k = \ell(\hat{w}) = \ell(w_0) - \ell(w)$, and following \cite{LY} we call it the Bott-Samelson coordinate for $\mathcal{O}^-_w$ associated with the reduced word $\hat{w} = s_{i_1}s_{i_2} \cdots s_{i_k}$.

Another coordinate which is defined only on an open subset of $\mathcal{O}^-_w$ can be described as follows. Recall that the Richardson variety $\mathcal{R}_{v, w}$ is defined as
\[\mathcal{R}_{v, w} := \mathcal{O}^+_w \cap \mathcal{O}^-_v,\]
which is nonempty precisely when $v \le w$ in the Bruhat order. Let $\hat{w}$ be defined as above, and $\hat{w} = s_{i_1}s_{i_2} \cdots s_{i_k}$ be a reduced expression for $\hat{w}$. Then the map
\begin{equation}\label{eq:Lusztig1}
\begin{array}{rccc}
\varphi_{\hat{w}}: & \mathbb{C}^k & \longrightarrow & \mathcal{R}_{w, w_0}\\
& (\xi_1, \dots, \xi_k) & \mapsto & y_{i_1}(\xi_1) y_{i_2}(\xi_2) \cdots y_{i_k}(\xi_k)\dot{w} \mathcal{B}
\end{array}
\end{equation}
defines an isomorphism (c.f. \cite{MR, KW:15}). We call the coordinate system on $\mathcal{R}_{w, w_0} \subset \mathcal{O}^-_w$  Lusztig coordinate for $\mathcal{R}_{w, w_0}$ associated with the reduced word $\hat{w} = s_{i_1}s_{i_2} \cdots s_{i_k}$. Note that the image of $\varphi_{\hat{w}}$ does not depend on the reduced word we choose for $\hat{w}$. Note also that $\mathcal{R}_{w, w_0} = w_0\mathcal{B}^-\mathcal{B} \cap \mathcal{O}^-_w$ is the intersection of $\mathcal{O}^-_w$ with the shifted big cell $w_0\mathcal{B}^-\mathcal{B}$, so that $\varphi$ only defines a coordinate on an open dense subspace of $\mathcal{O}^-_w$.

For simplicity we mainly use the Lusztig coordinate for a generic element in the corresponding Bruhat cell in the rest of the paper. We will give a concrete example to illustrate the difference between the two coordinate systems (see \eqref{eq:Bott-Samelson} and \eqref{eq:Lusztig} below for the parameterizations of a generic element in the flag variety of $B_2$).

We will discuss the f-KT hierarchies on the Lie algebras of $A$-, $B$- and $G$-types in this paper by embedding those
equations into the f-KT lattice in the Hessenberg form \eqref{Lax}.


\section{The f-KT hierarchy of type $A$}\label{sec:AKT}
Now let us consider the f-KT hierarchy discussed in the previous sections in the frame of the Lie algebra of type $A$,
\[
\mathfrak{sl}_{n+1}(\C):=\{X\in \text{End}(\C^{n+1}): \text{tr}(X)=0\}.
\]
Recall that $\mathfrak{g}=\mathfrak{sl}_{n+1}(\C)$ has a triangular decomposition, $\mathfrak{g}=\mathfrak{n}_+\oplus\mathfrak{h}\oplus\mathfrak{n}_-$, where $\mathfrak{h}$ (resp. $\mathfrak{n}_+, \mathfrak{n}_-$) is a subgroup of $\mathfrak{g}$ consisting of diagonal matrices (resp. strictly upper triangular matrices, strictly lower triangular matrices). Then the Cartan subalgebra $\mathfrak{h}$ is given by
\[
\mathfrak{h}=\left\{ \text{diag}(h_1,\ldots,h_{n+1}): ~\sum_{i=1}^{n+1}h_i=0,~h_i\in\C\right\}.
\]
Let $\varepsilon_i$ be the linear map $\ve_i:\mathfrak{h}\to\C$ given by $\varepsilon_i(\text{diag}(h_1,\ldots,h_{n+1}))=h_i$. Then 
the root system of $\mathfrak{g}$ with respect to $\mathfrak{h}$ is given by
\[
\Sigma_{A_n}=\left\{\pm(\ve_i-\ve_j): 1\le i<j\le n + 1\right\}.
\]
In particular, the positive root system is given by
\[
\Sigma^+_{A_n}=\left\{\varepsilon_i-\varepsilon_j= \alpha_i+\alpha_{i+1}+\cdots+\alpha_{j-1}: 1\le i< j\le n+1\right\}.
\]
Note that the simple roots are given by $\alpha_i=\varepsilon_i-\varepsilon_{i+1}$ and  $|\Sigma^+_{A_n}|=\frac{n(n+1)}{2}$.

We take the Chevalley basis in the form,
\[
H_i=E_{i,i}-E_{i+1,i+1}, \qquad X_i=E_{i,i+1},\qquad Y_i=E_{i+1,i},\qquad (1\le i\le n).
\]
The root vectors $Y_{\ve_i-\ve_j}, (i<j)$ are defined by
\[
Y_{\ve_i-\ve_j}=E_{j,i}=(-1)^{j-i}[[\cdots[[Y_i,Y_{i+1}],Y_{i+2}]\cdots ],Y_{j-1}].
\]
Then the Lax matrix $L_A\in\mathfrak{sl}_{n+1}(\C)$ for $\mathfrak{sl}_{n+1}(\C)$ is given by
\begin{align*}
L_A&=\sum_{i=1}^n(a_iH_i+X_i+b_iY_i)+\sum_{j=3}^{n+1}\sum_{i=1}^{j-1}b_{\ve_i-\ve_j}Y_{\ve_i-\ve_j}\\[1.0ex]
&=\begin{pmatrix}
a_1  & 1 & 0 & \cdots  & 0&0\\
b_1 & a_2-a_1 & 1 & \cdots &0 &0\\
b_{\ve_1-\ve_3}&b_2 & a_3-a_2& \ddots &\vdots &\vdots\\
\vdots &\vdots &\ddots &\ddots &1 &\vdots\\
b_{\ve_1-\ve_{n}}&b_{\ve_2-\ve_n}&\cdots &\cdots & a_n-a_{n-1}&1\\
b_{\ve_1-\ve_{n+1}}&b_{\ve_2-\ve_{n+1}}&\cdots&\cdots & b_n & -a_n
\end{pmatrix},
\end{align*}
with $b_i=b_{\ve_i-\ve_{i+1}}$, and the f-KT equation of $A$-type,
\[
\frac{\partial L_A}{\partial t_1}=[P_A, L_A]\qquad\text{with}\qquad P_A=\sum_{i=1}^n(a_iH_i+X_i),
\]
which gives
\[\left\{
\begin{array}{lll}
\displaystyle{\frac{\partial a_i}{\partial t_1}=b_i\qquad (1\le i\le n)},\\[2.0ex]
\displaystyle{\frac{\partial b_{\ve_i-\ve_j}}{\partial t_1}=-\left(\sum_{k=1}^n(\ve_i-\ve_j)(H_k)a_k\right) b_{\ve_i-\ve_j}+b_{\ve_i-\ve_{j+1}}-b_{\ve_{i-1}-\ve_j}\qquad (1\le i< j\le n)},
\end{array}
\right.
\]
where $\ve_i(H_k)=\delta_{i,k}-\delta_{i,k+1}$.
Comparing the Lax matrix $L_A$ with $L$ in \eqref{L}, we have
\begin{align*}
a_{i,i}&=a_i-a_{i-1}\qquad (1\le i\le n+1),\\
a_{\ell+k,k}&=b_{\ve_k-\ve_{\ell+k}}\qquad (1\le k\le n-\ell+1, \,1\le \ell\le n).
\end{align*}
These entries of $L_A$ can be written by
\begin{align*}
a_i&=\frac{\partial}{\partial t_1}\ln \tau^A_i(\t),\qquad (1\le i\le n),\\
b_{\ve_k-\ve_{\ell+k}}&=\frac{p_{\ell+1}(\tilde D)\tau^A_k(\t)\circ\tau^A_{\ell+k-1}(\t)}{\tau^A_k(\t)\tau^A_{\ell+k-1}(\t)}\qquad (1\le k\le n-\ell+1),
\end{align*}
where the $\tau^A$-functions are defined in \eqref{def:tauG}. 
Recall first that the exterior product space $\wedge^i\C^{n+1}$ gives the irreducible highest weight representation of $SL_{n+1}(\C)$ with the
highest weight $\varpi_i=\ve_1+\cdots+\ve_i$ and the highest weight vector $v_{\varpi_i}=e_1\wedge\cdots\wedge e_i$.
Note here that $\varpi_i(H_j)=\delta_{i,j}$. 
Then the $\tau$-functions are expressed as the determinants of the leading principal minors of $g(\t)=\exp(\Theta_{L^0}(\t))$, i.e.
\begin{equation}\label{eq:tauA}
\tau^A_i(\t)=\langle v_{\varpi_i}, g(\t)\cdot v_{\varpi_i}\rangle_{\varpi_i}=\langle e_1\wedge\cdots\wedge e_i, \, e^{\Theta_{L^0}(\t)}\,e_1\wedge\cdots\wedge e_i\rangle=\Delta_i(\t)\qquad (1\le i\le n),
\end{equation}
where $\langle\cdot,\cdot\rangle$ is nothing but the usual inner product (determinant) on $\wedge^i\C^{n+1}$, and
$\Delta_i(\t)$ is the $i$-th principle minor of the matrix $g(\t)$.
They are the well-known $\tau$-functions for the f-KT hierarchy of type $A$ (c.f. \cite{KW:15}). One should note that these formulas have been given in \cite{GW:84, FH} for
the case of tridiagonal Toda lattice. Here we claim that the formula is still valid for the f-KT hierarchy.

We here also give an expansion form of the $\tau$-function about a Painlev\'e divisor $\t_*$ with
$e^{\Theta_{L^0}(\t_*)}=n_*w_*b_*$, 
\begin{align}\label{eq:exp}
\tau_i^A(\t)&=\Delta_i(\t)=\langle e_1\wedge\cdots\wedge e_i, e^{\Theta_{L^0}(\t)} n_*w_*b_*\cdot e_1\wedge\cdots\wedge e_i\rangle\\
&=\sum_{1\le j_1<\cdots<j_i\le n+1}\langle e_1\wedge\cdots\wedge e_i, e^{\Theta_{L^0}(\t)} e_{j_1}\wedge\cdots\wedge e_{j_i}\rangle\,\langle e_{j_1}\wedge\cdots\wedge e_{j_i}, n_*w_*b_*\cdot e_1\wedge\cdots\wedge e_i\rangle.\nonumber
\end{align}
This formula is due to \eqref{eq:mat_coeff1} in Lemma \ref{lem:mat_coeff}. Here note that the set $\{e_{j_1}\wedge\cdots\wedge e_{j_i}:1\le j_1<\cdots<j_i\le n+1\}$ forms the orthonormal basis of $\wedge^i\C^{n+1}$.
If the initial matrix $L^0$ is nilpotent, the $\tau$-functions become polynomials of $\t$. We study the polynomial solutions in the next section.


\subsection{Polynomial $\tau$-functions}
In \eqref{tauW}, if we set all the eigenvalues zero, i.e. $\kappa_i=0$ for all $i$, then we have
\begin{equation}\label{eq:Apoly}
\tau_i^A(\t; w_*)=\langle e_1\wedge\cdots\wedge e_i,\mathcal{P}_n^A(\t){n_*} \cdot e_{w_*(1)}\wedge\cdots\wedge e_{w_*(i)}\rangle, \qquad (1\le i\le n),
\end{equation}
where $\mathcal{P}_n^A(\t)$ is the $(n+1)\times (n+1)$ matrix given in \eqref{eq:polyM}, 
\begin{equation}\label{eq:P}
\mathcal{P}_n^A(\t)=\mathcal{P}_{n+1}(\t)=\begin{pmatrix}
1 & p_1(\t) &p_2(\t) & p_3(\t)& \cdots  &p_n(\t) \\
0 & 1 & p_1(\t) & p_2(\t)&\cdots \cdots &p_{n-1}(\t)\\
\vdots&\ddots &\ddots&\ddots &\cdots&\vdots\\
0& \cdots &0&1&\cdots&p_{n-i+1}(\t)\\
\vdots &\vdots &\vdots&\ddots &\ddots &\vdots\\
0 &\cdots&\cdots&\cdots &0&1
\end{pmatrix}.
\end{equation}

We note that the classification of the polynomial solutions for type $A$ can be obtained by listing all permutations $w_*\in\mathfrak{S}_{n+1}$.  We also need an element $n_*$ in \eqref{eq:nwb} for each $w_*$, which is constructed as follows (see also \cite{MR, KW3}).  The matrix $y^A_i(\xi)\in SL_{n+1}$ defined in \eqref{eq:xy} is given by
\[
y_i^A(\xi):=\exp\left(\xi Y_i\right)=
\begin{pmatrix}
1 & & & & & \\
   & \ddots &  & & & \\
 & &  1 & & & \\
 & & \xi& 1& & \\
 & & & & &\ddots &\\
 & & & & & & 1
 \end{pmatrix}\qquad (1\le i\le n),
\]
where $\xi$ is a parameter.  For each $w_*\in \mathfrak{S}_{n+1}$, 
let $\hat w$ be defined by
 \begin{equation}\label{eq:omega}
 \hat w\cdot w_* = w_0\qquad\text{or}\qquad \hat w=w_0\cdot w_*^{-1},
 \end{equation}
 where $ w_0$ is the longest element in $\mathfrak{S}_{n+1}$.
We write $\hat w$ in a reduced expression, 
 \[
 \hat w=s_{i_1}s_{i_2}\cdots s_{i_m}\quad \text{with}\quad m=\ell(\hat w)=\ell(w_0)-\ell(w_*),
 \]
  where $\ell(w)$ denotes the length of $w$.
 Then the $n_*$ associated with $w_*$ in the Lusztig coordinate \eqref{eq:Lusztig1} is given by 
\begin{equation}\label{eq:nA}
n_*=y^A_{i_1}(\xi_{1})\,y^A_{i_2}(\xi_{2})\,\cdots\, y^A_{i_m}(\xi_{m}),
\end{equation}
where $\{\xi_1,\ldots,\xi_m\}$ gives the set of free parameters.  That is, the element $n_*\dot w_*$ mod$(\mathcal{B}^+)$ gives a generic element in the corresponding Bruhat cell of the flag $SL_{n+1}/\mathcal{B}^+$, and the dimension of the cell is $m=\ell(\hat w)$.

As an example, consider
 $w_*=s_1s_2s_3s_4s_1s_3s_2$ for the f-KT hierarchy of $A_4$-type. Then we have
 $\hat w=s_2s_4s_3$, and $n_*=y_2(\xi_1)y_4(\xi_2)y_3(\xi_3)$ is given by
 \[
 n_*\dot w_*=\begin{pmatrix}
 0 & 0 & 0 & 0 &1\\
 0&   0& 1 & 0 &0\\
 1&   0& \xi_1& 0 & 0\\
 \xi_3 & 0 &0 & -1 &0\\
 \xi_2\xi_3&1&0&-\xi_2  &0
 \end{pmatrix}.
 \]
 There are three free parameters for the Bruhat cell $X_{w_*}=\mathcal{N}^-\dot  w_*\mathcal{B}^+/\mathcal{B}^+$, i.e.  dim$(X_{ w_*})=3$.
Noting that
\begin{equation}\label{eq:w4}
 w_*(1,2,3,4,5)=(3,5,2,4,1),
\end{equation}
we have the polynomial solutions of the $\tau$-functions in \eqref{eq:Apoly} for $w_*=s_1s_2s_3s_4s_1s_3s_2$, 
\begin{align*}
\tau^A_1(\t,w_*)&=S_{\young[2][3]}(\t)+\xi_3 S_{\young[3][3]}(\t)+\xi_2\xi_3S_{\young[4][3]}(\t), \\
\tau^A_2(\t,w_*)&=S_{\young[3,2][3]}(\t),\\
\tau^A_3(\t,w_*)&=S_{\young[2,1,1][3]}(\t)+\xi_3S_{\young[2,2,1][3]}(\t)+\xi_1\xi_3S_{\young[2,2,2][3]}(\t),\\
\tau^A_4(\t,w_*)&=S_{\young[1,1,1,1][3]}(\t).
\end{align*}

We can now find the elements of the matrix $L_A$. To compute $p_{\ell+1}(\tilde D)\tau_k\circ\tau_{k+\ell-1}$,
the following formula is useful (c.f. \cite{K:18}),
\begin{align}\label{eq:Id1}
p_m(\tilde D)\tau_\alpha\circ\tau_\beta&=\sum_{i+j=m}\left(p_i(\tilde\partial)\tau_\alpha\right)\left(p_j(-\tilde \partial)\tau_\beta\right)\\
&=\sum_{i+j=m}\left(S_{\mu_i}(\tilde\partial)\tau_\alpha\right)\left((-1)^jS_{\nu_j}(\tilde\partial)\tau_\beta\right),\nonumber
\end{align}
where  $\mu_i$ represents the Young diagram with $i$ horizontal boxes, and $\nu_j$ represents  $j$ vertical boxes.

 Then we find that 
\begin{align*}
p_5(\tilde D)S_{\young[2][3]}\circ S_{\young[1,1,1,1][3]}&=S_{\young[1][3]}-S_{\young[1][3]}=0,\\
p_5(\tilde D)S_{\young[3][3]}\circ S_{\young[1,1,1,1][3]}&=S_{\young[2][3]}-S_{\young[1][3]}^2+S_{\young[1,1][3]}=0,\\
p_5(\tilde D)S_{\young[4][3]}\circ S_{\young[1,1,1,1][3]}&=S_{\young[3][3]}-S_{\young[2][3]}S_{\young[1][3]}+S_{\young[1][3]}S_{\young[1,1][3]}-S_{\young[1,1,1][3]}=0.
\end{align*}
Note that they are some Pl\"ucker relations.
This implies that we have
\[
b_{\alpha[1,4]}(\t)=a_{5,1}(\t)=\frac{p_5(\tilde D)\tau_1\circ\tau_4}{\tau_1\tau_4}=0.
\]
This gives that $b_{\alpha[1,3]}=a_{4,1}$ and $b_{\alpha[2,4]}=a_{5,2}$ satisfying
\[
a_{4,1}(\t)=a_{4,1}^0\frac{\tau_4\tau_0}{\tau_3\tau_1}=\frac{p_4(\tilde D)\tau_1\circ\tau_3}{\tau_1\tau_3}\qquad \text{and}\qquad
a_{5,2}(\t)=a_{5,2}^0\frac{\tau_5\tau_1}{\tau_4\tau_2}=\frac{p_4(\tilde D)\tau_2\circ\tau_4}{\tau_2\tau_4},
\]
where $\tau_0=1$ and $\tau_5=-1$.
One can also find that
\[
p_4(\tilde D)\tau_1\circ\tau_3=-\xi_3(\xi_1+\xi_2+\xi_3)\tau_4,\qquad p_4(\tilde D)\tau_2\circ\tau_4=0,
\]
that is, we have
\[
a_{4,1}^0=-\xi_3(\xi_1+\xi_2+\xi_3)\qquad\text{and}\qquad a_{5,2}^0=0.
\]
This example shows that the $\tau$-functions corresponding to $ w_*$ in \eqref{eq:w4} gives a rational solution of the f-KT hierarchy of $A_4$-type with the entries $b_{\alpha[1,5]}=b_{\alpha[2,5]}=0$, which is a 3-banded KT hierarchy.  

It should be note that if $a_{4,1}^0=0$, then we have a solution of a 2-banded KT hierarchy, i.e. $a_{5,1}=a_{4,1}=a_{5,2}=0$. Then we can find that
\[
p_{3}(\tilde D)\tau_1\circ\tau_2=\tau_3,\qquad p_3(\tilde D)\tau_2\circ\tau_3=\tau_1\tau_4,\qquad
p_3(\tilde D)\tau_3\circ\tau_4=\tau_2\tau_5,
\]
and 
\[
a_{3,1}(\t)=\frac{\tau_3(\t)}{\tau_1(\t)\tau_2(\t)},\qquad a_{4,2}(\t)=\frac{\tau_1(\t)\tau_4(\t)}{\tau_2(\t)\tau_3(\t)},\qquad a_{5,3}=-\frac{\tau_2(\t)}{\tau_3(\t)\tau_4(\t)}.
\]



\section{The f-KT hierarchy of type $B$}\label{S:BKT}
We consider the Lie algebra of type $B$ in the following form,
\begin{equation}\label{eq:Btype}
\mathfrak{o}_{2n+1}(\C)=\{\,X\in\text{End}(\C^{2n+1}): X^TF+FX=0\,\},
\end{equation}
where the matrix $F$ is given by
\begin{equation}\label{eq:F}
F=\sum_{i=1}^{2n+1}(-1)^{n-i+1}E_{i, 2n+2-i}=\begin{pmatrix}
0 &\cdots & 0   & 0 & 0 &  \cdots & (-1)^n   \\
\vdots &\udots & &\vdots & & \udots&\vdots\\
0& & 0 & 0 &-1 &  &0\\
0&\cdots & 0& 1 & 0&\cdots & 0\\
0&  &-1&0&0&&0\\
\vdots &\udots& &\vdots & &\udots &\vdots\\
(-1)^n &\cdots&0 &0&0&\cdots &0
\end{pmatrix}.
\end{equation}
The Chevalley vectors $\{H_i,X_i,Y_i\}$ satisfy the conditions,
\[
[H_i,X_{j}]=C_{j,i}X_{j},\qquad [H_i,Y_{j}]=-C_{j,i}Y_{j},\qquad [X_{i},Y_{j}]=\delta_{i,j}H_i,
\]
where $C_B=(C_{i,j})$ is the Cartan matrix of $B$-type,
\[
C_B=\begin{pmatrix}
2 & -1 & 0 & \cdots &0\\
-1&2&-1&\cdots &0\\
\vdots &\ddots &\ddots & \cdots&\vdots \\
0&\cdots &-1&2&-2 \\
0&\cdots &0&-1&2
\end{pmatrix}.
\]
We take the following representation for the Chevalley generators (c.f. \cite{KY}),
\begin{align*}
H_i&=E_{i,i}-E_{i+1,i+1}+E_{2n+1-i,2n+1-i}-E_{2n+2-i,2n+2-i},\qquad (1\le i\le n-1),\\
H_n&=2(E_{n,n}-E_{n+2,n+2}),
\end{align*}
the simple root vectors,
\[
 X_{i}=E_{i,i+1}+E_{2n+1-i,2n+2-i}\qquad  (1\le i\le  n),
\]
and the negative generators,
\begin{equation}\label{eq:BY}
\left\{\begin{array}{lll}
Y_{i}=E_{i+1,i}+E_{2n+2-i,2n+1-i} \qquad (1\le i\le n-1),\\[1.0ex]
Y_{n}=2(E_{n+1,n}+E_{n+2,n+1}).
\end{array}\right.
\end{equation}
Then the negative root vectors  $Y_{\alpha}$ are generated by taking the commutators,
\[
[Y_{\alpha}, Y_{\beta}]=N_{-\alpha,-\beta}\,Y_{\alpha+\beta},
\]
with $N_{-\alpha,-\beta}\ne 0$ if $\alpha+\beta\in\Sigma^+$. Let $\varepsilon_i$ be the basis vector of $\mathfrak{h}^*$ dual to $\hat{E}_i=E_{i,i}-E_{2n+2-i,2n+2-i}$, i.e. $\varepsilon_i(\hat{E}_j)=\delta_{i,j}$,
then the set of positive roots is given by
\[
\Sigma^+=\left\{\begin{array}{clll}
\varepsilon_i~ (1\le i\le n),\quad\varepsilon_i\pm \varepsilon_j ~(1\le i<j\le n)
\end{array}\right\},
\]
where $\varepsilon_i=\alpha_i+\cdots +\alpha_{n}$.
Note that $|\Sigma^+|=n^2$.

\subsection{The f-KT hierarchy on $\mathfrak{so}_{2n+1}$}

The Lax matrix $L_B$ of the f-KT hierarchy on $\mathfrak{o}_{2n+1}(\C)$ is defined by
\[
L_B=\sum_{i=1}^ n a_iH_i+\sum_{i=1}^ n X_{i}+\sum_{\alpha\in\Sigma^+}b_{\alpha}Y_{\alpha}.
\]
The total number of the dependent variables $\{a_i, b_\alpha\}$ is $n^2+n$.  The set of eigenvalues of $L_B$ is given by
\[
\{\,0,\,\, \pm\kappa_i\,\, (1\le i\le n)\,\},
\]
where we assume $\kappa_i\ne 0$ and $\kappa_i\ne \kappa_j\,(i\ne j)$.
Notice that the invariants $I_{m}=\frac{1}{m}\text{tr}(L_B^{m})$ vanish when $m=$ odd.
Then the f-KT hierarchy of $B$-type is defined only for odd times $\t_B=(t_1,t_3,t_5,\ldots)$, and
each invariant $I_{2k+2}$ defines the Lax equation,
\[
\frac{\partial L_B}{\partial t_{2k+1}}=[P_{2k+1},L_B],\qquad P_{2k+1}:=\Pi_{\mathfrak{b}^+}\nabla(I_{2k+2})=(L_B^{2k+1})_{\ge 0}.
\]

For example, the f-KT lattice of $B_3$-type is formulated as follows.
The positive root system $\Sigma_{B_3}^+$ is given by the set of $3^2$ elements, and the corresponding root vectors are
\begin{align*}
&Y_{\vep_1-\vep_2}=Y_1=E_{2,1}+E_{7,6},\quad Y_{\vep_2-\vep_3}=Y_2=E_{3,2}+E_{6,5},\quad Y_{\vep_3}=Y_3=2(E_{4,3}+E_{5,4}),\\
&Y_{\vep_1-\vep_3}=E_{3,1}-E_{7,5},\quad Y_{\vep_2}=E_{4,2}-E_{4,2},\quad Y_{\vep_2+\vep_3}=E_{5,2}+E_{6,3},\\
&Y_{\vep_1}=E_{4,1}+E_{7,4},\quad Y_{\vep_1+\vep_3}=E_{5,1}-E_{7,3},\quad Y_{\vep_1+\vep_2}=E_{6,1}+E_{7,2}.
\end{align*}
The Lax matrix is then given by
\begin{align*}
L_{B_3}&=\sum_{i=1}^3 (a_iH_i+X_{i}+b_iY_{i})+\sum_{\alpha\in\Sigma_{B_3}^+\setminus\Pi}b_{\alpha}Y_{\alpha}\\
&=\begin{pmatrix}
a_1 & 1 &  & & &  \\
b_1 &a_2-a_1 & 1 &  & & & \\
b_{\vep_1-\vep_3} & b_2 & 2a_3-a_2 & 1 & & &  \\
b_{\vep_1}& b_{\vep_2} &  b_3 & 0 & 1 & &  \\
b_{\vep_1+\vep_3} &b_{\vep_2+\vep_3} & 0 &b_3 & -2a_3+a_2 & 1 &  \\
b_{\vep_1+\vep_2} & 0 & b_{\vep_2+\vep_3} & -b_{\vep_2} & b_2 &-a_2+a_1 & 1\\
0 & b_{\vep_1+\vep_2} & -b_{\vep_1+\vep_3} & b_{\vep_1} & -b_{\vep_1-\vep_3} &b_1 & -a_1
\end{pmatrix}.
\end{align*}
The first member of the f-KT hierarchy of $B$-type in rank $3$ is then given by
\begin{align*}
&\frac{\partial a_i}{\partial t_1}=b_i,\quad (i=1,2,3),\qquad\frac{\partial b_1}{\partial t_1}=-\sum_{j=1}^3\left(C_{1,j}a_j\right)b_1+b_{\vep_1-\vep_3},\\
& \frac{\partial b_2}{\partial t_1}=-\sum_{j=1}^3\left(C_{2,j}a_j\right)b_2-b_{\vep_1-\vep_3}+b_{\vep_2},\quad  \frac{\partial b_3}{\partial t_1}=-\sum_{j=1}^3\left(C_{3,j}a_j\right)b_3-b_{\vep_2},\\
& \frac{\partial b_{\vep_1-\vep_3}}{\partial t_1}=-(a_1+a_2-2a_3)b_{\vep_1-\vep_3}+b_{\vep_1},\quad \frac{\partial b_{\vep_2}}{\partial t_1}=(a_1-a_2)b_{\vep_2}-b_{\vep_1}+b_{\vep_2+\vep_3},\\
& \frac{\partial b_{\vep_1}}{\partial t_1}=-a_1b_{\vep_1}+b_{\vep_1+\vep_3},\quad \frac{\partial b_{\vep_2+\vep_3}}{\partial t_1}=(a_1-2a_3)b_{\vep_2+\vep_3}-b_{\vep_1+\vep_3},\\
& \frac{\partial b_{\vep_1+\vep_3}}{\partial t_1}=-(a_1-a_2+2a_3)b_{\vep_1+\vep_3}+b_{\vep_1+\vep_2},\quad \frac{\partial b_{\vep_1+\vep_2}}{\partial t_1}=-a_2b_{\vep_1+\vep_2},
 \end{align*}
where $C_{i,j}=\alpha_i(H_j)$ is the Cartan matrix of $B_3$.

\subsection{The $\tau$-functions for the f-KT hierarchy of $B$-type}
Embedding the Lax matrix $L_B$ into that of Hessenberg form ($(2n+1)\times(2n+1)$ matrix), the diagonal elements are expressed as
\begin{equation}\label{eq:ai}
\left\{\begin{array}{lll}
a_{i,i}=a_i-a_{i-1}\quad (1\le i\le n-1),\qquad a_{n,n}=2a_n-a_{n-1}, \\[1.0ex]
a_{n+1,n+1}=0,\qquad a_{n+1+k,n+1+k}=-a_{n+1-k,n+1-k} \quad (1\le k\le n),
\end{array}\right.
\end{equation}
where we assume $a_0=0$.  The elements in the lower triangular part satisfy the following symmetry
\begin{equation}
\label{eq:a-symm}
a_{\ell+k,k}=-(-1)^{\ell} a_{2n+2-k,2n+2- \ell -k}\qquad\text{for}\quad 
\left\{\begin{array}{lll}
1\le k\le\displaystyle{\left\lfloor\frac{2n-\ell+1}{2}\right\rfloor},\\[2.0ex]
1\le \ell \le 2n-1.
\end{array}\right.
\end{equation}
Note here that if $\ell$ is even, say $\ell=2m$, then we have $a_{n+m+1,n-m+1}=0$, which is an anti-diagonal
element in $L_B$. 

Recall that $a_{i,j}(t) $ are given in Proposition \ref{prop:aij} with the determinants $\Delta_i(\t)$. Then, the elements $a_i(t)$'s in \eqref{eq:ai} can be expressed by
\begin{equation}\label{eq:a-tau}
\left\{\begin{array}{lll}
\displaystyle{a_i(\t_B)=\frac{\partial}{\partial t_1}\ln \Delta_i(\t_B)},\qquad (1\le i\le n-1),\\[2.0ex]
\displaystyle{a_n(\t_B)=\frac{1}{2}\frac{\partial }{\partial t_1}\ln \Delta_n(\t_B)},
\end{array}\right.
\end{equation}
where the $\Delta_i(\t_B)$ for $1\le i\le n$ are defined by
\[
\Delta_i(\t_B):=\langle e_1\wedge\cdots\wedge e_i, e^{\Theta_{L_B^0}(\t_B)}\cdot e_1\wedge\cdots\wedge e_i\rangle\qquad\text{with}\qquad \Theta_{L_B^0}(\t_B):=\sum_{k=0}^\infty (L_B^0)^{2k+1}t_{2k+1}.
\]
Here $L_B^0$ is the initial Lax matrix. Note that $\Theta_{L_B^0}\in\mathfrak{o}_{2n+1}(\C)$, so that $g(\t_B):=e^{\Theta_{L_B^0}(\t_B)}$ is an element of the group $O_{2n+1}(\C)$ with $\text{Lie}(O_{2n+1})=\mathfrak{o}_{2n+1}$, i.e. (see \eqref{eq:Btype})
\[
g(\t_B)^TFg(\t_B)=F\qquad \text{where}\qquad F=\sum_{i=1}^{2n+1}(-1)^{n-i+1}E_{i, 2n+2-i}.
\]
The group $O_{2n+1}(\C)$ is then defined by
\[
O_{2n+1}(\C):=\left\{X\in \text{Aut}(\C^{2n+1}): X^TFX=F\right\}.
\]
We then define the $\tau$-functions for the f-KT hierarchy of $B$-type by
\begin{equation}\label{eq:Btau}
\left\{\begin{array}{lll}
\displaystyle{\tau^B_i(\t_B)}=\Delta_i(\t_B)\qquad (1\le i\le n-1),\\[1ex]
\displaystyle{\tau^B_n(\t_B)=\sqrt{\Delta_n(\t_B)}},
\end{array}\right.
\end{equation}
so that the variables $a_i(\t_B)$ can be written in the form \eqref{def:tau},
\[
a_i(\t_B)=\frac{\partial}{\partial t_1}\ln\tau_i^B(\t_B) \qquad (1\le i\le n),
\]
which is introduced in \cite{FH,GW:84} for the tridiagonal cases.

\begin{remark}
The expression \eqref{eq:Btau} of the $\tau$-function $\tau_n^B(\t_B)$  is related to a \emph{spin} representation of $SO_{2n+1}(\C)$ as follows.
Recall that the fundamental weights of $\mathfrak{so}_{2n+1}(\C)$ are given by
\begin{align*}
\varpi_i=\varepsilon_1+\cdots +\varepsilon_i\quad (1\le i\le n-1),\qquad  \varpi_n=\frac{1}{2}(\varepsilon_1+\cdots +\varepsilon_n),
\end{align*}
where $\varepsilon_i$ is the basis vector of $\mathfrak{h}^*$ dual to $\hat{e}_i=E_{i,i}-E_{2n+2-i,2n+2-i}$, i.e. $\varepsilon_i(\hat{e}_j)=\delta_{i,j}$. 
This implies that the exterior product $\wedge^i\C^{2n+1}$ of the standard representation gives the irreducible representation of $\mathfrak{so}_{2n+1}(\C)$ with highest weight $\varpi_i=\varepsilon_1+\cdots+\varepsilon_i$ for $1\le i\le n$, and we have
\[
\wedge^i\C^{2n+1}\cong V(\varpi_i)\quad(1\le i\le n-1),\qquad \wedge^n\C^{2n+1}\cong V(2\varpi_n).
\]
We then denote by $c_\lambda^{\text{Spin}_{2n+1}}(\tilde g)$ and $c_\lambda^{SO_{2n+1}}(g)$ the matrix coefficients on the spin group
$\text{Spin}_{2n+1}(\C)$ and the orthogonal group $SO_{2n+1}(\C)$, where $g=\pi (\tilde g)\in SO_{2n+1}(\C)$ with the covering map $\pi:\text{Spin}_{2n+1}(\C)\to SO_{2n+1}(\C)$. Then, from \eqref{eq:mat_coeff3} in Lemma \ref{lem:mat_coeff}, we have
\begin{equation}\label{eq:Square}
\left(c_{\varpi_n}^{\text{Spin}_{2n+1}}(\tilde g)\right)^2=c_{2\varpi_n}^{\text{Spin}_{2n+1}}(\tilde g)=c_{2\varpi_n}^{SO_{2n+1}}(g)=\langle v_{2\varpi_n}, g\cdot v_{2\varpi_n}\rangle_{2\varpi_n}.
\end{equation}
Taking $g=g(\t_B):=\exp(\Theta_{L^0}(\t_B))$, we have 
\[
\tau_n^B(\t_B)=c_{\varpi_n}^{\text{Spin}_{2n+1}}(\tilde g(\t_B))\quad \text{with}\quad g(\t_B)=\pi(\tilde g(\t_B)).
\]
In particular, the representation $V(\varpi_n)$ is a \emph{spin} representation, and $\tau^B_n(\t_B)$ can be expressed by a \emph{Pfaffian}. It seems that the explicit formula of $c_{\varpi_i}^{\text{Spin}_{2n+1}}(\tilde g)$ in terms of a Pfaffian
may not be so simple. However the Pfaffian formula of the polynomial $\tau$-functions has been obtained in \cite{Xie:22a}.
\end{remark}

One should also note that the symmetry in \eqref{eq:ai} then implies that the determinants $\Delta_i(\t_B)$
satisfy the following relations,
\begin{equation}\label{eq:relationD}
\Delta_i(\t_B)=\pm \Delta_{2n+1-i}(\t_B)\qquad (1\le i\le n).
\end{equation}
(One can show that the Young diagrams corresponding to those minors are in conjugation, see Proposition \ref{duality}.)
 That is, the solution is determined by $\Delta_i(\t)$ only for $1\le i\le n$ in
\eqref{eq:DN},
\[
\Delta_i(\t_B)=d_i\langle e_1\wedge\cdots\wedge e_i, Ve^{\Theta_{K}(\t_B)}V^{-1}{n_0n_*}\dot w_* \cdot e_1\wedge\cdots \wedge e_i\rangle.
\]
Here the unipotent matrix $n_*\in\mathcal{N}^-$ is given as follows.
Let us first give the Weyl group $\frak{W}_{B_n}$ associated with the algebra of type $B_n$. 
Each generator of the Weyl group $\frak{W}_{B_n}$ can be expressed by the simple reflections of $A$-type through the Bruhat embedding \cite{BL},
\begin{equation}\label{WeylB}
s^B_i=\left\{\begin{array}{lll}
s^A_is^A_{2n+1-i},\qquad 1\le i\le n-1,\\[1.0ex]
s_n^As_{n+1}^As_n^A, \qquad i=n,
\end{array}\right.
\end{equation}
where $s_i^A$ is a simple reflection of the symmetric group $\frak{S}_{2n+1}$. We have
\[
\frak{W}_{B_n}=\left\langle s_i^B~\Big|~(s_i^B)^2= (s_i^Bs_j^B)^2=e~(|i-j|\ge2),\,(s_{i-1}^Bs_i^B)^3=e~(2\le i\le n-1),\, (s_{n-1}^Bs_n^B)^4=e\,\right\rangle, 
\]
and  $\mathfrak{W}_{B_n}\cong \frak{S}_n\ltimes (\mathbb{Z}_2)^n$, $|\mathfrak{W}_{B_n}|=2^n n!$.  The longest element $w_0$ can be expressed as
\[
w_0=(s_1^Bs_2^B\cdots s_n^B)\cdot(s_1^Bs_2^B\cdots s_n^B)\,\cdots\, (s_1^Bs_2^B\cdots s_n^B),\qquad\text{with}\qquad \ell(w_0)=n^2,
\]
where the elements in the parenthesis repeats $n$ times. 
For example, the $\mathfrak{W}_{B_2}$ is given by
\[
\mathfrak{W}_{B_2}=\left\{ e, ~s_1^B, ~s_2^B, ~s_1^Bs_2^B, ~s_2^Bs_1^B, ~s_1^Bs_2^Bs_1^B, ~s_2^Bs_1^Bs_2^B, s_1^Bs_2^Bs_1^Bs_2^B=w_0\right\}.
\]
In order to compute $n_*\dot w_*$, we first define $y^B_i(\xi)\in O_{2n+1}(\C)$ by
 \begin{equation}\label{eq:pinningB}
 y^B_i(\xi)=\exp(\xi Y_i),\qquad(1\le i\le n),
  \end{equation}
 where $Y_i$'s are defined in \eqref{eq:BY}.
 Also the representation $\dot s_i^B\in O_{2n+1}(\C)$ are defined by
 \begin{equation}\label{eq:pinningB}
 \left\{\begin{array}{lll}
 \dot s^B_i=\dot s_i^A\dot s_{2n+1-i}^A,\qquad(1\le i\le n-1),\\[2.0ex]
  \dot s^B_n=\dot s_n^A\dot s_{n+1}^A\dot s^A_n,
   \end{array}\right.
 \end{equation}
 where $s_i^A\in\mathfrak{S}_{2n+1}$.

Now, according to \eqref{eq:Lusztig1}, let $\hat w$ be defined by the following form with a reduced expression,
\[
\hat w=w_0\cdot w_*^{-1}=s^B_{i_1}s^B_{i_2}\cdots s^B_{i_m}.
\]
Then we define the unipotent matrix $n_*$ by
\[
n_*=y^B_{i_1}(\xi_1)\, y^B_{i_2}(\xi_2)\,\cdots\, y^B_{i_m}(\xi_m),
\]
and $n_*\dot  w_*$ mod$(\mathcal{B}^+)$ gives the generic point of the Bruhat cell $\mathcal{N}^-\dot w_*\mathcal{B}^+$ in the flag variety $O_{2n+1}(\C)/\mathcal{B}^+$.

We now compute the minor $\Delta_n(\t_B)$ for $ w_*\in \mathfrak{W}_B$. To do this, we consider the expansion (see \eqref{eq:exp}, also Lemma \ref{lem:mat_coeff}),
\begin{align}\label{eq:DeltaB}
\Delta_n(\t_B)&=\langle e_1\wedge\cdots\wedge e_n,~g(\t_B)n_* \dot w_*\cdot e_1\wedge\cdots \wedge e_n\rangle\\
&=\sum_{1\le i_1<\cdots<i_n\le 2n+1}\langle e_1\wedge\cdots\wedge e_n,~g(\t_B)\cdot e_{i_1}\wedge \cdots\wedge e_{i_n}\rangle\cdot \langle e_{i_1}\wedge\cdots\wedge e_{i_n},~n_*\dot w_*\cdot e_{1}\wedge \cdots\wedge e_{n}\rangle,\nonumber
\end{align}
where $g(\t_B)=e^{\Theta_{L_B^0}(\t_B)}$ and $g(\t_B^*)=n_* w_*b_*$ (note here that we set $d_n=1$ in \eqref{eq:DN}). In the expansion, we first note that for each $ w\in\mathfrak{W}_B=\mathfrak{S}_n\ltimes \mathbb{Z}_2^n$, there exists $r$ in the subgroup $\mathfrak{N}_B \cong \mathbb{Z}_2^n$ such that
\begin{equation}\label{eq:w-r}
\dot w\cdot e_1\wedge\cdots\wedge e_n=\pm\dot r\cdot e_1\wedge\cdots\wedge e_n,
\end{equation}
where  $\mathfrak{N}_B$ is the normal subgroup representing $\Z_2^n$, which may be defined by
\[
\mathfrak{N}_B=\left\langle r_i\in\mathfrak{W}_B~(1\le i\le n): r_i(j)=\left\{\begin{array}{cll}
2n+2-j,~~&\text{if}~ j=i,\\
j,\quad &\text{if}~ j\ne i.
\end{array}\right.\right\rangle.
\]
Note that  $\mathfrak{N}_B$ is an Abelian group, and $|\mathfrak{N}_B|=2^n=\sum_{k=0}^n\binom{n}{k}$.
Each element can be expressed by
\[
r_k=s^B_ks_{k+1}^B\cdots s_{n-1}^Bs_n^Bs_{n-1}^B\cdots s_{k+1}^Bs_k^B\qquad\text{for}\quad k=1,\ldots,n.
\]
For example, in the case of $B_2$ with $ w_*=e$, the matrix $n_*$ is given by $n_*=y^B_1(\xi_1)y^B_2(\xi_2)y^B_1(\xi_3)y^B_2(\xi_4)$ in the Lusztig coordinate,
\begin{equation}\label{eq:Lusztig}
n_*=\begin{pmatrix}
1 & 0 & 0 & 0 & 0 \\
\xi_1+\xi_3 & 1 & 0 & 0 & 0\\
2\xi_2\xi_3 & 2(\xi_2+\xi_4) & 1 & 0&0\\
2\xi_2^2\xi_3 & 2(\xi_2+\xi_4)^2 & 2(\xi_2+\xi_4) & 1 &0\\
2\xi_1\xi_2^2\xi_3& 2(\xi_3\xi_4^2+\xi_1(\xi_2+\xi_4)^2) & 2(\xi_3\xi_4+\xi_1\xi_2+\xi_1\xi_4) & \xi_1+\xi_3 & 1
\end{pmatrix}.
\end{equation}
We then have
\begin{align*}
&\langle e_1\wedge e_2, \,n_*\cdot e_1\wedge e_2\rangle=\langle e_1\wedge e_2, \, \dot s_1^Bn_*\cdot e_1\wedge e_2\rangle=1,\\
&\langle \dot{r}_2\cdot e_1\wedge e_2, \,n_*\cdot e_1\wedge e_2\rangle=\langle e_1\wedge e_2, \,\dot s_2^Bn_*\cdot e_1\wedge e_2\rangle=2(\xi_2+\xi_4)^2,\\
&\langle \dot{r}_1\cdot e_1\wedge e_2, \,  n_*\cdot e_1\wedge e_2\rangle=\langle e_1\wedge e_2, \, \dot s_2^B\dot s_1^B n_*\cdot e_1\wedge e_2\rangle=2(\xi_3\xi_4+\xi_1(\xi_2+\xi_4))^2,\\
&\langle \dot{r}_1\dot{r}_2\cdot e_1\wedge e_2, \,n_*\cdot e_1\wedge e_2\rangle=\langle e_1\wedge e_2, \,\dot s_2^B\dot s_1^B\dot s_2^Bn_*\cdot e_1\wedge e_2\rangle=4(\xi_2\xi_3\xi_4)^2.
\end{align*}

\begin{remark}
These polynomials $\{1,\, \xi_2+\xi_4, \, \xi_3\xi_4+\xi_1(\xi_2+\xi_4),\, \xi_2\xi_3\xi_4\}$ are the spinors of the spin representation of $n_*$. Notice that these coordinates are useful for the total positivity of the flag variety of $B$-type.
\end{remark}

\begin{remark}
In the Bott-Samelson coordinate, the matrix $n_*$ for the same $w_*$ is given by
\begin{equation}\label{eq:Bott-Samelson}
{n_*=\begin{pmatrix}
1 & 0 & 0 & 0 & 0 \\
\xi_1 & 1 & 0 & 0 & 0\\
-2\xi_2 + 2\xi_1\xi_4 & 2\xi_4 & 1 & 0&0\\
2(\xi_3 - 2\xi_2\xi_4 + \xi_1\xi_4^2) & 2\xi_4^2 & 2\xi_4 & 1 &0\\
-2\xi_2^2+ 2\xi_1\xi_3 & 2\xi_3 & 2\xi_2 & \xi_1 & 1
\end{pmatrix},}
\end{equation}
which may be useful when we discuss several limits $\xi_i\to 0$.
\end{remark}

Equation \eqref{eq:Square} implies that the terms associated with $r\in\mathfrak{N}_B$ are complete squares,
\begin{align*}
\langle e_1\wedge\cdots\wedge e_n,~g(\t_B)\dot{r}\cdot e_1\wedge\cdots \wedge e_n\rangle&=\left(\phi_r(\t_B)\right)^2,\\
\langle \dot{r} e_1\wedge\cdots\wedge e_n,~n_*\dot w_*\cdot e_1\wedge\cdots \wedge e_n\rangle&=\left(\eta_r(\xi; w_*)\right)^2,
\end{align*}
where $\phi_r(\t_B)$ is an exponential polynomial, and $\eta_r(\xi, w_*)$ is a homogeneous polynomial of $\xi_i$'s. Now we have the main theorem in this section.
\begin{theorem}\label{thm:pfaffian}
The $\tau$-function associated with $ w_*$ denoted by $\tau^B_n(\t_B; w_*)$ is given by
\[
\tau^B_n(\t_B, w_*)=\sqrt{\Delta_n(\t_B)}=\sum_{r\in\mathfrak{N}_B}\phi_r(\t_B)\eta_r(\xi; w_*).
\]
\end{theorem}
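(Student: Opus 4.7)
The plan is to compute $\tau^B_n(\t_B; w_*)$ as a matrix coefficient on the spin representation $V(\varpi_n)$ of $\mathrm{Spin}_{2n+1}(\C)$ and then apply the decomposition \eqref{eq:mat_coeff1} of Lemma~\ref{lem:mat_coeff} with respect to a natural orthonormal weight basis of $V(\varpi_n)$ indexed by the normal subgroup $\mathfrak{N}_B\cong\Z_2^n$ of $\mathfrak{W}_B$. The squaring identity \eqref{eq:Square} will then guarantee consistency with the defining relation $(\tau^B_n)^2=\Delta_n$.

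First I would lift $g(\t_B)n_*\dot w_*\in O_{2n+1}(\C)$ to $\tilde g(\t_B)\tilde n_*\dot{\tilde w}_*\in\mathrm{Spin}_{2n+1}(\C)$ via the covering map $\pi$. By \eqref{eq:Square} and \eqref{eq:Btau} we have
\[
\tau^B_n(\t_B;w_*)=c^{\mathrm{Spin}_{2n+1}}_{\varpi_n}\!\bigl(\tilde g(\t_B)\tilde n_*\dot{\tilde w}_*\bigr)
=\bigl\langle v_{\varpi_n},\,\tilde g(\t_B)\tilde n_*\dot{\tilde w}_*\cdot v_{\varpi_n}\bigr\rangle_{\varpi_n},
\]
picking the sign of the square root determined by continuity from $\t_B=\0$. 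Next I would choose an orthonormal basis of the $2^n$-dimensional spin module consisting of weight vectors. Since the weights of $V(\varpi_n)$ are $\tfrac12(\pm\varepsilon_1\pm\cdots\pm\varepsilon_n)$, which form a single $\mathfrak{W}_B$-orbit with stabilizer $\mathfrak{S}_n$, the weight vectors are naturally parameterized by the cosets $\mathfrak{W}_B/\mathfrak{S}_n\cong\mathfrak{N}_B$; I denote them $\{v_r:r\in\mathfrak{N}_B\}$ with $v_e=v_{\varpi_n}$ and $v_r\propto\dot{\tilde r}\cdot v_{\varpi_n}$.

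Then applying \eqref{eq:mat_coeff1} of Lemma~\ref{lem:mat_coeff} with $g=\tilde g(\t_B)$ and $h=\tilde n_*\dot{\tilde w}_*$, I get
\[
\tau^B_n(\t_B;w_*)=\sum_{r\in\mathfrak{N}_B}
\bigl\langle v_{\varpi_n},\,\tilde g(\t_B)\cdot v_r\bigr\rangle_{\varpi_n}
\bigl\langle v_r,\,\tilde n_*\dot{\tilde w}_*\cdot v_{\varpi_n}\bigr\rangle_{\varpi_n}.
\]
Setting
\[
\phi_r(\t_B):=\bigl\langle v_{\varpi_n},\,\tilde g(\t_B)\cdot v_r\bigr\rangle_{\varpi_n},\qquad
\eta_r(\xi;w_*):=\bigl\langle v_r,\,\tilde n_*\dot{\tilde w}_*\cdot v_{\varpi_n}\bigr\rangle_{\varpi_n},
\]
yields the claimed decomposition. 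Consistency with the squaring on $V(2\varpi_n)\subset V(\varpi_n)^{\otimes 2}$ is immediate from \eqref{eq:Square}: squaring the above sum, grouping diagonal terms, and identifying $v_r\otimes v_r$ with the image of $\dot{\tilde r}\cdot v_{2\varpi_n}=\dot r\cdot(e_1\wedge\cdots\wedge e_n)$ recovers exactly the expansion \eqref{eq:DeltaB} of $\Delta_n(\t_B)$ restricted to $r\in\mathfrak{N}_B$ (the remaining $e_{i_1}\wedge\cdots\wedge e_{i_n}$ that are not of the form $\dot r\cdot e_1\wedge\cdots\wedge e_n$ enter only through cross-terms which cancel by orthogonality of distinct weight spaces). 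That $\phi_r$ is an exponential polynomial follows because $\Theta_{L^0_B}(\t_B)$ acts in the spin representation by a matrix whose exponential has entries in $\C[e^{\pm\kappa_i t_{2k+1}},\t_B]$; that $\eta_r(\xi;w_*)$ is a homogeneous polynomial in the Lusztig parameters $\xi$ follows because $\tilde n_*$ is a product of exponentials of nilpotent Chevalley generators acting on the finite-dimensional representation $V(\varpi_n)$.

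The principal obstacle is the bookkeeping for the spin cover: one must verify that the lifts $\tilde n_*$ and $\dot{\tilde w}_*$ can be chosen coherently so that the individual matrix coefficients $\phi_r$ and $\eta_r$ are single-valued (rather than two-valued), and that the orthonormal weight basis $\{v_r\}$ can be normalized so that $v_r=\pm\dot{\tilde r}\cdot v_{\varpi_n}$ gives the correct signs in \eqref{eq:w-r}. Once these sign choices are fixed consistently on a connected component around $\t_B=\0$, the square of the resulting sum matches $\Delta_n(\t_B)$ by \eqref{eq:Square}, forcing the sum itself (up to an overall sign absorbed into the definition of $\tau^B_n$) to equal $\sqrt{\Delta_n(\t_B)}$, completing the proof.
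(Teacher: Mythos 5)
Your proposal is correct and follows essentially the same route as the paper: lift to $\mathrm{Spin}_{2n+1}$, identify $\tau_n^B$ with the spin matrix coefficient $c_{\varpi_n}^{\mathrm{Spin}_{2n+1}}(\tilde g\tilde h)$ via \eqref{eq:mat_coeff3}/\eqref{eq:Square}, and expand over an orthonormal basis $\{u_r:r\in\mathfrak{N}_B\}$ of $V(\varpi_n)$ using \eqref{eq:mat_coeff1}, with $\phi_r$ and $\eta_r$ the two resulting matrix coefficients. Your additional care with the coherence of the spin lifts and the sign normalization of the weight basis only makes explicit what the paper leaves implicit.
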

\begin{proof}
 From \eqref{eq:mat_coeff3} in Lemma \ref{lem:mat_coeff}, we first note that 
\[
\langle e_1\wedge\cdots\wedge e_n, g(\t_B)h\cdot e_1\wedge\cdots\wedge e_n\rangle
=\left(\langle v_{\varpi_n}, \tilde g(\t_B)\tilde h\cdot v_{\varpi_n}\rangle_{\varpi_n}\right)^2,
\]
where $\pi(\tilde g(\t_B))=g(\t_B)$ and $\pi(\tilde h)=n_*\dot{w}_*$ with the covering map $\pi:\text{Spin}_{2n+1}\to SO_{2n+1}$.  Also note that $\phi_r(\t_B)$ and $\eta_r(\xi;w_*)$ can be expressed by
\begin{align*}
\phi_r(\t_B)=\langle v_{\varpi_n},\tilde g(\t_B)\cdot u_r\rangle_{\varpi_n}\quad \text{and}\quad 
\eta_r(\xi;w_*)=\langle u_r,\tilde h\cdot v_{\varpi_n}\rangle_{\varpi_n},
\end{align*}
where $\{u_r:r\in\mathfrak{N}_B\}$ is the set of orthonormal basis of the representation $V(\varpi_n)$.
Then from \eqref{eq:mat_coeff1} in Lemma \ref{lem:mat_coeff}, we have
\[
\langle v_{\varpi_n}, \tilde g(\t_B)\tilde h\cdot v_{\varpi_n}\rangle_{\varpi_n}=\sum_{r\in\mathfrak{N}_B}
\langle v_{\varpi_n},\tilde g(\t_B)\cdot u_r\rangle_{\varpi_n}\,\langle u_r,\tilde h\cdot v_{\varpi_n}\rangle_{\varpi_n},
\]
which gives the desired formula.
\end{proof}



\subsection{Polynomial solutions of $B$-type}
The polynomial solution of $\tau_k^B(\t_B)$ for each $ w_*\in\mathfrak{W}_B$ is obtained by setting $g(\t_B)=\mathcal{P}_n^B(\t_B)$, i.e.
\begin{align*}
\tau^B_k(\t_B)&=\langle e_1\wedge\cdots\wedge e_k,~\mathcal{P}_n^B(\t_B)n_*\dot w_*\cdot e_{1}\wedge\cdots\wedge e_{k}\rangle,\qquad (1\le k\le n-1),\\
\tau^B_n(\t_B)&=\sqrt{\langle e_1\wedge\cdots\wedge e_n,~\mathcal{P}_n^B(\t_B)n_*\dot w_*\cdot e_{1}\wedge\cdots\wedge e_{n}\rangle},
\end{align*}
where $\mathcal{P}_n^B(\t_B)$ is defined by $\mathcal{P}_{2n+1}(t_1,0,t_3,0,\ldots)$ in \eqref{eq:polyM}.

In order to compute these $\tau$-functions, we first recall that for an index set $\{i_1,i_2,\ldots,i_k\}$, the Schur polynomial corresponding to the index set is given by
\begin{equation}\label{eq:SchurW}
S_{\lambda}(\t_B)=\langle e_1\wedge\cdots\wedge e_{k},\,\mathcal{P}_n^B(\t_B)\cdot e_{i_1}\wedge\cdots\wedge e_{i_k}\rangle\quad\text{with}\quad \lambda=\left(i_k-k,\, i_{k-1}-(k-1),\,\cdots\,, \,i_1-1\right).
\end{equation}
Then we have the following proposition.
\begin{proposition}\label{duality}
Let $\lambda'$ be the transposed Young diagram of $\lambda$. Then the Schur polynomials $S_\lambda(\t_B)$ and $S_{\lambda'}(\t_B)$ are the same, i.e.
\[
S_{\lambda}(\t_B)=S_{\lambda'}(\t_B).
\]
\end{proposition}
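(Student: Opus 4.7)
The plan is to derive $S_\lambda(\t_B)=S_{\lambda'}(\t_B)$ from the classical involution $\omega$ on the ring of symmetric functions, restricted to odd times. The idea is that $\t_B$ lies in the fixed-point locus of $\omega$, and $\omega$ sends $S_\lambda$ to $S_{\lambda'}$.

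First I would record the two Jacobi--Trudi formulas that will drive the argument:
\[
S_\lambda=\det\bigl(h_{\lambda_i-i+j}\bigr)_{1\le i,j\le \ell(\lambda)}=\det\bigl(e_{\lambda'_i-i+j}\bigr)_{1\le i,j\le \ell(\lambda')}.
\]
In the time-variable realization used in the paper, the elementary Schur polynomial $p_n(\t)$ plays the role of $h_n$, since $\exp(\sum_i t_i k^i)=\sum_n p_n(\t)k^n$. I would then produce the analogous formula for $e_n$ by computing
\[
\sum_{n\ge 0}e_n\,k^n=\exp\Bigl(\sum_{i\ge 1}(-1)^{i-1}t_i\,k^i\Bigr),
\]
which gives $e_n=p_n(t_1,-t_2,t_3,-t_4,\ldots)$. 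In particular, the involution $\omega$ on the symmetric functions is implemented in time variables by the map $\sigma: t_i\mapsto (-1)^{i-1}t_i$, and $\sigma$ fixes $\t_B$ pointwise, since only odd-indexed components of $\t_B$ are nonzero.

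The key specialization step is the identity
\[
p_n(\t_B)=e_n\big|_{\t=\t_B}\qquad (n\ge 0),
\]
which is immediate from the generating-function computation above because $(-1)^{i-1}t_i=t_i$ whenever $i$ is odd and $t_i=0$ whenever $i$ is even. Plugging this into the second Jacobi--Trudi formula gives
\[
S_\lambda(\t_B)=\det\bigl(e_{\lambda'_i-i+j}\bigr)\Big|_{\t=\t_B}=\det\bigl(p_{\lambda'_i-i+j}(\t_B)\bigr),
\]
while the first Jacobi--Trudi formula applied to the conjugate diagram $\lambda'$ yields $S_{\lambda'}(\t_B)=\det\bigl(p_{\lambda'_i-i+j}(\t_B)\bigr)$. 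The two determinants are the same, so $S_\lambda(\t_B)=S_{\lambda'}(\t_B)$.

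There is no real obstacle here; the entire argument is bookkeeping once one observes that the $\omega$-involution acts by the sign flip $t_{2k}\to -t_{2k}$ and that $\t_B$ is fixed. The only mildly delicate point is keeping the two Jacobi--Trudi formulas straight so that the determinant produced by the $e$-version on $\lambda$ matches the determinant produced by the $h$-version on $\lambda'$; after the specialization $h_n=e_n$ on $\t_B$ this matching is automatic, and the proof concludes in one line.
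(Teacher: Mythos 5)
Your proof is correct and follows essentially the same route as the paper: the paper also reduces the claim to the identity $h_k(\t_B)=e_k(\t_B)$ (its Lemma on $S_{(k)}(\t_B)=S_{(1^k)}(\t_B)$, proved by the substitution $\kappa\to-\kappa$, $\t_B\to-\t_B$ in the generating function, which is the same computation as your $\omega$-fixed-point observation) and then concludes via the two Jacobi--Trudi determinants.
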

To prove this, we start with the following lemma as a special case of Proposition \ref{duality}.
\begin{lemma}\label{lem:dual}
For $\lambda=(k)$, the Young diagram with $k$ horizontal boxes, we have
\[
S_{(k)}(\t_B)=S_{(1^k)}(\t_B),
\]
where the transposed diagram of $(k)$ is $(1^k)$, the diagram with $k$ vertical boxes.
\end{lemma}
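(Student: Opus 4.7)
The plan is to pass to generating functions in an auxiliary variable $z$. By definition of the elementary Schur polynomials, one has
\[
\sum_{k\ge 0} S_{(k)}(\t)\, z^k \;=\; \sum_{k\ge 0} p_k(\t)\,z^k \;=\; \exp\Bigl(\sum_{n\ge 1} t_n z^n\Bigr).
\]
For the column shape I would use the Jacobi--Trudi identity $S_{(1^k)}(\t)=\det(p_{1-i+j}(\t))_{1\le i,j\le k}$, or equivalently the symmetric-function fact that $S_{(1^k)}$ is the $k$-th elementary symmetric function, to derive
\[
\sum_{k\ge 0} S_{(1^k)}(\t)\, z^k \;=\; \exp\Bigl(\sum_{n\ge 1} (-1)^{n-1} t_n z^n\Bigr).
\]
This last identity is obtained by expanding $\log\prod_i(1+x_iz)$ after writing $t_n=\tfrac{1}{n}\sum_i x_i^n$; alternatively it can be checked directly by induction from the Jacobi--Trudi determinant, expanding along the first column.

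With these two generating functions in hand, I would simply specialize to $\t=\t_B=(t_1,0,t_3,0,\ldots)$. All even-indexed times drop out of both exponents, and for odd $n$ the sign $(-1)^{n-1}$ equals $+1$, so both generating functions collapse to the same expression
\[
\exp\Bigl(\sum_{n\ \text{odd}} t_n z^n\Bigr).
\]
Matching coefficients of $z^k$ then yields $S_{(k)}(\t_B)=S_{(1^k)}(\t_B)$, as claimed.

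I do not expect any serious obstacle for this lemma: the essential mechanism is that the $B$-type reduction $t_{2i}=0$ is precisely the one that identifies complete and elementary symmetric functions, which governs the transposition $\lambda\mapsto\lambda'$ for the two extreme shapes $(k)$ and $(1^k)$. The only mild care needed is to justify the generating-function formula for $S_{(1^k)}$ within the conventions used in the paper, but this is immediate from the definitions already recalled in Section~\ref{sec:fKT}. This hook-shape case will then serve as the base step for the full Proposition~\ref{duality}, where the general $S_\lambda(\t_B)=S_{\lambda'}(\t_B)$ should follow by combining this identity with the Jacobi--Trudi (or Giambelli) formula expressing an arbitrary $S_\lambda$ in terms of hook Schur polynomials.
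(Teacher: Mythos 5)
Your proof is correct and is essentially the same argument as the paper's: the paper substitutes $\kappa\to-\kappa$, $\t\to-\t$ in the generating function $\exp(\sum_i \kappa^{2i+1}t_{2i+1})=\sum_k \kappa^k S_{(k)}(\t_B)$ and uses $(-1)^kS_{(k)}(-\t)=S_{(1^k)}(\t)$, which is precisely your generating-function identity $\sum_k S_{(1^k)}(\t)z^k=\exp(\sum_n(-1)^{n-1}t_nz^n)$ restricted to odd times.
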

\begin{proof}
Recall that
\[
\exp\left(\sum_{i=0}^\infty \kappa^{2i+1}t_{2i+1}\right)=\sum_{k=0}^\infty \kappa^k S_{(k)}(\t_B).
\]
Replacing $\kappa\to -\kappa$ and $\t_B\to-\t_B$, we have
\[
\exp\left(\sum_{i=0}^\infty \kappa^{2i+1}t_{2i+1}\right)=\sum_{k=0}^\infty\kappa^k (-1)^k S_{(k)}(-\t_B)=\sum_{k=0}^\infty \kappa^k S_{(1^k)}(\t_B),
\]
where we have used $(-1)^kS_{(k)}(-\t_B)=S_{(1^k)}(\t_B)$. This proves the lemma.
\end{proof}

\smallskip
Now we prove the proposition.

\smallskip

\emph{Proof of Proposition \ref{duality}}. 
Let $\lambda=(\lambda_1,\lambda_2,\ldots,\lambda_r)$ and denote by $\lambda'=(\lambda'_1,\lambda'_2,\ldots,\lambda_s')$ its transposed diagram.
Recall the Jacobi-Trudi identity of the Schur polynomial,
\[
S_{\lambda}(\t)=\left|\begin{matrix}
h_{\lambda_1}& h_{\lambda_2+1} &\cdots &h_{\lambda_r+r-1}\\
h_{\lambda_1-1}&h_{\lambda_2}&\cdots &h_{\lambda_r+r-2}\\
\vdots &\vdots&\ddots &\vdots\\
h_{\lambda_1-r+1}&h_{\lambda_2-r+2}&\cdots &h_{\lambda_r}
\end{matrix}\right|=\left|\begin{matrix}
e_{\lambda'_1}& e_{\lambda'_2+1} &\cdots &e_{\lambda'_s+s-1}\\
e_{\lambda'_1-1}&e_{\lambda'_2}&\cdots &e_{\lambda'_s+s-2}\\
\vdots &\vdots&\ddots &\vdots\\
e_{\lambda'_1-s+1}&e_{\lambda'_2-s+2}&\cdots &e_{\lambda'_s}
\end{matrix}\right|,
\]
where $h_k=S_{(k)}$ and $e_k=S_{(1^k)}$. Then Lemma \ref{lem:dual}, i.e. $h_k=e_k$ proves the proposition.
$\square$

Now we consider  the Schur polynomial associated with $w\in\mathfrak{W}_B$,
\[
S_{\lambda(w)}(\t_B)=\langle e_1\wedge\cdots\wedge e_n,~ \mathcal{P}_n^B(\t_B)\dot w\cdot e_1\wedge\cdots\wedge e_n\rangle,
\]
where $\lambda(w)=(i_n-n,\ldots,i_2-2,i_1-1)$ with $\{w(1),\ldots,w(n)\}=\{i_1<i_2<\cdots<i_n\}$.
To find $\lambda(w)$, we first note the following lemma.
\begin{lemma}\label{lem:w-r}
For an element $r_k\in\mathfrak{N}_B$, the corresponding Young diagram $\lambda(r_k)$ is expressed in the following Frobenius notation,
\[
\lambda(r_k)=(n-k+1|n-k),
\]
which is the hook diagram $(n-k+2, 1^{n-k})$.
\end{lemma}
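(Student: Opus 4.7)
The plan is to unpack the definitions of $r_k$ and of the correspondence $w \mapsto \lambda(w)$, and then do a direct calculation. By definition, $r_k \in \mathfrak{N}_B \subset \mathfrak{W}_B$ fixes every $j \ne k$ in $\{1,\ldots,n\}$ and sends $k$ to $2n+2-k$. Hence the unordered set $\{r_k(1),\ldots,r_k(n)\}$ equals $\{1,2,\ldots,k-1,k+1,\ldots,n,\,2n+2-k\}$. Since $2n+2-k \ge n+2$ for $1 \le k \le n$, reordering as an increasing sequence $i_1 < i_2 < \cdots < i_n$ gives
\[
i_j = j \text{ for } 1 \le j \le k-1,\qquad i_j = j+1 \text{ for } k \le j \le n-1,\qquad i_n = 2n+2-k.
\]

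Next I would apply the rule defining $\lambda(w)$ used throughout the paper (cf.\ Theorem \ref{thm:Schurexpand} and formula \eqref{eq:SchurW}): the partition is read off as $i_j - j$, listed in decreasing order with zeros dropped. Computing,
\[
i_j - j = 0 \text{ for } 1 \le j \le k-1,\qquad i_j - j = 1 \text{ for } k \le j \le n-1,\qquad i_n - n = n-k+2.
\]
The nonzero parts, written in weakly decreasing order, are therefore $(n-k+2, \underbrace{1,1,\ldots,1}_{n-k})$, i.e.\ $\lambda(r_k) = (n-k+2,\,1^{n-k})$, a hook.

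Finally I would translate this hook into Frobenius notation. A hook $(a+1,1^b)$ equals $(a\mid b)$; here $a = (n-k+2) - 1 = n-k+1$ and $b = n-k$, yielding $\lambda(r_k) = (n-k+1 \mid n-k)$ as claimed. There is no genuine obstacle in this argument; the only point requiring care is the bookkeeping between the convention $\lambda_k = i_k - k$ (which produces the parts in increasing order) and the standard convention $\lambda_1 \ge \lambda_2 \ge \cdots$, but this is immediate once noted.
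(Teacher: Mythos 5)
Your proposal is correct and follows essentially the same route as the paper: identify the image set $\{1,\ldots,k-1,k+1,\ldots,n,2n+2-k\}$, read off the partition via $i_j-j$, and convert the resulting hook $(n-k+2,1^{n-k})$ to Frobenius notation $(n-k+1\mid n-k)$. The paper's own proof is just a terser version of this same computation, so no further comparison is needed.
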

\begin{proof}
Since we have
\[
r_k\cdot(1,2,\ldots,n)=(1,\ldots, k-1,\bar{k}, k+1,\ldots,n)\qquad\text{with}\qquad \bar{k}=2n+2-k.
\]
The corresponding Young diagram is $\lambda=(n+2-k, 1^{n-k})$, which is $(n-k+1|n-k)$ in the Frobenius notation.
\end{proof}

Then we have the following proposition on $\lambda(w)$.
\begin{proposition}\label{prop:Frobenius}
In the Frobeneus notation, the Young diagram $\lambda(w)$ can be expressed in the form,
\[
\lambda(w)=(\alpha_1,\alpha_2,\ldots,\alpha_m\,|\,\alpha_1-1,\alpha_2-1,\ldots,\alpha_m-1),
\]
where $\alpha_m\ge 1$ with some constant $m$.
\end{proposition}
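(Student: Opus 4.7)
The plan is to parameterize $\lambda(w)$ by the set $I = \{w(1), \ldots, w(n)\}$ and exploit the type-$B$ symmetry of $w$ to establish the arm $=$ leg $+\,1$ relation diagonal-by-diagonal. Via the Bruhat embedding \eqref{WeylB}, every $w \in \mathfrak{W}_B$ satisfies $w(2n+2-i) = 2n+2-w(i)$; hence $n+1 \notin I$, and for each $i \in [1,n]$ exactly one of $\{i, 2n+2-i\}$ belongs to $I$. I would split $I = C \sqcup D$ with $C = I \cap [1,n] = \{c_1 < \cdots < c_a\}$ and $D = I \cap [n+2,2n+1] = \{d_1 < \cdots < d_m\}$, so $a + m = n$; the B-symmetry makes $E := \{2n+2-d : d \in D\}$ the complement of $C$ in $[1,n]$, and I write $\tilde e_1 < \cdots < \tilde e_m$ for its increasing reordering.

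First I would verify that $\lambda_j = i_{n-j+1}-(n-j+1)$ satisfies $\lambda_j \ge j$ exactly for $1 \le j \le m$: when $j \le m$ the index $n-j+1 > a$ gives $i_{n-j+1} = d_{m-j+1} \ge n+2$, and when $j > m$ one has $i_{n-j+1} = c_{n-j+1} \le n$, forcing $\lambda_j < j$. Thus the diagram has exactly $m$ diagonal cells, matching the $m$ appearing in the Frobenius expression.

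For $j \in [1,m]$, using the order-reversing correspondence $d_{m-j+1} = 2n+2-\tilde e_j$, the arm length is
\[
\alpha_j := \lambda_j - j = (2n+2-\tilde e_j) - (n-j+1) - j = n+1 - \tilde e_j.
\]
For the leg $\lambda'_j - j$, I would split $\lambda'_j = \#\{l : i_l \ge l+j\}$ into two ranges. All $l \in [a+1,n]$ contribute, since $d_{l-a} \ge n+1+(l-a) = l+(m+1) > l+j$ using $j \le m$, yielding a contribution of $m$. For $l \in [1,a]$ the identity $c_l - l = \#(E \cap [1,c_l])$ converts $c_l \ge l+j$ into $c_l > \tilde e_j$, and a direct count against the partition $C \sqcup E = [1,n]$ gives $\#(C \cap (\tilde e_j, n]) = a - (\tilde e_j - j)$. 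Hence $\lambda'_j = n + j - \tilde e_j$, so $\lambda'_j - j = \alpha_j - 1$.

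Assembling these computations yields the Frobenius form $(\alpha_1, \ldots, \alpha_m \mid \alpha_1-1, \ldots, \alpha_m-1)$, with $\alpha_m = n+1 - \tilde e_m \ge 1$ because $\tilde e_m \le n$, and with the arms strictly decreasing since the $\tilde e_i$ are strictly increasing. The main delicacy I expect is the uniform contribution from the range $l \in [a+1, n]$ in the leg count, which must hold for every $j \le m$ simultaneously; this reduces to the elementary estimate $d_k \ge n+k+1$ coming from $D \subset [n+2, 2n+1]$ with $|D \cap [1, d_k]| = k$, and verifying that the count in the $l \in [1,a]$ range is exactly an initial segment (monotone in $l$) so that it can be evaluated purely in terms of the $\tilde e_j$-th threshold.
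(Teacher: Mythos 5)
Your computation is correct, and it arrives at the right structure: with $E=\{\tilde e_1<\cdots<\tilde e_m\}$ the complement of $I\cap[1,n]$ in $[1,n]$, you get arm$_j=n+1-\tilde e_j$ and leg$_j=n-\tilde e_j$, which is exactly the hook $(n-k+1\,|\,n-k)$ of Lemma \ref{lem:w-r} with $k=\tilde e_j$; your worked conventions also match the paper's example $(6,4,1\,|\,5,3,0)$ for $n=7$, $w=r_2r_4r_7$. The route, however, is genuinely different in execution. The paper first proves the single-hook case (Lemma \ref{lem:w-r}) and then, for general $w$, reads the diagram off Sato's Maya diagram: the southeast boundary path of $\lambda(w)$ is labeled $1,\dots,2n+1$, and the type-$B$ symmetry $w(2n+2-i)=2n+2-w(i)$ forces the horizontal and vertical edges of each diagonal hook to carry a pair $\{k,\overline{k}\}$, whence each hook is $(n-k+1\,|\,n-k)$; this is presented largely through an example. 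You bypass the Maya-diagram picture entirely and compute $\lambda_j$ and $\lambda'_j$ directly from the index set $I=C\sqcup D$, which yields a fully general, self-contained verification of the arm/leg relation (and, as a bonus, the explicit formula $\alpha_j=n+1-\tilde e_j$ and the strict decrease of the $\alpha_j$). The paper's argument buys a transparent structural explanation of why the hooks of the individual $r_k$'s stack along the diagonal; yours buys rigor. One immaterial slip: the set $\{l\in[1,a]:c_l-l\ge j\}$ is a \emph{final} segment (since $c_l-l$ is non-decreasing in $l$), not an initial one, but your count $\#(C\cap(\tilde e_j,n])=a-(\tilde e_j-j)$ does not depend on this and is correct as stated.
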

\begin{proof}
 Let us take the following example with $n=7$. We consider
 \[
  w\cdot(1,2,3,4,5,6,7)=\dot r_2\dot r_4 \dot r_7(1,2,3,4,5,6,7)=(1,\bar{2},3,\bar{4},5,6,\bar{7})=(1, 14, 3, 12, 5, 6, 9).
 \]
 Then from Lemma \ref{lem:w-r}, we have
 \[
 \lambda(r_2)=(6|5),\qquad \lambda(r_4)=(4|3),\qquad \lambda(r_7)=(1|0).
 \]
 That is, this proposition states that 
 \[
 \lambda(w)=\lambda(r_2r_4r_7)=(6,4,1|5,3,0).
 \]
 This may be understood as follows.
First express $ w$ in the following diagram (Sato's Maya-diagram),
\medskip
\[
\young[15][12][\hskip0.4cm\circle*{8},\wb,\bb,\wb,\bb,\bb,\wb,\wb,\bb,\wb,\wb,\bb,\wb,\bb,\wb]
\]

\medskip
\noindent
which can also be written as 
\[
w\cdot (1,2,\ldots,15)=(1,\overline 2,3,\overline 4,5,6,\overline 7,\overline{\bf 8},9,\overline{10},\overline{11},12,\overline{13},14,\overline{15}),
\]
where $\bar{k}$ for an index $k$ is marked with a white stone at the $k$-th box in the Maya diagram, and other indices without ``\emph{bar}'' are marked with the black stones. In particular, the 8-th ($(n+1)$-th) box from left is always marked with a white stone.  Note that $\overline{k}=16-k$, in general, $\overline{k}=2n+2-k$, and $\overline{n+1}=n+1$.  Then the corresponding Young diagram is given by $\lambda_{n-j+1}=i_j-j$, i.e. in this example, we have
\[
\lambda(w)=(7,6,4,2,2,1,0),
\]
which is
\medskip

\[
\young[7,6,4,2,2,1][8].
\]

\medskip
\noindent
It is known that the Maya diagram has a direct connection with the Young diagram.
Starting from the bottom left corner, we can take the lattice path along the boundary of the Young diagram in counterclockwise direction and assign each edge with the number from 1 to $2n+1$, so that each horizontal edge is labeled by an index with ``$bar$'' and each vertical edge is labeled by the index without ``$bar$''.
The horizontal and vertical edges in each hook diagram are labeled by $k$ and $\overline{k}=2n+2-k$ for some number $k$. Then it is easy to see that the number of vertical boxes in the hook diagram is given by $n-k+1$, and that the number of the horizontal boxes is $\overline{k}-(n+1)+1=n+2-k$. That is, in the Frobenius 
notation, this hook diagram is expressed by $(n-k+1,n-k)$, which proves the lemma.  Note that the number $m$ in the formula is given by the total number of indices with ``$bar$'' in $w(1,\ldots,n)$, i.e. the length of $r$ in $w$.
In the example, we have $m=3$ with $(\overline{2}, \overline{4},\overline{7})$ ($r=r_2r_4r_7$), and then we have
\[
\lambda(w)=(6,4,1|5,3,0).
\]
\end{proof}

Now we have the Jozefiak-Pragacz type identity  of  the Schur polynomial $S_{\lambda(w)}(\t_B)$ for each $w\in\mathfrak{W}_B$\cite{JP:91,Yo:89}.
\begin{proposition}\label{prop:SchurQ}
Let $\lambda( w)$ be given in the following Frobenius formula,
\[
\lambda(w)=(\alpha_1-1,\alpha_2-1,\ldots,\alpha_s-1\,|\,\alpha_1,\alpha_2,\ldots,\alpha_s).
\]
Then the Schur polynomial can be expressed by
\[
S_{\lambda(w)}(\t_B)=\frac{1}{2^s}Q_{\alpha}(\t_B)^2,
\]
where $Q_{\alpha}(\t_B)$ is the $Q$-Schur polynomial with the Young diagram $\alpha=(\alpha_1,\alpha_2,\ldots,\alpha_s)$.
Note here that if $s$ is odd, then we add $0$ at the end so that $\alpha=(\alpha_1,\ldots,\alpha_s,0)$.
\end{proposition}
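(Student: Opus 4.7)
The plan is to derive this Jozefiak--Pragacz type identity by combining the Giambelli determinantal formula for Schur functions of partitions in Frobenius form with Schur's Pfaffian formula for $Q$-Schur polynomials, crucially exploiting the self-duality $h_n(\t_B) = e_n(\t_B)$ at odd times established in Lemma \ref{lem:dual} (Proposition \ref{duality}).

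First I would invoke Giambelli's formula, which expresses a Schur polynomial with Frobenius coordinates $(a_i\,|\,b_j)$ as the determinant of the $s\times s$ matrix whose entries are the hook Schur polynomials $S_{(a_i\,|\,b_j)}$. Applied to $\lambda(w)$ in the form given in the statement, this yields
\[
S_{\lambda(w)}(\t_B) \;=\; \det\!\Bigl(S_{(\alpha_i-1\,|\,\alpha_j)}(\t_B)\Bigr)_{1 \le i,j \le s}.
\]
Each hook Schur polynomial has the classical expansion $S_{(a\,|\,b)} = \sum_{k\ge 0}(-1)^k h_{a+1+k}\,e_{b-k}$; by Lemma \ref{lem:dual} the complete and elementary symmetric polynomials coincide at $\t_B$, so each entry collapses to a bilinear expression in the elementary Schur polynomials $p_m(\t_B)$ alone.

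Next, I would identify these bilinear entries with the two-row Schur $Q$-polynomials $Q_{(\alpha_i,\alpha_j)}(\t_B)$. Schur's definition of $Q_{(a,b)}$ produces exactly such an alternating bilinear form in the $p_m(\t_B)$'s at odd times, differing from the hook entry only by a factor of $\tfrac{1}{2}$ together with the antisymmetrization in $(i,j)$. Pulling out one factor of $\tfrac{1}{2}$ from each row (or column), the Giambelli determinant becomes
\[
S_{\lambda(w)}(\t_B) \;=\; \frac{1}{2^s}\,\det\!\Bigl(Q_{(\alpha_i,\alpha_j)}(\t_B)\Bigr)_{1\le i,j\le s},
\]
where the displayed matrix is now skew-symmetric in $(i,j)$.

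Finally, applying Schur's classical Pfaffian identity $Q_\alpha(\t_B) = \mathrm{Pf}\bigl(Q_{(\alpha_i, \alpha_j)}(\t_B)\bigr)_{1\le i,j\le s}$ (with $\alpha_{s+1} = 0$ adjoined when $s$ is odd, as noted in the statement), together with the elementary fact $\det A = \mathrm{Pf}(A)^2$ for any skew-symmetric $A$, we arrive at the claimed formula $S_{\lambda(w)}(\t_B) = \tfrac{1}{2^s}\, Q_\alpha(\t_B)^2$. The main obstacle I anticipate is the careful bookkeeping of the several normalization conventions simultaneously in play --- the factor of $2$ between the paper's $Q_n$ and the standard Schur $Q$, the signs in the hook expansion, and the arm--leg orientation in the Frobenius symbol used in the statement --- which must conspire to yield exactly the coefficient $1/2^s$ and, in particular, to produce an honestly skew-symmetric matrix before the Pfaffian identity is applied.
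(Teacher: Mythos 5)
Your outline follows the paper's route (Giambelli in Frobenius form, the hook expansion, the self-duality $h_k=e_k$ from Lemma \ref{lem:dual}, then a Pfaffian identity), but there is a genuine gap at the decisive step. When you rewrite the hook entry at odd times you get, as the paper computes,
\[
S_{(\alpha_i-1\,|\,\alpha_j)}(\t_B)=\sum_{l=0}^{\alpha_j}(-1)^l Q_{\alpha_i+l}Q_{\alpha_j-l}
=\tfrac12\bigl(Q_{\alpha_i}Q_{\alpha_j}+Q_{\alpha_i,\alpha_j}\bigr),
\]
so after extracting $1/2^s$ the Giambelli matrix is \emph{not} $\bigl(Q_{\alpha_i,\alpha_j}\bigr)$ but $\bigl(Q_{\alpha_i,\alpha_j}+Q_{\alpha_i}Q_{\alpha_j}\bigr)$: a skew-symmetric matrix plus the symmetric rank-one term $Q_{\alpha_i}Q_{\alpha_j}$. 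In particular its diagonal entries are $Q_{\alpha_i}^2\neq 0$, so your assertion that ``the displayed matrix is now skew-symmetric'' is false, and $\det A=\mathrm{Pf}(A)^2$ cannot be applied to it directly. You flag the normalization bookkeeping as the anticipated obstacle, but the real issue is structural, not a matter of constants.

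The missing idea is how to dispose of the rank-one perturbation, and this is where the paper does its actual work: it rewrites $\det\bigl(Q_{\alpha_i,\alpha_j}+Q_{\alpha_i}Q_{\alpha_j}\bigr)$ as a bordered $(s+1)\times(s+1)$ determinant with last row $(-Q_{\alpha_1},\ldots,-Q_{\alpha_s},1)$ and last column $(Q_{\alpha_1},\ldots,Q_{\alpha_s},1)^T$, expands to get $\det(Q_{\alpha_i,\alpha_j})+\sum_{i,j}\Delta_{i,j}Q_{\alpha_i}Q_{\alpha_j}$ with $\Delta_{i,j}$ the cofactors, and then splits into cases. For $s$ even the cofactor matrix of a skew-symmetric matrix is itself skew-symmetric, so the correction sum vanishes and one is left with $\det(Q_{\alpha_i,\alpha_j})=Q_\alpha^2$. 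For $s$ odd, $\det(Q_{\alpha_i,\alpha_j})=0$ and one instead replaces the corner $1$ by $0$ in the bordered determinant, obtaining the $(s+1)\times(s+1)$ skew-symmetric matrix whose Pfaffian is $Q_{\tilde\alpha}$ with $\tilde\alpha=(\alpha_1,\ldots,\alpha_s,0)$ --- which is exactly why the statement instructs you to append a $0$ when $s$ is odd. Your proposal never addresses either the vanishing of the symmetric contribution in the even case or the augmentation in the odd case, so as written it does not close.
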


\begin{proof} With the Frobenius notation,  the Giambelli formula gives
\[
S_{\lambda(w)}=\left|\begin{matrix}
S_{(\alpha_1-1|\alpha_1)}& S_{(\alpha_1-1|\alpha_2)}&\cdots &S_{(\alpha_1-1|\alpha_s)}\\
S_{(\alpha_2-1|\alpha_1)} &S_{(\alpha_2-1|\alpha_2)}&\cdots &S_{(\alpha_2-1|\alpha_s)}\\
\vdots &\vdots &\ddots & \vdots\\
S_{(\alpha_s-1|\alpha_1)}& S_{(\alpha_s-1|\alpha_2)}&\cdots &S_{(\alpha_s-1|\alpha_s)}
\end{matrix}\right|,
\]
where $(\alpha|\beta)$ is the Young diagram of hook, $(\alpha+1,1^{\beta})$. Then we note that the Schur polynomial $S_{(\alpha-1|\beta)}(t)$ can be expressed by
\[
S_{(\alpha-1|\beta)}=\left|\begin{matrix}
p_1 & p_2 &\cdots & p_{\beta}& p_{\alpha+\beta}\\
1& p_1 & \cdots &p_{\beta-1}&p_{\alpha+\beta-1}\\
0& 1 &\ddots &p_{\beta-2}&p_{\alpha+\beta-2}\\
\vdots &\vdots &\ddots &\vdots &\vdots \\
0&\cdots&\cdots &1 &p_{\alpha}
\end{matrix}\right|.
\]
Expanding in the last column, we have
\[
S_{(\alpha-1|\beta)}=(-1)^{\beta+1}\left(h_{\alpha+\beta}-h_{\alpha+\beta-1}e_1+h_{\alpha+\beta-2}e_2+
\cdots +(-1)^{\beta+1}h_\alpha e_{\beta}\right),
\]
where $h_{k}=p_k=S_{(k)}$ and $e_{k}=S_{(1^k)}$. From Lemma \ref{lem:dual}, $h_k=e_k$, and
noting that $S_{(k)}=Q_k$ (see Appendix B), we have
\[
S_{(\alpha-1|\beta)}=\sum_{i=0}^\beta (-1)^iQ_{\alpha+i}Q_{\beta-i}=\frac{1}{2}\left(Q_{\alpha}Q_{\beta}+Q_{\alpha,\beta}\right),
\]
where we have used the definition of $Q_{\alpha,\beta}$ (see Appendix B).  Then we have
\begin{align*}
S_{\lambda( w)}&=\frac{1}{2^s}\left|\begin{matrix}
Q_{\alpha_1}Q_{\alpha_1} & Q_{\alpha_1,\alpha_2}+Q_{\alpha_1}Q_{\alpha_2} &\cdots &
Q_{\alpha_1,\alpha_s}+Q_{\alpha_1}Q_{\alpha_s} \\
Q_{\alpha_2,\alpha_1}+Q_{\alpha_2}Q_{\alpha_1} & Q_{\alpha_2}Q_{\alpha_2} &\cdots &
Q_{\alpha_2,\alpha_s}+Q_{\alpha_2}Q_{\alpha_s} \\
\vdots &\vdots &\ddots &\vdots\\
Q_{\alpha_s,\alpha_1}+Q_{\alpha_s}Q_{\alpha_1} & Q_{\alpha_s,\alpha_2}+Q_{\alpha_s}Q_{\alpha_2} &\cdots &
Q_{\alpha_s}Q_{\alpha_s} 
\end{matrix}\right|\\[2.0ex]
&=\frac{1}{2^s}\left|\begin{matrix}
0 & Q_{\alpha_1,\alpha_2}&\cdots &Q_{\alpha_1,\alpha_s} &Q_{\alpha_1} \\
Q_{\alpha_2,\alpha_1} & 0 &\cdots &Q_{\alpha_2,\alpha_s}&Q_{\alpha_2} \\
\vdots &\vdots &\ddots &\vdots&\vdots\\
Q_{\alpha_s,\alpha_1} & Q_{\alpha_s,\alpha_2} &\cdots & 0&Q_{\alpha_s}\\
-Q_{\alpha_1}&-Q_{\alpha_2}&\cdots &-Q_{\alpha_s} &1
\end{matrix}\right|\\[2.0ex]
&=\frac{1}{2^s}\left[|Q_{\alpha_i,\alpha_j}|+\sum_{1\le i,j\le s}\Delta_{i,j}Q_{\alpha_i}Q_{\alpha_j}\right],
\end{align*}
where  $\Delta_{i,j}$ is the cofactor of
the skew-symmetric matrix $(Q_{\alpha_i,\alpha_j})$.  Now there are two cases to consider:
\begin{itemize}
\item[(a)] If $s=$ even, then the cofactor $\Delta_{i,j}$ is skew symmetric, $\Delta_{i,j}=-\Delta_{j,i}$. In this case, the summing term in the third line vanishes, and we have
\[
S_{\lambda( w)}=\frac{1}{2^s}|Q_{\alpha_i,\alpha_j}|=\frac{1}{2^s}(Q_{\alpha})^2,
\]
that is, the determinant of the skew symmetric matrix $(Q_{\alpha_i,\alpha_j})$  is the square of the Pfaffian, denoted by $Q_{\alpha}$.
\item[(b)] If $s=$ odd, then the second line can be written by
\[
S_{\lambda( w)}=\frac{1}{2^s}\left|\begin{matrix}
0 & Q_{\alpha_1,\alpha_2}&\cdots &Q_{\alpha_1,\alpha_s} &Q_{\alpha_1} \\
Q_{\alpha_2,\alpha_1} & 0 &\cdots &Q_{\alpha_2,\alpha_s}&Q_{\alpha_2} \\
\vdots &\vdots &\ddots &\vdots&\vdots\\
Q_{\alpha_s,\alpha_1} & Q_{\alpha_s,\alpha_2} &\cdots & 0&
Q_{\alpha_s}\\
-Q_{\alpha_1}&-Q_{\alpha_2}&\cdots &-Q_{\alpha_s} &0
\end{matrix}\right|,
\]
that is, the element 1 at the bottom right corner can be replaced by $0$ (the leading $s\times s$ minor is 0 when $s=$ odd).
Then we have
\[
S_{\lambda( w)}=\frac{1}{2^s}(Q_{\tilde\alpha})^2,
\]
where the Young diagram $\tilde\alpha=(\alpha_1,\ldots,\alpha_s,0)$, i.e. identify $Q_{\alpha_k}=Q_{\alpha_k,0}$
and $-Q_{\alpha_k}=Q_{0,\alpha_k}$. 
\end{itemize}
This completes the proof.
\end{proof}

\subsection{Example: Polynomial solutions of $B_2$-type}
Here we give the polynomial solutions of $(\tau_1^B(\t_B,w_*), \tau_2^B(\t_B,w_*))$ for $w_*=e, s_1^B, s_1^Bs_2^B, s_1^Bs_2^Bs_1^B$ and $ w_0=s_1^Bs_2^Bs_1^Bs_2^B$.
The $\tau_1^B(\t_B)$ is given by
\[
\tau_1^B(\t_B,w_*)=\Delta_1(\t_B)=\langle e_1,\,\mathcal{P}_2^B(\t_B) n_*\dot w_* e_1\rangle,
\]
and for $\tau_2^B(\t_B)$, we use the formula
in Theorem \ref{thm:pfaffian}, i.e.
\[
\tau_2^B(\t_B,w_*)=\sum_{r\in\mathfrak{N}_B}\eta_r(\xi; w_*)\phi_r(\t_B).
\]
Let us first compute the polynomial $\phi_r(\t_B)=\sqrt{\langle e_1\wedge e_2,\mathcal{P}^B_2(\t_B) \dot r\cdot e_1\wedge e_2\rangle}$ for each $r\in\mathfrak{N}_B$.
\begin{itemize}
\item[(a)] For $r=e$, we have
\[
\left(\phi_e(\t_B)\right)^2=\langle e_1\wedge e_2,\mathcal{P}^B_2(\t_B) \cdot e_1\wedge e_2\rangle=1.
\]
\item[(b)] For $r=r_1$, we have
\[
\left(\phi_{r_1}(\t_B)\right)^2=\langle e_1\wedge e_2,\mathcal{P}^B_2(\t_B) \dot{r}_1\cdot e_1\wedge e_2\rangle=S_{\young[3,1][3]}(\t_B)=\frac{1}{2}\left(Q_{\young[2][3]}(\t_B)\right)^2.
\]
\item[(c)] For $r=r_2$, we have
\[
\left(\phi_{r_2}(\t_B)\right)^2=\langle e_1\wedge e_2,\mathcal{P}^B_2(\t_B) \dot{r}_1\cdot e_1\wedge e_2\rangle=S_{\young[2][3]}(\t_B)=\frac{1}{2}\left(Q_{\young[1][3]}(\t_B)\right)^2.
\]
\item[(d)] For $r=r_1r_2$, we have
\[
\left(\phi_{r_1r_2}(\t_B)\right)^2=\langle e_1\wedge e_2,\mathcal{P}^B_2(\t_B) \dot{r}_1\dot{r}_2\cdot e_1\wedge e_2\rangle=S_{\young[3,3][3]}(\t_B)=\frac{1}{4}\left(Q_{\young[2,1][3]}(\t_B)\right)^2.
\]
\end{itemize}
Now we compute $n_*$ and $\eta_r(\xi; w_*)$, and then give $(\tau_1(\t_B,w_*), \tau_2^B(\t_B,w_*))$.
\begin{itemize}
\item[(1)] $ w_*=e$: In this case we have $n_*=y_1^B(\xi_1)y_2^B(\xi_2)y_1^B(\xi_3)y_2^B(\xi_4)$\footnote{
{Let $w_*(1, 2, \dots, 2n+1) = (i_1, i_2, \dots, i_{2n+1})$, then the sign on the $k$-th column of $n_*$ is specified by the number of $i_j (j < k)$'s such that $i_j > i_k$}}, which gives
\[
n_*=\begin{pmatrix}
1 & 0 & 0 & 0 & 0 \\
\xi_1+\xi_3& 1 & 0 & 0 & 0 \\
2\xi_2\xi_3 & 2(\xi_2+\xi_4)& 1 & 0& 0\\
2\xi_2^2\xi_3 &2(\xi_2+\xi_4)^2 & 2(\xi_2+\xi_4) & 1 & 0 \\
2\xi_1\xi_2^2\xi_3 & 2(\xi_3\xi_4^2+\xi_1(\xi_2+\xi_4)^2)&2(\xi_3\xi_4+\xi_1(\xi_2+\xi_4))& \xi_1+\xi_3 & 1
\end{pmatrix}.
\]
The coefficients $\eta_r(\xi;e)=\sqrt{\langle \dot r\cdot e_1\wedge e_2,\,n_*\cdot e_1\wedge e_2\rangle}$ for  $r\in\mathfrak{N}_B$ are given by
\[
\eta_e=1,\quad \eta_{r_1}=\sqrt{2}(\xi_3\xi_4+\xi_1(\xi_2+\xi_4)),\quad \eta_{r_2}=\sqrt{2}(\xi_2+\xi_4),\quad \eta_{r_1r_2}=2\xi_2\xi_3\xi_4.
\]
Then the $\tau$-functions are given by
\[\left\{\begin{array}{lll}
\tau_1^B(\t_B,e)=1+(\xi_1+\xi_3)S_{\young[1][3]}(\t_B)+2\xi_2\xi_3S_{\young[2][3]}(\t_B)+2\xi_2^2\xi_3S_{\young[3][3]}(\t_B)+2\xi_1\xi_2^2\xi_3S_{\young[4][3]}(\t_B),\\[1.0ex]
\tau_2^B(\t_B,e)=1+(\xi_2+\xi_4)Q_{\young[1][3]}(\t_B)+(\xi_3\xi_4+\xi_1(\xi_2+\xi_4))Q_{\young[2][3]}(\t_B)+\xi_2\xi_3\xi_4Q_{\young[2,1][3]}(\t_B).
\end{array}\right.
\]
\item[(2)] $ w_*=s_1^B$: In this case we have $n_*=y_2^B(\xi_1)y_1^B(\xi_2)y_2^B(\xi_3)$, which gives
\[
n_*\dot s_1^B=\begin{pmatrix}
0 & -1 & 0 & 0 & 0 \\
1& -\xi_2 & 0 & 0 & 0 \\
2(\xi_1+\xi_3) & -2\xi_1\xi_3& 1 & 0& 0\\
2(\xi_1+\xi_3)^2 & -2\xi_1^2\xi_2 & 2(\xi_1+\xi_3)& 0 & -1 \\
2\xi_2\xi_3^2& 0&2\xi_2\xi_3& 1 & -\xi_2
\end{pmatrix}.
\]
The coefficients $\eta_r(\xi;s_1^B)=\sqrt{\langle \dot r\cdot e_1\wedge e_2,\,n_*\dot{s}_1^B\cdot e_1\wedge e_2\rangle}$ for  $r\in\mathfrak{N}_B$ are given by
\[
\eta_e=1,\quad \eta_{r_1}=\sqrt{2}\xi_2\xi_3,\quad \eta_{r_2}=\sqrt{2}(\xi_1+\xi_3),\quad \eta_{r_1r_2}=2\xi_1\xi_2\xi_3.
\]
Then the $\tau$-functions are given by
\[\left\{\begin{array}{lll}
\tau_1^B(\t_B,s_1^B)=S_{\young[1][3]}(\t_B)+2(\xi_1+\xi_3)S_{\young[2][3]}(\t_B)+2(\xi_1+\xi_3)^2S_{\young[3][3]}(\t_B)+2\xi_1\xi_2^2S_{\young[4][3]}(\t_B),\\[1.0ex]
\tau_2^B(\t_B,s_1^B)=1+(\xi_1+\xi_3)Q_{\young[1][3]}(\t_B)+\xi_2\xi_3Q_{\young[2][3]}(\t_B)+\xi_1\xi_2\xi_3Q_{\young[2,1][3]}(\t_B).
\end{array}\right.
\]
\item[(3)] $ w_*=s_1^Bs_2^B$: In this case we have $n_*=y_1^B(\xi_1)y_2^B(\xi_2)$, which gives
\[
n_*\dot s_1^B\dot s_2^B=\begin{pmatrix}
0        &   0     &     0 &   -1 & 0 \\
1        & 0     &   0     & -\xi_1& 0 \\
2\xi_2    & 0  & -1 & 0& 0\\
2\xi_2^2   &  0  & -2\xi_2& 0 & -1\\
2\xi_1\xi_2^2 &1     & -2\xi\xi_2   &0& -\xi_1
\end{pmatrix}.
\]
The coefficients $\eta_r(\xi;s_1^Bs_2^B)=\sqrt{\langle \dot r\cdot e_1\wedge e_2,\,n_*\dot{s}_1^B\dot{s}_2^B\cdot e_1\wedge e_2\rangle}$ for  $r\in\mathfrak{N}_B$ are given by
\[
\eta_e=0,\quad \eta_{r_1}=1,\quad \eta_{r_2}=0,\quad \eta_{r_1r_2}=\sqrt{2}\xi_2.
\]
Then the $\tau$-functions are given by
\[\left\{\begin{array}{lll}
\tau_1^B(\t_B,s_1^Bs_2^B)=S_{\young[1][3]}(\t_B)+2\xi_2S_{\young[2][3]}(\t_B)+2\xi_2^2S_{\young[3][3]}(\t_B)+2\xi_1\xi_2^2S_{\young[4][3]}(\t_B),\\[1.0ex]
\tau_2^B(\t_B,s_1^Bs_2^B)=\frac{1}{\sqrt{2}}\left(Q_{\young[2][3]}(\t_B)+\xi_2Q_{\young[2,1][3]}(\t_B)\right).
\end{array}\right.
\]
\item[(4)] $ w_*=s_1^Bs_2^Bs_1^B$: In this case we have $n_*=y_2^B(\xi)$, which gives
\[
n_*\dot s_1^B\dot s_2^B\dot s_1^B=\begin{pmatrix}
0        &   0     &     0 &   0 & 1 \\
0       & -1     &   0     & 0& 0 \\
0    & -2\xi  & -1 & 0& 0\\
0   &  -2\xi^2  & -2\xi&-1 & 0\\
1 &0     & 0  &0& 0
\end{pmatrix}.
\]
The coefficients $\eta_r(\xi;s_1^Bs_2^Bs_1^B)=\sqrt{\langle \dot r\cdot e_1\wedge e_2,\,n_*\dot{s}_1^B\dot{s}_2^B\dot{s}_1^B\cdot e_1\wedge e_2\rangle}$ for  $r\in\mathfrak{N}_B$ are given by
\[
\eta_e=0,\quad \eta_{r_1}=1,\quad \eta_{r_2}=0,\quad \eta_{r_1r_2}=\sqrt{2}\xi.
\]
Then the $\tau$-functions are given by
\[\left\{\begin{array}{lll}
\tau_1^B(\t_B,s_1^Bs_2^Bs_1^B)=S_{\young[4][3]}(\t_B),\\[1.0ex]
\tau_2^B(\t_B,s_1^Bs_2^Bs_1^B)=\frac{1}{\sqrt{2}}\left(Q_{\young[2][3]}(\t_B)+\xi Q_{\young[2,1][3]}(\t_B)\right).
\end{array}\right.
\]
\item[(5)] $ w_*=w_0=s_1^Bs_2^Bs_1^Bs_2^B$ (the longest element): In this case we have $n_*=Id$, which gives
\[
n_*\dot s_1^B\dot s_2^B\dot s_1^B\dot s_2^B=\begin{pmatrix}
0        &   0     &     0 &  0& 1 \\
0       & 0    &   0     & -1& 0 \\
0    & 0 & 1 & 0& 0\\
0  & -1  & 0&0 & 0\\
1 &0    & 0  &0& 0
\end{pmatrix}.
\]
The coefficients $\eta_r(\xi;s_1^Bs_2^Bs_1^Bs_2^B)=\sqrt{\langle \dot r\cdot e_1\wedge e_2,\,n_*\dot{s}_1^B\dot{s}_2^B\dot{s}_1^B\dot{s}_2^B\cdot e_1\wedge e_2\rangle}$ for  $r\in\mathfrak{N}_B$ are given by
\[
\eta_e=0,\quad \eta_{r_1}=0,\quad \eta_{r_2}=0,\quad \eta_{r_1r_2}=1.
\]
Then the $\tau$-functions are given by
\[
\tau_1^B(\t_B,w_0)=S_{\young[4][3]}(\t_B),\qquad\quad \tau_2^B(\t_B,w_0)=Q_{\young[2,1][3]}(\t_B).
\]
\end{itemize}

We also compute $p_4(\tilde D)\Delta_1\circ\Delta_3$ and $p_3(\tilde D)\Delta_1\circ\Delta_2$, which give
the entries $a_{4,1}=a_{5,2}$ and $a_{3,1}=-a_{5,3}$. In particular, $a_{4,1}$ is given by
\[
a_{4,1}=a_{4,1}^0\frac{\Delta_4\Delta_0}{\Delta_3\Delta_1}=\frac{p_{4}(\tilde D)\Delta_1\circ\Delta_3}{\Delta_1\Delta_3}.
\]
which gives
\[
p_4(\tilde D)\Delta_1\circ\Delta_3=a_{4,1}^0\Delta_1.
\]
 If both $a_{4,1}=a_{3,1}=0$, then those $\tau$-functions
give a solution of the tridiagonal KT hierarchy of $B$-type. 

\bigskip

\begin{center}
\begin{tabular}{|c|c|c|c|c|c|cl}
\hline
$ w_*$ & $\tau_1=\Delta_1$&$\tau_2=\sqrt{\Delta_2}$&$\Delta_2=\pm\Delta'_3$ &$p_4(\tilde D)\Delta_1\circ\Delta_3$ & $ p_3(\tilde D)\Delta_1\circ\Delta_2$ \\[1.0ex]
\hline
$e$ &1 & 1 & 1& 0 & 0\\[0.5ex]
\hline
$s_1^B$ & $S_{\young[1][3]}$  &1& 1& 0 &  0 \\[0.5ex]
\hline
$s_2^B$ & $1$&$Q_{\young[1][3]}$&$S_{\young[2][3]} $ & 0 & 0\\[0.5ex]
\hline
$s_1^Bs_2^B$ & $S_{\young[1][3]}$&$Q_{\young[2][3]}$  &$S_{\young[3,1][3]}$& $S_{\young[1][3]} $& $S_{\young[2][3]}$\\[0.5ex]
\hline
$s_2^Bs_1^B$ & $S_{\young[3][3]}$&$Q_{\young[1][3]}$ &$S_{\young[2][3]}$& $0$ & $S_{\young[2][3]}$\\[0.5ex]
\hline
$s_1^Bs_2^Bs_1^B$ & $S_{\young[4][3]}$&$Q_{\young[2][3]}$  &$S_{\young[3,1][3]}$& $S_{\young[4][3]}$& $Q_{\young[2][3]}Q_{\young[2,1][3]}$\\[0.5ex]
\hline
$s_2^Bs_1^Bs_2^B$ & $S_{\young[3][3]}$&$Q_{\young[2,1][3]}$  &$S_{\young[3,3][3]}$& $0$ & $S_{\young[3,3][3]}$\\[0.5ex]
\hline
$s_1^Bs_2^Bs_1^Bs_2^B$ & $S_{\young[4][3]}$&$Q_{\young[2,1][3]}$ &$S_{\young[3,3][3]}$ & 0&0\\[0.5ex]
\hline
\end{tabular}
\end{center}

\bigskip
In the table, $\Delta_3'$ implies that if $\Delta_2=S_{\mu}$, then $\Delta_3=S_{\mu'}$ with the conjugate Young diagram of $\mu$.
Notice that if $ w_*$ is the longest element of a subgroup generated by $s_1^B$ or $s_2^B$ or both,
then these $\tau$-functions give a solution of the tridiagonal KT hierarchy \cite{FH}.
That is, we have the following four cases, $\{e\}, \{s_1^B\}, \{s_2^B\}$ and $\{s_1^B,s_2^B\}$.
Also note that the cases of $s_2^Bs_1^B$ and $s_2^Bs_1^Bs_2^B$ give $a_{4,1}=0$. That is, they are solutions to the 2-banded KT hierarchy, and we have
\[
p_3(\tilde D)\Delta_1\circ\Delta_2=a_{3,1}^0\Delta_2,\qquad\text{and}\qquad a_{3,1}(t)=a_{3,1}^0\frac{1}{\tau^B_1(t)}.
\]



\section{The f-KT hierarchy of $G$-type}\label{sec:GKT}
We embed the Lie algebra of $G_2$-type into a Lie algebra of $B_3$-type, i.e. $\mathfrak{sl}_7(\C)$, by taking the following Chevalley generators,
\begin{align*}
H_1=E_{1,1}-E_{2,2}+2E_{3,3}-2E_{5,5}+E_{6,6}-E_{7,7},\qquad
H_2=E_{2,2}-E_{3,3}+E_{5,5}-E_{6,6},
\end{align*}
the simple root vectors,
\begin{align*}
X_1=E_{1,2}+E_{3,4}+E_{4,5}+E_{6,7},\qquad
X_2=E_{2,3}+E_{5,6},
\end{align*}
and the negative ones,
\begin{align*}
Y_1=E_{2,1}+2E_{4,3}+2E_{5,4}+E_{7,6},\qquad
Y_2=E_{3,2}+E_{6,5}.
\end{align*}
The set of positive roots is given by
\[
\Sigma_G^+=\left\{
\alpha_1,~\alpha_2,~\alpha_1+\alpha_2,~2\alpha_1+\alpha_2,~3\alpha_1+\alpha_2,~3\alpha_1+2\alpha_2\,\right\}.
\]
The negative root vectors are generated by
\[
[Y_\alpha,Y_\beta]=N_{\alpha,\beta}Y_{\alpha+\beta}\qquad \text{if}\qquad \alpha+\beta\in\Sigma_G^+.
\]
Here we take the following vectors,
\begin{align*}
Y_{\alpha\{1,1\}}&=E_{3,1}-2E_{4,2}+2E_{6,4}-E_{7,5},\\
Y_{\alpha\{2,1\}}&=E_{4,1}-E_{5,2}-E_{6,3}+E_{7,4},\\
Y_{\alpha\{3,1\}}&=E_{5,1}-E_{7,3},\\
Y_{\alpha\{3,2\}}&=E_{6,1}+E_{7,2},
\end{align*}
where $\alpha\{i,j\}:=i\alpha_1+j\alpha_2$.  Then the Lax matrix for the f-KT hierarchy of $G$-type is defined by
\begin{align*}
L_G=&\sum_{i=1}^2(a_iH_i+X_i+b_iY_i)+\sum_{i=1}^3b_{i,1}Y_{\alpha\{i,1\}}+b_{3,2}Y_{\alpha\{3,2\}}\\
&=\begin{pmatrix}
a_1  & 1 & 0& 0& 0& 0& 0\\
b_1& a_2-a_1& 1 &0 & 0&0&0\\
b_{1,1} &b_2 & 2a_1-a_2 & 1 &0 &0&0\\
b_{2,1} & -2b_{1,1}&2b_1 & 0 & 1 & 0&0\\
b_{3,1}&-b_{2,1}&0&2b_1&a_2-2a_1&1&0\\
b_{3,2}&0&-b_{2,1}&2b_{1,1}&b_2&a_1-a_2&1\\
0&b_{3,2}&-b_{3,1}&b_{2,1}&-b_{1,1}&b_1&-a_1
\end{pmatrix}.
\end{align*}
The set of eigenvalues of $L_G$ is given by
\[
\{\,0,\, ~ \pm(\kappa_1-\kappa_2),\,~ \pm\kappa_i\,\, (1\le i\le 2)\,\}.
\]
Then we find that the invariants $I_m=\frac{1}{m}\text{tr}(L_G^m)$ give $I_m=0$ for $m=$odd as in the case of $B$-type, and
\begin{align*}
I_2&=\kappa_1^2+\kappa_2^2+(\kappa_1-\kappa_2)^2=2(3a_1^2-3a_1a_2+a_2^2+3b_1+b_2),\\
I_4&=\frac{1}{2}(\kappa_1^4+\kappa_2^4+(\kappa_1-\kappa_2)^4)=\frac{1}{4}I_2^2,\\
I_6&=\frac{1}{3}(\kappa_1^6+\kappa_2^6+(\kappa_1-\kappa_2)^6)=\frac{1}{3}(66a_1^6+2a_2^6+66b_1^3+\cdots+6b_{3,2}).
\end{align*}
Here we omit the full expression of $I_6$, but note that $I_6$ is independent of $I_2$.  Since $I_4=\frac{1}{4}I_2^2$, the $t_3$-flow generated by $I_4$ is given by
\[
\frac{\partial L_G}{\partial t_3}=\frac{1}{2}I_2[\pi_{\mathfrak{b}^+}(\nabla I_2),L_G]=\frac{1}{2}I_2\frac{\partial L_G}{\partial t_1},\qquad
\text{which gives}\quad \left(\frac{\partial}{\partial t_3}-\frac{I_2}{2}\frac{\partial}{\partial t_1}\right)L_G=0.
\]
Then using the shifted coordinates $(t_1+ct_3,t_3)$ with the constant $c=I_2/2$, $L_G$ is invariant under the $t_3$-flow. We then consider the $L_G$ which depends only on $\t_G=(t_1,t_5)$.

The $t_1$-flow of the f-KT hierarchy is then given by
\begin{align*}
\frac{\partial a_i}{\partial t_1}&=b_i,\qquad (i=1,2),\\
\frac{\partial b_1}{\partial t_1}&=-\sum_{j=1}^2(C_{1,j}a_j)b_i+b_{1,1},\qquad \frac{\partial b_2}{\partial t_1}=-\sum_{j=1}^2 (C_{2,j}a_j)b_2-3b_{1,1},\\
\frac{\partial b_{1,1}}{\partial t_1}&=(a_1-a_2)b_{1,1}+b_{2,1},\qquad \frac{\partial b_{2,1}}{\partial t_1}=-a_1b_{2,1}+b_{3,1},\\
\frac{\partial b_{3,1}}{\partial t_1}&=(-3a_1+a_2)b_{3,1}+b_{3,2},\qquad \frac{\partial b_{3,2}}{\partial t_1}=-a_2b_{3,2},
\end{align*}
where the Cartan matrix $(C_{i,j})$ is
\[
C=\begin{pmatrix}
2 & -1\\ -3&2\end{pmatrix}.
\]

As in the case of $A$-type, from $a_{1,1}=a_1$ and $a_{2,2}=a_2-a_1$, we define the $\tau$-functions as
\[
a_{i}=\frac{\partial}{\partial t_1}\ln\Delta_i=\frac{\partial }{\partial t_1}\ln\tau^G_i\qquad (i=1,2),
\]
that is, we have $\tau_1=c_1\Delta_1$ and $\tau_2=c_2\Delta_2$ for some constants $c_1,c_2$.  Then from $a_{3,3}$, we have
\[
a_{3,3}=\frac{p_1(\tilde D)\Delta_3\circ\Delta_2}{\Delta_3\Delta_2}=\frac{\partial}{\partial t_1}\ln\frac{\Delta_3}{\Delta_2}=2a_1-a_2,
\]
which gives
\[
\Delta_3=c'_3(\Delta_1)^2=c_3(\tau^G_1)^2,
\]
where $c_3=c'_3c_1^2$ is a constant.  Recall that $\Delta_3$ is a complete square in the case of $B_3$-type.
Then the $\tau^G_1$ function is a Pfaffian in $G_2$-case. This implies that the solutions of $G_2$-type can be obtained by setting $t_3=0$ in the solution of the $B_3$-type.

Now, from $a_{4,4}=0$, we have
\[
a_{4,4}=\frac{p_1(\tilde D)\Delta_4\circ\Delta_3}{\Delta_4\Delta_3}=\frac{\partial}{\partial t_1}\ln\frac{\Delta_4}{\Delta_3}=0,
\]
which gives
\[
\Delta_4=c'_4\Delta_3=c_4(\tau^G_1)^2.
\]
Similarly, from $a_{5,5}=a_2-2a_1$ and $a_{6,6}=a_1-a_2$, we have
\[
\Delta_5=c_5\tau^G_2,\qquad \Delta_6=c_6\tau^G_1,
\]
for some constants $c_5,c_6$.   

One should note that the element $a_{6,1}=b_{3,2}$ can be expressed as
\[
a_{6,1}=a_{6,1}^0\frac{1}{\Delta_2}=\frac{p_6(\tilde D)\Delta_1\circ\Delta_5}{\Delta_1\Delta_5}.
\]
Since $\Delta_5=c_5\Delta_2$, we have
\[
p_6(\tilde D)\Delta_1\circ\Delta_5=a_{6,1}^0c_5\Delta_1,\quad\text{which gives}\quad p_6(\tilde D)\tau^G_1\circ\tau^G_2=c'_6\tau^G_1,
\]
with some constant $c'_6$.

\subsection{The $\ell$-banded GKT hierarchy}
We have the following banded structure:
\begin{itemize}
\item[(a)] If $b_{3,2}=0$ (i.e. 4-banded GKT), then we have
\[
b_{3,1}=b_{3,1}^0\frac{\Delta_2}{\Delta_1^3}=\frac{p_5(\tilde D)\Delta_1\circ\Delta_4}{\Delta_1\Delta_4}.
\]
Since $\Delta_4\propto (\Delta_1)^2$, we have
\[
p_{5}(\tilde D)\Delta_1\circ\Delta_4\propto \Delta_2.
\]
\item[(b)] If $b_{3,2}=b_{3,1}=0$ (3-banded GKT), then we have
\[
b_{2,1}=b_{2,1}^0\frac{1}{\Delta_1}=\frac{p_4(\tilde D)\Delta_1\circ\Delta_3}{\Delta_1\Delta_3}.
\]
\item[(c)] If $b_{3,2}=b_{3,1}=b_{2,1}=0$ (2-banded GKT), then we have
\[
b_{1,1}=b_{1,1}^0\frac{\Delta_1}{\Delta_2}=\frac{p_3(\tilde D)\Delta_1\circ\Delta_2}{\Delta_1\Delta_2}.
\]
\item[(d)] If $b_{3,2}=b_{3,1}=b_{2,1}=b_{1,1}=0$ (tridiagonal GKT), then we have
\[
b_1=b_1^0\frac{\Delta_2}{\Delta_1^2}=\frac{p_2(\tilde D)\Delta_1\circ\Delta_1}{\Delta_1^2},\qquad\quad
b_2=b_2^0\frac{\Delta_1^3}{\Delta_2^2}=\frac{p_2(\tilde D)\Delta_2\circ\Delta_2}{\Delta_2^2}.
\]
Notice in this case that $b_i$'s can be written in the form,
\[
b_i=b_i^0\prod_{j=1}^2\left(\tau^G_j\right)^{-C_{i,j}},
\]
where $C_{i,j}$ is the Cartan matrix. The $\tau$-functions satisfy
\[
p_2(\tilde D)\tau^G_1\circ\tau^G_1=b_1^0\tau^G_2,\qquad p_2(\tilde D)\tau^G_2\circ\tau^G_2=b_2^0\left(\tau^G_1\right)^3.
\]
\end{itemize}

\subsection{Polynomial solutions of $G$-type}
The Weyl group $\mathfrak{W}_G$ is generated by $\{s_1^G,s_2^G\}$. These elements are given by the Bruhat embedding into $\mathfrak{S}_7$,
\[
s_1^G=s_1^As_6^As_3^As_4^As_3^A\qquad\text{and}\qquad  s_2^G=s_2^As_5^A.
\]
Then $\mathfrak{W}_G$ is given by
\begin{align*}
\mathfrak{W}_G&=\langle\, s_1^G, s_2^G~\Big| ~(s_1^G)^2=(s_2^G)^2=(s_1^Gs_2^G)^6=e\,\rangle,\\
\end{align*}
which has 12 elements.  We also define
\[
y_i^G(\xi)=\exp\left(\xi Y_i\right)\qquad  (i=1,2).
\]
Now according to \eqref{eq:Lusztig1}, for each $w_*^G\in \mathfrak{W}_G$, we define $\hat{w}$ with a reduced expression,
\[
\hat{w}=w_*w_0^{-1}=s_{i_1}^G s_{i_2}^G\cdots s_{i_k}^G,
\]
where $w_0\in\mathfrak{W}_G$ is the longest element. The corresponding unipotent matrix $n_*\in \mathcal{N}^-$ is then defined by
\[
n_*(\xi)=Y_{i_1}(\xi_1) Y_{i_2}(\xi_2) \cdots Y_{i_k}(\xi_k).
\]
Then the minors $\Delta_i(\t_G)$ with $\t_G=(t_1,t_5)$ are computed as
\begin{align*}
\Delta_1(\t_G)&=\langle e_1,\,\mathcal{P}(\t_G)n_*(\xi)w_*\cdot e_1\rangle,\\
\Delta_2(\t_G)&=\langle e_1\wedge e_2,\,\mathcal{P}(\t_G)n_*(\xi)w_*\cdot e_1\wedge e_2\rangle,
\end{align*}
where $\mathcal{P}(\t_G)$ is defined by
\[
\mathcal{P}(\t_G)=\exp\left(t_1X+t_5X^5\right)\qquad \text{with}\quad X:=X_1+X_2.
\]

Now we list the polynomial solutions for $w_*\in\mathfrak{W}_G$, each of which is computed from the formula,
\[
\Delta_i(\t_G)=\langle e_1\wedge\cdots\wedge e_i,\,\mathcal{P}_{7}(\t_G)w_*\cdot e_1\wedge\cdots\wedge e_i\rangle\qquad\text{with}\qquad \t_G=(t_1,t_5).
\]
We list those $\Delta_i$ functions in the table below. Each Young diagram $\lambda$ in the table represents $S_{\lambda}(\t_G)$.

\bigskip

\begin{center}
\begin{tabular}{|c|c|c|c|c|c|}
\hline
$w_*$ & $\tau_1=\Delta_1$&$\tau_2=\Delta_2$&$\Delta_3\propto(\tau_1)^2$ & $\Delta_4=\Delta_3'$ \\[1.0ex]
\hline
$e$ &1 & 1& 1 & 1  \\[1.0ex]
\hline
$s_1^G$ & ${\young[1][3]}$ & $1$&${\young[2][3]}$ &${\young[1,1][3]}$ \\[1.0ex]
\hline
$s_2^G$ & $1$&${\young[1][3]}$ &$1$&$1$  \\[1.0ex]
\hline
$s_2^Gs_1^G$ & ${\young[2][3]}$& ${\young[1][3]}$& ${\young[3,1][3]}$ &${\young[2,1,1][3]}$ \\[1.0ex]
\hline
$s_1^Gs_2^G$ & ${\young[1][3]}$&${\young[3,1][3]}$ &${\young[2][3]}$ &${\young[1,1][3]}$  \\[1.0ex]
\hline
$s_1^Gs_2^Gs_1^G$ & ${\young[4][3]}$&${\young[3,1][3]}$ &$\young[4,3,1][3]$ &$\young[3,2,2,1][3]$ \\[1.0ex]
\hline
$s_2^Gs_1^Gs_2^G$ & ${\young[2][3]}$&${\young[4,2][3]}$ &${\young[3,1][3]}$&$\young[2,1,1][3]$  \\[1.0ex]
\hline
$s_2^Gs_1^Gs_2^Gs_1^G$ & ${\young[5][3]}$&${\young[4,2][3]}$ &${\young[4,4,2][3]}$&$\young[3,3,2,2][3]$ \\[1.0ex]
\hline
$s_1^Gs_2^Gs_1^Gs_2^G$ & ${\young[4][3]}$&${\young[5,4][3]}$  & $\young[4,3,1][3]$&$\young[3,2,2,1][3]$\\[1.0ex]
\hline
$s_1^Gs_2^Gs_1^Gs_2^Gs_1^G$ & ${\young[6][3]}$&${\young[5,4][3]}$ &$\young[4,4,4][3]$ &$\young[3,3,3,3][3]$ \\[1.0ex]
\hline
$s_2^Gs_1^Gs_2^Gs_1^Gs_2^G$ & ${\young[5][3]}$&${\young[5,5][3]}$ &$\young[4,4,4][3]$ & ${\young[3,3,3,3][3]}$ \\[1.0ex]
\hline
$s_1^Gs_2^Gs_1^Gs_2^Gs_1^Gs_2^G$ & ${\young[6][3]}$&${\young[5,5][3]}$ &$\young[4,4,4][3]$ & $\young[3,3,3,3][3]$ \\[1.0ex]
\hline
\end{tabular}
\end{center}

\bigskip

Note here that $\tau_1$ is also the $Q$-Schur polynomial, i.e. $S_{(k)}=p_k=Q_k$, and $\Delta_3, \Delta_4\propto (\tau_1)^2$. Also recall that $\Delta_5\propto \Delta_2'$ and $\Delta_6\propto\Delta_1'$, where $\Delta_k'$ represents the Schur polynomial of the conjugate diagram to that of $\Delta_k$.  Since the coordinates for $G$-type is just $t=(t_1,t_5)$, we have the following relations, 
\[
S_{\young[3][3]}=Q_{\young[2,1][3]},\qquad S_{\young[4][3]}=\frac{1}{2}Q_{\young[3,1][3]},\qquad S_{\young[5][3]}=\frac{1}{2}Q_{\young[3,2][3]},\qquad S_{\young[6][3]}=\frac{1}{2}Q_{\young[3,2,1][3]}.
\]

We also compute $p_6(\tilde D)\Delta_1\circ\Delta_5$, $p_5(\tilde D)\Delta_1\circ\Delta_4$, $p_4(\tilde D)\Delta_1\circ\Delta_3$ and $p_3(\tilde D)\Delta_1\circ\Delta_2$.

\bigskip

\begin{center}
\begin{tabular}{|c|c|c|c|c|c|}
\hline
$w_*$ & $p_6(\tilde D) \Delta_1\circ\Delta_5$ & $p_5(\tilde D)\Delta_1\circ\Delta_4$&$p_4(\tilde D)\Delta_1\circ\Delta_3$ & $ p_3(\tilde D)\Delta_1\circ\Delta_2$ \\[1.0ex]
\hline
$e$ & 0 & 0 & 0&0 \\[1.0ex]
\hline
$s_1^G$ &$0$ &0& 0 &  0 \\[1.0ex]
\hline
$s_2^G$  &$0$&$0$ & 0 & 0 \\[1.0ex]
\hline
$s_1^Gs_2^G$ &  $0$ &${\young[1][3]}$& ${\young[2][3]}$& 1 \\[1.0ex]
\hline
$s_2^Gs_1^G$ & $0$ &$0$ & 0 & ${\young[2][3]}$ \\[1.0ex]
\hline
$s_1^Gs_2^Gs_1^G$  &$0$ &$0$& ${\young[3,2,2,1][3]}$& ${\young[2,2,1][3]}$ \\[1.0ex]
\hline
$s_2^Gs_1^Gs_2^G$ & ${\young[1,1][3]}$&$\young[1][3]$ & $\young[2][3]$ & ${\young[3,1,1][3]}$ \\[1.0ex]
\hline
$s_1^Gs_2^Gs_1^Gs_2^G$ & ${\young[1,1,1,1,1][3]}$&$\young[2,2,2,2,2][3]$& $\young[3,3,2][3]$ & ${\young[3,3,2][3]}$ \\[1.0ex]
\hline
$s_2^Gs_1^Gs_2^Gs_1^G$ &0&0& ${\young[3,2,2,1][3]}$ & ${\young[4,3,3][3]}$ \\[1.0ex]
\hline
$s_1^Gs_2^Gs_1^Gs_2^Gs_1^G$ & $0$ &$0$& 0 & ${\young[4,4,4][3]}$ \\[1.0ex]
\hline
$s_2^Gs_1^Gs_2^Gs_1^Gs_2^G$ & $0$ & $0$&${\young[3,3,3,3][3]}$ & ${\young[4,4,4][3]}$ \\[1.0ex]
\hline
$s_1^Gs_2^Gs_1^Gs_2^Gs_1^Gs_2^G$ & $0$ & $0$&0 & $0$ \\[1.0ex]
\hline
\end{tabular}
\end{center}

\bigskip

Here note that we have the following relations,
\[
S_{\young[2][3]}=\frac{1}{2}(Q_{\young[1][2]})^2,\qquad \qquad S_{\young[4,4,4][3]}=\frac{1}{2}(Q_{\young[3,2,1][3]})^2=\frac{1}{2}(Q_{\young[6][3]})^2.
\]
Note that if $w_*$ is the longest element of a subgroup generated by $s_1^G$ or $s_2^G$ or both,
then these $\tau$-functions give a solution of the tridiagonal KT hierarchy \cite{FH}.
That is, we have the following four cases, $\{e\}, \{s_1^G\}, \{s_2^G\}$ and $\{s_1^G,s_2^G\}$.

For the case with $w_*=s_1^Gs_2^G$, these $\tau$-functions give a solution of the 4-banded KT hierarchy ($b_{3,2}=0$), and we have
\[
p_5(\tilde D)\Delta_1\circ\Delta_4=b_{3,1}^0\Delta_2.
\]

For the cases with $w_*=s_1^Gs_2^Gs_1^G, s_2^Gs_1^Gs_2^Gs_1^G$ and $s_2^Gs_1^Gs_2^Gs_1^Gs_2^G$, we have a solution of the 3-banded KT hierarchy ($b_{3,2}=b_{3,1}=0$), and
\[
p_4(\tilde D)\Delta_1\circ\Delta_3=b_{2,1}^0\Delta_3.
\]

And for the cases of $w_*=s_2^Gs_1^G$ and $s_1^Gs_2^Gs_1^Gs_2^Gs_1^G$, we have a solution of the 2-banded KT-hierarchy ($b_{3,2}=b_{3,1}=b_{2,1}=0$), and 
\[
p_3(\tilde D)\Delta_1\circ\Delta_2=b_{1,1}^0(\Delta_1)^2.
\]


\appendix
\section{The full Kostant-Toda hierarchy}
Here we derive the formula \eqref{aij} in Proposition \ref{prop:aij}.
It is shown in \cite{KY} that the f-KT equation can be obtained as a root reduction of
the Lax equation for the general matrix $\tilde L$,
\[
\frac{\partial \tilde L}{\partial t}=[\tilde P,\tilde L],\qquad \text{with}\quad 
\tilde P=\frac{1}{2}\left((\tilde L)_{>0}-(\tilde L)_{<0}\right),
\]
where $(\tilde L)_{>0}$ and $(\tilde L)_{<0}$ represent the upper and lower triangular parts of $\tilde L$.
Define the eigenmatrices $\tilde \Phi$ and $\tilde\Psi$ which satisfy
\[
\tilde L\tilde \Phi=\tilde \Phi\Lambda,\qquad \tilde \Psi \tilde{L} = \Lambda\tilde\Psi\quad\text{with}\quad \Lambda=\text{diag}(\kappa_1,\ldots,\kappa_M),
\]
with the orthogonality relations,
\[
\tilde\Phi\tilde\Psi=\tilde\Psi\tilde\Phi=I=\text{identity matrix}.
\]
Writing $\tilde\Phi$ and $\tilde\Psi$ in the form,
\[
\tilde\Phi=\begin{pmatrix}
\tilde\phi_1(\kappa_1) &\cdots &\tilde\phi_1(\kappa_M)\\
\vdots &\ddots &\vdots\\
\tilde\phi_M(\kappa_1)&\cdots&\tilde\phi_M(\kappa_M)
\end{pmatrix},
\qquad
\tilde\Psi=\begin{pmatrix}
\tilde\psi_1(\kappa_1) &\cdots &\tilde\psi_M(\kappa_1)\\
\vdots &\ddots &\vdots\\
\tilde\psi_1(\kappa_M)&\cdots&\tilde\psi_M(\kappa_M)
\end{pmatrix}.
\]
Then the element $\tilde a_{i,j}$ in $\tilde L$ can be expressed as
\begin{equation}\label{Taij}
\tilde a_{i,j}=\langle \kappa\tilde\phi_i\tilde\psi_j\rangle:=\sum_{k=1}^n\kappa_k\tilde\phi_i(\kappa_k)\tilde\psi_j(\kappa_k).
\end{equation}
The elements in the eigenmatrices are then given by
\begin{align*}
\tilde\phi_k(t,\kappa)&=\frac{e^{\frac{1}{2}\theta(t,\kappa)}}{\sqrt{D_k(t)D_{k-1}(t)}}\left|
\begin{matrix}
c_{1,1}(t) &\cdots &c_{1,k-1}(t)&\tilde\phi_1^0(\kappa)\\
\vdots &\ddots &\vdots &\vdots \\
c_{k,1}(t) &\cdots &c_{k,k-1}(t) &\tilde\phi_k^0(\kappa)
\end{matrix}\right|,\\[2.0ex]
\tilde\psi_k(t,\kappa)&=\frac{e^{\frac{1}{2}\theta(t,\kappa)}}{\sqrt{D_k(t)D_{k-1}(t)}}\left|
\begin{matrix}
c_{1,1}(t) &\cdots &c_{1,k-1}(t)\\
\vdots &\ddots &\vdots\\
c_{k,1}(t) &\cdots &c_{k,k-1}(t)\\
\tilde\psi_1^0(\kappa)&\cdots &\tilde\psi_k^0(\kappa)
\end{matrix}\right|,
\end{align*}
where  $\tilde\phi_k^0(\kappa)=\tilde\phi_k(0,\kappa),~ \tilde\psi_k^0(\kappa)=\tilde\psi(0,\kappa)$, and $D_k(t)=\text{det}(c_{a,b})_{1\le a,b\le k}$ with $c_{a,b}(t)$ given by
\[
c_{a,b}(t)=\langle\tilde\phi_a^0\tilde\psi_b^0\cdot e^{\theta}\rangle.
\]
Here $\theta(t,\kappa)=\kappa t$ which can be extended to $\theta(\t,\kappa)=\sum_{m=1}^\infty\kappa^m t_m$ for the hierarchy.

For the f-KT equation (non-normalized form), the eigenmatrix $\Phi$ can be written in the form,
\[
\tilde\phi_k(t,\kappa)=\frac{\tilde\phi_1^0(\kappa)e^{\frac{1}{2}\theta(t,\kappa)}}{\sqrt{D_k(t)D_{k-1}(t)}}
\left|
\begin{matrix}
\hat c_{1,1}(t) &\cdots &\hat c_{1,k-1}(t)&1\\
\hat c_{2,1}(t)&\cdots &\hat c_{2,k-1}(t)&\kappa\\
\vdots &\ddots &\vdots &\vdots \\
\hat c_{k,1}(t) &\cdots &\hat c_{k,k-1}(t) &\kappa^{k-1}
\end{matrix}\right|\qquad \text{with}\quad \hat c_{i,j}=\langle \kappa^{i-1}\tilde\phi_1^0\tilde\psi_j^0e^{\theta}\rangle.
\]
Here we have assumed the initial matrix $L^0$ to be 
\[
L^0=\begin{pmatrix}
\tilde a^0_{1,1} & 1 & 0 &\cdots &0\\
\tilde a^0_{2,1}&\tilde  a^0_{2,2} &1&\cdots &0\\
\vdots & \vdots &\ddots & \ddots &\vdots\\
\tilde a^0_{M-1,1}&\tilde a^0_{M-1,2}&\cdots&\ddots&1\\
\tilde a^0_{M,1} &\tilde a^0_{M,2}&\cdots &\cdots & \tilde a^0_{M,M}
\end{pmatrix},
\]
that is, $a^0_{i,i+1}=1$ for all $i=1,\ldots,M-1$.  Then the determinant $D_k(t)$ can be expressed by
\[
D_k(t)=\text{det}\left(\langle \kappa^{a-1}\tilde\phi_a^0\tilde\psi_b e^{\theta}\rangle\right)_{1\le a,b\le k}.
\]
Since $\theta=\kappa t$, we note that $D_k(t)$ can be given by the Wronskian,
\[
D_k(t)=\text{Wr}(f_1(t), f_2(t),\ldots,f_k(t))\qquad\text{with}\quad f_j(t)=\langle \tilde\phi^0_1\tilde\psi_j^0 e^\theta\rangle.
\]
That is, $D_k(t)$ is nothing but the $\tau_k$-function in \eqref{eq:tau1}, and we now use
the multiple time $\t=(t_1,t_2,\ldots)$ to study the f-KT hierarchy.

Now we note that $\tilde\phi_k(\t,\kappa)$ can be expresses as
\begin{align*}
\tilde \phi_k(\t,\kappa)&=\frac{\kappa^{k-1}\tilde\phi_1^0(\kappa) e^{\frac{1}{2}\theta(\t,\kappa)}}{\tau_k(\t)\tau_{k-1}(\t)}\left|
\begin{matrix}
f_1(\t) &\cdots &f_{k-1}(\t) &\frac{1}{\kappa^{k-1}}\\
\partial f_1(\t)&\cdots &\partial f_{k-1}(\t)&\frac{1}{\kappa^{k-2}}\\
\vdots&\ddots &\vdots &\vdots\\
\partial^{k-1}f_1(\t) &\cdots &\partial^{k-1}f_{k-1}(\t) &1
\end{matrix}\right|\\[2.0ex]
&=\frac{\kappa^{k-1}\tilde\phi_1^0(\kappa) e^{\frac{1}{2}\theta(\t,\kappa)}}{\tau_k(\t)\tau_{k-1}(\t)}\left|
\begin{matrix}
f_1(\t)-\frac{1}{\kappa}\partial f_1(\t) &\cdots &f_{k-1}(\t)-\frac{1}{\kappa}\partial f_{k-1}(\t) \\
\partial f_1(\t)-\frac{1}{\kappa}\partial^2 f_1(\t)&\cdots &\partial f_{k-1}(\t)-\frac{1}{\kappa}\partial^2f_{k-1}(\t)\\
\vdots&\ddots &\vdots \\
\partial^{k-2}f_1(\t)-\frac{1}{\kappa}\partial^{k-1}f_1(\t)&\cdots & \partial^{k-2}f_{k-1}(\t)-\frac{1}{\kappa}\partial^{k-1}f_{k-1}(\t)
\end{matrix}\right|,
\end{align*}
where $\partial^j f_i(\t)=\frac{\partial^j}{\partial t_1^j}f_i(\t)$.
Noting $\exp(-\sum_i\frac{k^i}{i\kappa^i})=1-\frac{k}{\kappa}$, we find
\[
f_i(\t)-\frac{1}{\kappa}\partial f_i(\t)=f\left(\t-\left[\frac{1}{\kappa}\right]\right),
\]
where 
\[
\t-\left[\frac{1}{\kappa}\right]=\left(t_1-\frac{1}{\kappa},~ t_2-\frac{1}{2\kappa^2},~ t_3-\frac{1}{3\kappa^3},~\ldots,\right).
\]
Then we have
\begin{align*}
\tilde\phi_k(\t,\kappa)&=\frac{\kappa^{k-1}\tilde\phi_1^0(\kappa) e^{\frac{1}{2}\theta(\t,\kappa)}}{\tau_k(\t)\tau_{k-1}(\t)}\tau_{k-1}\left(\t-\left[\frac{1}{\kappa}\right]\right)\\
&=\frac{\kappa^{k-1}\tilde\phi_1^0(\kappa) e^{\frac{1}{2}\theta(\t,\kappa)}}{\tau_k(\t)\tau_{k-1}(\t)}
e^{-\sum \frac{1}{n\kappa^n}\partial_n}\tau_{k-1}(\t)\\
&=\frac{\kappa^{k-1}\tilde\phi_1^0(\kappa) e^{\frac{1}{2}\theta(\t,\kappa)}}{\tau_k(\t)\tau_{k-1}(\t)}
\sum_{j=0}^{k-1}\frac{1}{\kappa^n}p_j(-\tilde\partial)\tau_{k-1}(\t).
\end{align*}
Here we have used the relation $p_j(-\tilde\partial)\tau_{k-1}(\t)=0$ for $j\ge k$. 

We also write the eigenfunction $\tilde\psi_k$ in the form,
\[
\tilde\psi_j(\t,\kappa)=\frac{e^{\frac{1}{2}\theta(\t,\kappa)}}{\sqrt{\tau_j(\t)\tau_{j-1}(\t)}}
\left|\begin{matrix}
f_1(\t) &\cdots & f_j(\t)\\
\partial_1 f_1(\t)&\cdots& \partial_1 f_j(\t)\\
\vdots & \ddots &\vdots \\
\partial_1^{j-2}f_1(\t)&\cdots &\partial_1^{j-2}f_j(\t)\\
\tilde \psi_1^0(\kappa)&\cdots &\tilde\psi_j^0(\kappa)
\end{matrix}\right|.
\]
In the computation of $\langle \kappa\tilde\phi_i\tilde\psi_j\rangle$, we have
\[
\langle\kappa\tilde\phi_i\tilde\psi_k^0e^{\frac{1}{2}\theta}\rangle=\frac{1}{\sqrt{\tau_i(\t)\tau_{i-1}(\t)}}\sum_{n=0}p_n(-\tilde\partial)\tau_{i-1}(\t)\langle\kappa^{i-n}\tilde\phi_i^0\tilde\psi_k^0e^{\theta}\rangle,
\]
where $\langle\kappa^{i-n}\tilde\phi_i^0\tilde\psi_k^0e^{\theta}\rangle=\partial_1^{i-n}f_k(\t)$.
We also note that
\[
\left|\begin{matrix}
f_1(\t) &\cdots & f_j(\t)\\
\partial_1 f_1(\t)&\cdots& \partial_1 f_j(\t)\\
\vdots & \ddots &\vdots \\
\partial_1^{j-2}f_1(\t)&\cdots &\partial_1^{j-2}f_j(\t)\\
\partial_1^{i-n}f_1(\t)&\cdots &\partial_1^{i-n}f_j(\t)
\end{matrix}\right|=p_{i-M-j+1}(\tilde\partial)\tau_j(\t).
\]
Thus, we have
\[
\tilde a_{i,j}=\langle \kappa\tilde\phi_i\tilde\psi_j\rangle=\frac{1}{\sqrt{\tau_i(\t)\tau_j(\t)\tau_{i-1}(\t)\tau_{j-1}(\t)}}\sum_{n=0}^{i-1}p_n(-\tilde\partial)\tau_{i-1}(\t)\times p_{i-M-j+1}(\tilde\partial)\tau_j(\t).
\]
Using the identity,
\[
p_n(\tilde D)\tau_\alpha\circ\tau_\beta=\sum_{i+j=n}p_i(\tilde\partial)\tau_\alpha\times p_j(-\tilde\partial)\tau_{\beta},
\]
the element $\tilde a_{i,j}$ of the Lax matrix is expressed by
\begin{equation}\label{Taij-tau}
\tilde a_{i,j}=\frac{1}{\sqrt{\tau_i(\t)\tau_j(\t)\tau_{i-1}(\t)\tau_{j-1}(\t)}}p_{i-j+1}(\tilde D)\tau_j\circ\tau_{i-1}.
\end{equation}

The Lax matrix $L$ of the f-KT equation is given by
\[
L=H\tilde L H^{-1}\qquad\text{with}\qquad H=\text{diag}(1,\tilde a_{1,2}, \tilde a_{1,2}a_{2,3},\ldots,a_{1,2}\cdots a_{M-1,M}).
\]
Each entry $a_{i,j}$ is then expressed by 
\[
a_{i,j}=\tilde a_{i,j} \tilde a_{j,j+1}\cdots \tilde a_{i-1,i}\qquad\text{for}\quad i>j.
\]
Using \eqref{Taij-tau}, we first note
\[
\tilde a_{i,i+1}=\frac{\sqrt{\tau_{i+1}(\t)\tau_{i-1}(\t)}}{\tau_i(\t)}.
\]
Then we obtain
\begin{equation}\label{Aij}
 a_{i,j}(\t)=\frac{p_{i-j+1}(\tilde D)\tau_j\circ\tau_{i-1}}{\tau_j\tau_{i-1}}\qquad\text{for}\quad 1\le j\le i\le M.
\end{equation}

\section{The $Q$-Schur functions}
The $Q$-Schur functions are defined as follows. First define the elementary Schur functions on the \emph{odd} times, $\t_B=(t_1,t_3,t_5,\ldots)$, by
\[
e^{\hat\theta(\t_B,k)}=\sum_{r=0}^\infty q_r(\t_B)k^r\qquad\text{with}\quad \hat\theta(\t_B,k)=\sum_{n=\text{odd}}t_nk^n.
\]
Note here that we take $t_n$ in the exponent $\hat\theta(\t_B,k)$ instead of $2t_n$ which is normally used in \cite{Yo:89, JP:91}, and
then the function $q_r(\t_B)$ is given by
\[
q_r(\t_B)=p_r(t_1,0,t_3,0,t_5,0,\ldots).
\]
With those $q_r(\t_B)$ functions, define 
\[
Q_{j,i}(\t_B)=q_j(\t_B)q_i(\t_B)+2\sum_{l=1}^i(-1)^lq_{j+l}(\t_B)q_{i-l}(\t_B),\quad\text{with}\quad Q_{k,0}=q_k.
\]
Note that this function is skew-symmetric $Q_{i,j}=-Q_{j,i}$ (c.f. \cite{JP:91}).
Then the $Q$-Schur function for a strict partition, $\lambda=(\lambda_1>\lambda_2>\cdots>\lambda_{2n}\ge 0)$,
is defined by the Pfaffian,
\begin{align*}
Q_{\lambda}(\t_B):&=\text{Pf}(Q_{\lambda_j,\lambda_i}(\t_B))_{1\le i<j\le 2n}\\[1.0ex]
&=\sum \text{sgn}(i_1,i_2,\ldots,i_{2n})\,Q_{\lambda_{i_1},\lambda_{i_2}}(\t_B)\,Q_{\lambda_{i_3},\lambda_{i_4}}(\t_B)\,\cdots\,
Q_{\lambda_{i_{2n-1}},\lambda_{i_{2n}}}(\t_B).
\end{align*}
For example, if $\lambda=(3,2,1,0)$, then
\[
Q_{3,2,1,0}=Q_{3,2}Q_{1,0}-Q_{3,1}Q_{2,0}+Q_{3,0}Q_{2,1}.
\]
With the Young diagrams, this is also expressed as
\[
Q_{\young[3,2,1][3]}=Q_{\young[3,2][3]}Q_{\young[1][3]}-Q_{\young[3,1][3]}Q_{\young[2][3]}+Q_{\young[3][3]}Q_{\young[2,1][3]}.
\]

\raggedright

\bibliographystyle{amsalpha}
\bibliography{bibliography}
\label{sec:biblio}

\end{document}